\newtheorem{theorem}{Theorem}
\newtheorem{proposition}[theorem]{Proposition}
\newtheorem{observation}[theorem]{Observation}
\newtheorem{lemma}[theorem]{Lemma}
\newtheorem{corollary}[theorem]{Corollary}
\newtheorem*{claim*}{Claim}%[theorem]
\newtheorem*{tswappable}{Theorem~\ref{t:swappable}}
\theoremstyle{definition}
\newtheorem*{problem*}{Problem}
\theoremstyle{remark}
\newcommand{\ProofEndBox}{{\ifhmode\unskip\nobreak\hfil\penalty50 \else
          \leavevmode\fi\quad\vadjust{}\nobreak\hfill$\Box$
            \finalhyphendemerits=0 \par}}
\newcommand{\R}{{\mathbb{R}}}
\newcommand{\D}{{\mathcal{D}}}
\newcommand{\Z}{{\mathbb{Z}}}
\newcommand\RR{\mathcal{R}}
\newcommand{\Imin}{\ensuremath{\rm{I}^-}\xspace}
\newcommand{\Ipl}{\ensuremath{\rm{I}^+}\xspace}
\newcommand{\I}{\ensuremath{\rm{I}}\xspace}
\newcommand{\II}{\ensuremath{\rm{II}}\xspace}
\newcommand{\III}{\ensuremath{\rm{III}}\xspace}
\newcommand{\IImin}{\ensuremath{\rm{II}^-}\xspace}
\newcommand{\IIpl}{{\ensuremath{\rm{II}^+}}\xspace}
\newcommand{\epsvec}{\protect\overrightarrow{\varepsilon}}
\newcommand{\phivec}{\protect\overrightarrow{\varphi}}
\newcommand{\sctt}[1]{{\sc \texttt{#1}}}
\newcounter{sideremark}
\DeclareMathOperator{\interior}{int}
\DeclareMathOperator{\crr}{cr}
\DeclareMathOperator{\defe}{def}
\title{Parameterized complexity of untangling knots\thanks{Part of this work has
been done when C.~L.-D. visited Charles University, whose stay was partially
supported by the GA\v{C}R grant 19-04113Y.  M.~T. is supported by the GA\v{C}R grant 19-04113Y.}}
\date{}
\author[1]{Cl\'ement Legrand-Duchesne}
\author[2]{Ashutosh Rai}
\author[3]{Martin Tancer}
\affil[1]{\small Univ. Bordeaux, CNRS, Bordeaux INP, LaBRI, UMR 5800, F-33400 Talence, France}
\affil[2]{\small Department of Mathematics, IIT Delhi, Hauz Khas, New Delhi,
110016, India}
\affil[3]{\small Department of Applied Mathematics, Faculty of Mathematics and
Physics, Charles University, Malostransk\'{e} n\'{a}m.
25, 118~00~~Praha~1, Czech Republic}
\begin{document}
\maketitle
\begin{abstract}
  Deciding whether a diagram of a knot can be untangled with a given number of
  moves (as a part of the input) is known to be NP-complete. In this paper we
  determine the parameterized complexity of this problem with respect to a
  natural parameter called defect. Roughly speaking, it measures the efficiency
  of the moves used in the shortest untangling sequence of Reidemeister moves.
  
  We show that the \IImin moves in a shortest untangling sequence
  can be essentially performed greedily. Using that, we show that this problem belongs to W[P]
  when parameterized by the defect. We also show that this problem is W[P]-hard
  by a reduction from \sctt{Minimum axiom set}.
\end{abstract}

\section{Introduction}
A classical and extensively studied question in algorithmic knot theory is to
determine whether a given diagram of a knot is actually a diagram of an
unknot. This question is known as the \emph{unknot recognition problem}.  The
first algorithm for this problem was given by Haken~\cite{haken61}.  Currently,
it is known that the unknot recognition problem belongs to NP$\, \cap\, $co-NP
but no polynomial time algorithm is known. See~\cite{hass-lagarias-pippenger99}
for the NP-membership and~\cite{lackenby16arxiv} for co-NP-membership
(co-NP-membership modulo Generalized Riemann Hypothesis was previously
established in~\cite{kuperberg14}). In addition, a quasi-polynomial time
algorithm for unknot recognition has been recently announced by
Lackenby~\cite{lackenby21_announced}.

One possible path for attacking the unknot recognition problem is via
Reidemeister moves (see~\cref{f:rm}): if $D$ is a diagram of an unknot, then $D$
can be untangled to a diagram $U$ with no crossing by a finite number of
Reidemeister moves.  In addition, Lackenby~\cite{lackenby15} provided a
polynomial bound (in the number of crossings of $D$) on the required number of
Reidemeister moves. This is an alternative way to show that the unknot
recognition problem belongs to NP, because it is sufficient to guess the
required Reidemeister moves for unknotting.

However, if we slightly change our viewpoint, de~Mesmay, Rieck, Sedgwick, and
Tancer~\cite{demesmay-rieck-sedgwick-tancer21} showed that it is NP-hard to
count the number of required Reidemeister moves exactly. (An anologous result
for links has been shown to be NP-hard slightly earlier by~Koenig and
Tsvietkova~\cite{koenig-tsvietkova21}.) More precisely, it is shown
in~\cite{demesmay-rieck-sedgwick-tancer21} that given a digram $D$ and a
parameter $k$ as input, it is NP-hard to decide whether $D$ can be untangled
using at most $k$ Reidemeiser moves. For more background on unknotting and
unlinking problems, we also refer to Lackenby's survey~\cite{lackenby17}.

The main aim of this paper is to extend the line of research started
in~\cite{demesmay-rieck-sedgwick-tancer21} by determining the parameterized
complexity of untangling knots via Reidemeister moves. On the one hand, it is
easy to see that if we consider parameterization by the number of Reidemeister
moves, then the problem is in FPT (class of \emph{fixed parameter tractable}
problems).  This happens because of a somewhat trivial reason: if a diagram $D$
can be untangled with at most $k$ moves, then $D$ contains at most $2k$
crossings, thus we can assume that the size of $D$ is (polynomially) bounded by
$k$.  On the other hand, we also consider parameterization with an arguably much
more natural parameter called the defect (used
in~\cite{demesmay-rieck-sedgwick-tancer21}). This parameterization is also
relevant from the point of view of \emph{above guarantee parameterization}
introduced by Mahajan and Raman~\cite{MahajanR99}. Here we show that the problem
is W[P]-complete with respect to this parameter. This is the core of the paper.

In order to state our results more precisely, we need a few preliminaries on
diagrams and Reidemeister moves. For purposes of this part of the introduction,
we also assume that the reader is at least briefly familiar with complexity
classes FPT and W[P]. Otherwise we refer to Subsection~\ref{ss:parameterized_complexity} 
where we briefly overview these classes, and
to the references in this subsection.

\paragraph{Diagrams and Reidemeister moves.}
A \emph{diagram} of a knot is a piecewise linear map $D\colon S^1 \to \R^2$ in
general position; for such a map, every point in $\R^2$ has at most two
preimages, and there are finitely many points in $\R^2$ with exactly two
preimages (called \emph{crossings}). Locally at crossing two arcs cross each
other transversely, and the diagram contains the information of which arc passes
`over' and which `under'. This we usually depict by interrupting the arc that
passes under.  (A diagram usually arises as a composition of a (piecewise
linear) knot $\kappa \colon S^1 \to \R^3$ and a generic projection $\pi \colon
\R^3 \to \R^2$ which also induces `over' and `under' information.) We usually
identify a diagram $D$ with its image in $\R^2$ together with the information
about underpasses/overpasses at crossings.  Diagrams are always considered up-to
isotopy. The unique diagram without crossings is denoted $U$ (untangled).
See~\cref{f:d_example} for an example of a diagram.
\begin{figure}
\begin{center}
\includegraphics{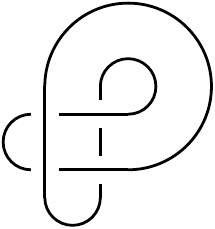}
  \caption{An example of a diagram.}
  \label{f:d_example}
\end{center}
\end{figure}

Let $D$ be a diagram of a knot. \emph{Reidemeister moves} are local modifications of
the diagram depicted at~\cref{f:rm}. We distinguish Reidemeister moves of
types $\I$, $\II$, and $\III$ as depicted in the figure. In addition, for
types $\I$ and $\II$, we distinguish whether the moves remove crossings (types
$\Imin$ and $\IImin$) or whether they introduce new crossings (types $\Ipl$ and
$\IIpl$). 

\begin{figure}
\begin{center}
  \includegraphics{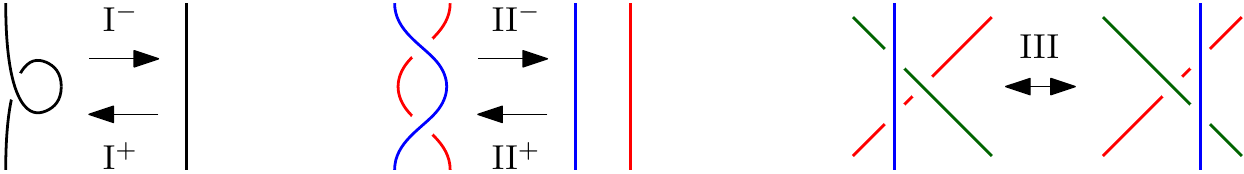}
\end{center}
\caption{Reidemeister moves}
  \label{f:rm}
\end{figure}

A diagram $D$ is a \emph{diagram of an unknot} if it can be transformed to the
untangled diagram $U$ by a finite sequence of Reidemeister moves. (This is well
known to be equivalent to stating that the lift of the diagram to $\R^3$,
keeping the underpasses/overpasses is ambient isotopic to the unknot, that is
standardly embedded $S^1$ in $\R^3$.)   The diagram on~\cref{f:d_example}
is a diagram of an unknot. 
\bigskip

Now we discuss different parameterizations of untangling (un)knots via
Reidemeister moves in more detail.

\paragraph{Parameterization via number of Reidemeister moves.}
In the first case, we consider the following parameterized
problem.

\begin{problem*}[\sctt{Unknotting via number of moves}] \

  \begin{tabular}{ll}
  \sctt{Input} & A diagram $D$ of a knot.\\
  \sctt{Parameter} & $k$.\\
  \sctt{Question} & Can $D$ be untangled to an unknot using at most $k$
  Reidemeister moves?\\
\end{tabular}
\end{problem*}

\begin{observation}
  \sctt{Unknotting via number of moves} belongs to FPT.
\end{observation}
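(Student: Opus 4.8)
The plan is to give the "somewhat trivial" argument already hinted at in the introduction: reduce the problem to a bounded-size instance and then decide it by brute force.

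First I would observe that each Reidemeister move of type $\Imin$ or $\IImin$ removes one or two crossings, each move of type $\Ipl$ or $\IIpl$ adds one or two crossings, and each move of type $\III$ leaves the number of crossings unchanged. In particular, a single Reidemeister move changes the number of crossings by at most $2$. Hence if a diagram $D$ with $c(D)$ crossings can be untangled using at most $k$ moves, then, since the final diagram $U$ has $0$ crossings and the very first move already must be applicable to $D$, we get $c(D)\le 2k$: otherwise even performing $k$ crossing-removing moves starting from $D$ could not reach a diagram with no crossings. (A clean way to phrase this: along any sequence of $k$ moves the number of crossings drops by at most $2k$ in total, so a diagram with more than $2k$ crossings cannot be brought to $U$ in $k$ moves.)

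Next I would use the standard fact that a piecewise linear diagram with at most $2k$ crossings can be encoded, up to isotopy, by a combinatorial object of size polynomial in $k$ (for instance a Gauss code or a $4$-valent plane graph with over/under decorations together with a planar embedding). Thus, after checking in polynomial time that $c(D)\le 2k$ (rejecting immediately if not), the relevant part of the input has size bounded by a function of $k$ alone. The number of diagrams reachable from $D$ by a single Reidemeister move is then also bounded by a function of $k$ (each move is localized at a bounded piece of the diagram, and moves of type $\Ipl$/$\IIpl$ can be restricted so that they introduce crossings only in a bounded number of combinatorially distinct ways while still allowing an optimal untangling — or one simply caps the total crossing count at $2k$ throughout the search, which is valid by the same counting argument applied to any suffix of the sequence). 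Therefore one can perform a breadth-first search of depth $k$ in the graph whose vertices are diagrams with at most $2k$ crossings and whose edges are Reidemeister moves, and accept iff $U$ is reached within $k$ steps; the whole search runs in time $f(k)\cdot |D|^{O(1)}$, which is exactly the definition of FPT.

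The only mildly delicate point — and the one I would be most careful about — is arguing that the search space stays bounded despite the $\Ipl$ and $\IIpl$ moves, which a priori could add crossings without limit. This is handled by the crossing-budget argument: in a shortest untangling sequence of length at most $k$, after $i$ moves the current diagram must still be untanglable in at most $k-i$ further moves, so it has at most $2(k-i)\le 2k$ crossings; hence we never need to consider diagrams with more than $2k$ crossings, and up to isotopy there are only finitely many of those (a function of $k$). Everything else is routine bookkeeping, so I would keep the write-up short.
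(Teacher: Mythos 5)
Your proposal is correct and follows essentially the same strategy as the paper: reject immediately if the diagram has more than $2k$ crossings (since each move removes at most two), then brute-force search through all bounded-length move sequences on the bounded-size remaining instance. The paper enumerates all sequences of at most $k$ moves directly (allowing intermediate diagrams up to $n+2k \le 4k$ crossings) rather than doing a BFS with an explicit $2k$-crossing cap, but this is only a minor implementation difference and your pruning argument is valid.
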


\begin{proof}[Sketch]
  Each Reidemeister move removes at most $2$ crossings. Thus, if $D$ contains at
  least $2k + 1$ crossings, then we immediately answer \sctt{No}. This can be
  determined in time $O(|D|)$, where $|D|$ stands for the size of the encoding
  of $D$.
  
  It remains to resolve the case when $D$ contains at most $2k$ crossings. In
  this case we perform the naive algorithm trying all possible sequences of
  Reidemeister moves. We will see that if $D$ has $n$ crossings, then there are
  at most $O(n^2)$ feasible Reidemeister moves---see~\cref{l:complexity_moves}
  (a more naive analysis yields $O(n^3)$). Each of them can be performed in time
  $O(n \log n)$---see~\cref{l:complexity_moves} again. In addition, any diagram
  obtained after performing a Reidemeister move on $D$ has at most $n+2$
  crossings. This gives that all feasible sequences of at most $k$ Reidemeister
  moves starting with $D$ can be constructed and performed in time
  $O(k^{2k+2}\log k)$. Indeed, each such sequence can be constructed and
  performed in time $O(k^2 \log k)$ and there are at most $O(k^{2k})$ such
  sequences. If any of them yields a diagram without crossings, then we answer
  \sctt{Yes}, otherwise we answer \sctt{No}.

  Altogether, \sctt{Unknotting via number of moves} can be solved in time
  $O(k^{2k+2} \log k) + O(|D|)$, therefore, it belongs to FPT. (Of course,
  in the sketch above, we didn't really need to know the concrete bounds on the
  number of feasible Reidemeister moves and on the time for construction of all
  feasible sequences. Any computable function would have sufficed.)
\end{proof}

\paragraph{Parameterization via defect.}
As we see from the sketch above, 
parameterization in the number of Reidemeister
moves has the obvious disadvantage that once we fix $k$, the problem becomes trivial for
arbitrarily large inputs (they are obviously a \sctt{NO} instance). We also see that
if we have a diagram $D$ with $n$ crossings and want to minimize the number of
Reidemeister moves to untangle $D$, presumably the most efficient way is to
remove two crossings in each step, thus requiring at least $n/2$ steps. This
motivates the following definition of the notion of defect.

Given a diagram $D$, by an \emph{untangling} of $D$ we mean a sequence
$\D = (D_0, \dots, D_{\ell})$ such that $D = D_0$; $D_{i+1}$ is obtained from
$D_i$ by a Reidemeister move; and $D_\ell = U$ is the diagram with no
crossings. Then we define the \emph{defect} of an untangling $\D$ as above as 
\[
  \defe(\D) := 2\ell - n
\]
where $n$ is the number of crossings in $D$. Note that $\ell$ is just the
number of Reidemeister moves in the untangling. It is easy to see that
$\defe(\D) \geq 0$ and $\defe(\D) = 0$ if and only if all moves in the
untangling are $\IImin$ moves. Therefore, $\defe(\D)$ in some sense measures
the number of `extra' moves
in the untangling beyond the trivial bound. (Perhaps, a more accurate
expression for this interpretation would be $\ell - n/2 = \frac12\defe(\D)$ but
this is a minor detail and it is more convenient to work with integers.) In
addition, it is possible to get diagrams with arbitrarily large number of
crossings but with defect bounded by a constant (even for defect $0$ this is
possible). The defect also plays a key role in the reduction
in~\cite{demesmay-rieck-sedgwick-tancer21} which suggests that the hardness of
the untangling really depends on the defect.

As we have seen above, asking the question whether a diagram can be untangled
with defect at most $k$ is same as asking if it can be untangled in $k/2$ moves
above the trivial, but tight lower bound of $n/2$. This fits perfectly in the
framework of above guarantee parameterization, which was introduced by Mahajan
and Raman~\cite{MahajanR99} for \sctt{Max-Sat} and \sctt{Max-Cut} problems. In
this framework, when there is a trivial lower bound for the solution in terms
of the size of the input, parameterizing by solution size trivially gives an
FPT algorithm by either giving a trivial answer if the input is large, or
bounding size of the input by a function of the solution size. Hence, for those
problems, it makes more sense to parameterize above a tight lower bound. The
paradigm of above guarantee parameterization has been very successful in the
field of parameterized complexity and many results have been
obtained~\cite{AlonGKSY11,CrowstonFGJKRRTY14,CrowstonJMPRS13,GutinKLM11,GutinIMY12,GutinP16,MahajanRS09}.

For these reasons, we find the defect to be a more natural parameter
than the number of Reidemeister moves. Therefore, we consider the following
problem.

\begin{problem*}[\sctt{Unknotting via defect}] \

  \begin{tabular}{ll}
  \sctt{Input} & A diagram $D$ of a knot.\\
  \sctt{Parameter} & $k$.\\
  \sctt{Question} & Can $D$ be untangled with defect at most $k$?\\
\end{tabular}
  \end{problem*}

\begin{theorem}
\label{t:main}
  The problem \sctt{Unknotting via defect} is W[P]-complete.
\end{theorem}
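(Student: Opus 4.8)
The plan is to establish the two directions separately: membership in W[P], and W[P]-hardness.

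\emph{Membership.} The key initial observation is that every Reidemeister move other than a \IImin move contributes at least $1$ to the defect: if $c_i$ denotes the (signed) number of crossings removed by the $i$-th move of an untangling, then its defect equals $\sum_i(2-c_i)$, and here $2-c_i=0$ for a \IImin move, $2-c_i=1$ for a \Imin move, $2-c_i=2$ for a \III move, and $2-c_i\ge 3$ for the crossing-introducing moves. Hence any untangling $\D$ of defect at most $k$ uses at most $k$ moves that are not of type \IImin; call these the \emph{special} moves. The same accounting shows that the number of crossings never exceeds $n+k$ along such an untangling, so by the analysis behind \cref{l:complexity_moves} each intermediate diagram admits only polynomially many feasible moves, each of which can be located using only logarithmically many bits. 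The W[P] machine guesses, nondeterministically, only the ordered list of the at most $k$ special moves together with their locations, which costs $O(k\log(n+k))$ nondeterministic bits, well within the $O(f(k)\log n)$ budget of W[P]. It then applies the special moves in the prescribed order and, before each special move and once more at the end, performs \IImin moves greedily according to a fixed canonical rule until none remains available; it accepts iff this produces $U$ without the total defect exceeding $k$. Since the deterministic work takes $\mathrm{poly}(n+k)$ time, it only remains to argue correctness.

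The `only if' direction of correctness is immediate; the `if' direction is precisely the content of the structural theorem we prove, \cref{t:swappable}: if $D$ admits any untangling of defect at most $k$, then it admits one in which the \IImin moves between consecutive special moves are performed greedily, so the guess that reads off the special moves of that untangling is accepted. This structural theorem is the technical core of the membership part, and proving it is the main obstacle. The difficulty is that a \IImin move, if performed too early, may destroy a configuration needed by a later special move, and that two simultaneously available \IImin moves may overlap, so that performing one forbids the other. We plan to handle this by an exchange/swapping argument on untangling sequences: a \IImin move can always be commuted backwards past the moves that precede it (after possibly re-routing those moves, without lengthening the sequence) until it lies directly after the previous special move, while two interfering \IImin moves are reconciled by a local confluence lemma. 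Iterating normalizes every defect-$\le k$ untangling into the greedy shape assumed by the algorithm, and carrying out the bookkeeping of these local modifications carefully is where the real work lies.

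\emph{Hardness.} We give a parameterized reduction from \sctt{Minimum axiom set}, which is W[P]-complete. From an instance consisting of a set of sentences, a collection of inference rules, and a budget $k$, we build a diagram $D$ of the unknot together with a parameter $k'=O(k)$ such that $D$ untangles with defect at most $k'$ if and only if some $k$ sentences can be chosen as axioms from which every sentence is derivable. The diagram is assembled from two kinds of gadgets: a \emph{sentence gadget} for each sentence, designed so that it can be dissolved by a cascade of \IImin moves only after it has been `activated', and an \emph{inference gadget} for each rule $(A,s)$ that relays such a cascade from the gadgets of the sentences in $A$ to the gadget of $s$. The only way to activate a sentence gadget on its own is to invest a bounded amount of defect in it, via a type-\I or a crossing-introducing move, which models selecting that sentence as an axiom; every other activation is free and corresponds to firing an inference rule. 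Thus an untangling of defect at most $k'$ corresponds to a choice of at most $k$ axioms that derives all sentences, and conversely; and since \sctt{Minimum axiom set} is always a yes-instance once the budget reaches the number of sentences, $D$ is indeed a diagram of the unknot. The elementary crossing gadgets and the planarity bookkeeping can largely be imported from the NP-hardness reduction of \cite{demesmay-rieck-sedgwick-tancer21}.

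The main obstacle in the hardness proof is the backward direction: one must show that \emph{every} untangling of $D$ with defect at most $k'$ can be normalized to one that genuinely `selects axioms and then cascades', ruling out sequences of cheap moves that circumvent the gadget logic. This requires designing the gadgets rigidly enough that the normal-form results (again \cref{t:swappable} and its companions) force any efficient untangling into the intended form, from which an axiom set of size at most $k$ can be read off. Verifying this, together with checking that $D$ is a genuine diagram of the unknot and that the reduction runs in polynomial time with $k'$ bounded by a function of $k$, is where the bulk of the effort in this part goes.
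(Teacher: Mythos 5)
Your overall architecture (W[P]-algorithm via guessing few ``special'' moves plus greedy \IImin elimination, and W[P]-hardness by reduction from \sctt{Minimum axiom set}) matches the paper's, and your defect-accounting $2-c_i$ is correct. But there is a genuine gap in the membership argument, and the hardness argument diverges from the paper in a way you have not substantiated.

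\textbf{Gap in membership.} You define the special moves as exactly the non-\IImin moves, bound their number by $k$, and then have the machine interleave them with ``\IImin moves performed greedily according to a fixed canonical rule.'' This is essentially the paper's \sctt{NaiveGreedy}, and the paper explicitly points out it is wrong: after a guessed \IIpl move, a feasible \IImin move is immediately created that undoes it, and a deterministic canonical greedy rule has no way to know it must skip that move. More generally, the ``only-if'' direction does not follow from a swapping statement for \emph{all} \IImin moves, because that statement is false: swapping a \IImin move to the front can destroy the later special moves' configurations when that \IImin move touches crossings that the special moves create or later rely on. The paper's fix is to nondeterministically guess (in addition to the special moves) a set $S$ of $O(k)$ ``special crossings'' (Lemma~\ref{l:step0}) and to run the greedy phase only on \IImin moves that avoid $S$; the other \IImin moves (those touching $S$) are also treated as special, and their number is separately bounded by $O(k)$. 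Theorem~\ref{t:swappable} is then proved only for moves greedy with respect to $S$, not arbitrary \IImin moves. Without introducing $S$ (or an equivalent mechanism to protect freshly-created and soon-to-be-needed crossings from the greedy phase), the normalization you appeal to does not exist, and the algorithm you describe can reject yes-instances.

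\textbf{Hardness: different, and underspecified.} Your forward direction (axiom set $\Rightarrow$ cascade of \IImin moves) matches the paper's Brunnian-gadget construction in spirit. But for the backward direction you propose to reuse the swapping/normal-form machinery to ``normalize'' any cheap untangling into one that reveals an axiom set. The paper does something different: it never normalizes the untangling. Instead it defines a nonnegative function $\mu$ on sentences from a per-crossing weight (Lemma~\ref{l:def_weights}, imported from~\cite{demesmay-rieck-sedgwick-tancer21}), applies a boosting lemma (Lemma~\ref{l:boosting}) reducing the global bound $\mu(S\cup\widehat S)\ge 2k$ to checking $\mu(\bar u)\ge 1$ for every set $\bar U$ that meets all axiom sets, and verifies that local condition via a ``close neighbors'' case analysis on the sentence and relation gadgets (Proposition~\ref{p:verify_boosting}). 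This lower-bound route bypasses any normalization of the untangling; if you want to pursue the normalization route you would need to supply a separate argument (the swapping theorem as stated only rearranges moves within a fixed defect budget and does not by itself extract an axiom set), whereas the paper's boosting-plus-close-neighbors argument is both more local and already worked out. You would also need the doubling trick (Observation~\ref{o:double}) or an analogue to handle parity in the parameter translation, and the preprocessing of the \sctt{Minimum axiom set} instance to ensure the gadgets are well-defined; neither appears in your sketch.
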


The proof of Theorem~\ref{t:main} consists of two main steps:
W[P]-membership and W[P]-hardness. Both of them are non-trivial. 

For W[P]-membership, roughly speaking, the
idea is to guess a small enough set of \emph{special} crossings on which we
perform all possible Reidemeister moves, while we remove other crossings in a greedy
fashion. In order to succeed with such an approach we will need some powerful
and flexible enough lemmas on changing the ordering of Reidemeister moves in some
untangling by swapping them. 
In Subsection~\ref{ss:algo_membership} we provide an algorithm for
W[P]-membership but we do not prove yet that it works. Then we start filling in
the details. In Section~\ref{s:diagrams}, we explain how we represent
diagrams as an input for the algorithm. In Section~\ref{s:r_moves} we explain
how we implement Reidemeister moves with respect to the aforementioned
representation of diagrams and also introduce some combinatorial notation
for them which will be useful in further considerations. We also explain the idea of
special crossings in detail and we provide a bound on the number of required
special crossings with respect to the defect. In Section~\ref{s:algo_correct}
we prove the correctness of our algorithm modulo a result on rearranging the
moves, whose proof is postponed to Section~\ref{s:rearrange}.

For W[P]-hardness, we combine some techniques that were quite recently used in showing
parameterized hardness of problems in computational
topology~\cite{burton-lewiner-paixao-spreer16, bauer-rathod-spreer19}, along with the
tools in~\cite{demesmay-rieck-sedgwick-tancer21} for lower bounding the defect.
Namely, we use a reduction from the \sctt{Minimum axiom set} problem, which
proved to be useful in~\cite{burton-lewiner-paixao-spreer16,
  bauer-rathod-spreer19}. Roughly speaking, from an instance $I$ of the
\sctt{Minimum axiom set} problem (which we have not defined yet), we need to build
a diagram which has a small defect if and only if $I$ admits a small set of
axioms. For the “if" part, we use properties of Brunnian rings to achieve our
goal. For the “only if" part, we need to lower bound the defect of our
construction. We use the tools from~\cite{demesmay-rieck-sedgwick-tancer21} to
show that the defect (of some subinstances) is at least $1$. Then we use very
simple but powerful boosting lemma (Lemma~\ref{l:boosting}) that shows that the
defect is actually high. We give the details of
the hardness proof in Section~\ref{s:hardness}. 

We conclude this part of introduction by proving a lemma on the properties of the
defect which we will use soon after.

Given a Reidemeister move $m$, let us define the \emph{weight} of this move $w(m)$
as 
\[
w(m) =  
\left\{
        \begin{array}{ll}
                        0  & \mbox{if $m$ is a $\IImin$ move};  \\
                        1  & \mbox{if $m$ is a $\Imin$ move};  \\
                        2  & \mbox{if $m$ is a $\III$ move};  \\
                        3  & \mbox{if $m$ is a $\Ipl$ move};  \\
                        4  & \mbox{if $m$ is a $\IIpl$ move}.  \\
         \end{array}
\right.
\]
\begin{lemma}
\label{l:defect}
Let $\D$ be an untangling of a diagram $D$. Then $\defe(\D)$ equals to the sum
of the weights of the Reidemeister moves in $\D$.
\end{lemma}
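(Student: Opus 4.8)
The plan is to track how a single Reidemeister move changes the quantity $2\ell - (\text{current number of crossings})$ and show it changes by exactly the weight of that move. More precisely, write $\D = (D_0, \dots, D_\ell)$ with $D_0 = D$ having $n$ crossings and $D_\ell = U$ having $0$ crossings, and let $n_i$ denote the number of crossings of $D_i$, so $n_0 = n$ and $n_\ell = 0$. For the $i$-th move $m_i$ (the one taking $D_{i-1}$ to $D_i$), I would check directly from the definition of the moves (Figure~\ref{f:rm}) that $n_i - n_{i-1} = -2$ if $m_i$ is $\IImin$, $-1$ if $m_i$ is $\Imin$, $0$ if $m_i$ is $\III$, $+1$ if $m_i$ is $\Ipl$, and $+2$ if $m_i$ is $\IIpl$. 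Comparing with the weight table, this says precisely that $w(m_i) = 2 - (n_i - n_{i-1}) \cdot (\text{something})$; in fact one sees immediately that $w(m_i) = 2 \cdot [\text{move introduces a pair}] + \dots$, but the clean uniform statement is simply $w(m_i) = (n_{i-1} - n_i) + c_i$ where $c_i$ is a correction that — after checking all five cases — works out so that the right bookkeeping identity is $w(m_i) = 2 - (n_i - n_{i-1})$ for type~$\II$ and type~$\III$ moves and $w(m_i)$ as tabulated for type~$\I$; rather than force a single formula, I would just verify case by case that in every case $w(m_i) = (n_{i-1} - n_i) + 2\cdot\mathbf{1}[m_i \in \{\III,\Ipl,\IIpl\}]$ is consistent — but honestly the cleanest route avoids guessing a closed form at all, as follows.

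Instead, I would argue by a direct telescoping/summation. Observe first, by inspection of the five move types, that each move $m_i$ satisfies $w(m_i) \geq n_{i-1} - n_i$, and more importantly that $w(m_i) - (n_{i-1} - n_i)$ equals $0$ for $\IImin$ and $\Imin$ moves and equals $2$ for $\III$, $\Ipl$, $\IIpl$ moves — wait, that is false for $\Imin$ since $w = 1$ and $n_{i-1}-n_i = 1$, giving $0$, good; for $\IImin$, $w=0$ and $n_{i-1}-n_i = 2$, giving $-2$, not $0$. So the correct uniform statement, which I would state and then verify in all five cases, is: $w(m_i) = 2 + (n_i - n_{i-1})$ for $\II$-type and $\III$-type moves, while for $\I$-type moves $w(m_i) = \tfrac{3}{2} + \tfrac12(n_i - n_{i-1})$. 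Since mixing these is ugly, the honest clean approach is the first one: just check in each of the five cases that
\[
  w(m_i) = \bigl(n_{i-1} - n_i\bigr) + 2\cdot\mathbf{1}\bigl[n_{i-1} - n_i \le 0\bigr]
\]
fails too. Let me not pretend: the right move is to verify case-by-case the single identity needed for telescoping, namely that $\sum_{i=1}^\ell w(m_i) = 2\ell - n$, by rewriting $2\ell - n = \sum_{i=1}^\ell \bigl(2 - (n_{i-1} - n_i)\bigr)$ (using $\sum (n_{i-1}-n_i) = n_0 - n_\ell = n$ and there are $\ell$ terms each contributing a $2$), and then checking term by term that $w(m_i) = 2 - (n_{i-1} - n_i)$. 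This last identity is exactly the weight table: $\IImin$ gives $n_{i-1}-n_i = 2$, so $2 - 2 = 0 = w$; $\Imin$ gives $1$, so $2 - 1 = 1 = w$; $\III$ gives $0$, so $2 - 0 = 2 = w$; $\Ipl$ gives $-1$, so $2 - (-1) = 3 = w$; $\IIpl$ gives $-2$, so $2 - (-2) = 4 = w$.

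So concretely I would: (1) introduce $n_i := |\text{crossings of } D_i|$ and record $n_0 = n$, $n_\ell = 0$; (2) state the per-move crossing change $\delta_i := n_{i-1} - n_i \in \{2,1,0,-1,-2\}$ according to the type of $m_i$, justified by direct inspection of Figure~\ref{f:rm}; (3) observe from the weight table that $w(m_i) = 2 - \delta_i$ in every one of the five cases; (4) conclude $\sum_{i=1}^\ell w(m_i) = \sum_{i=1}^\ell (2 - \delta_i) = 2\ell - \sum_{i=1}^\ell \delta_i = 2\ell - (n_0 - n_\ell) = 2\ell - n = \defe(\D)$, where the telescoping of $\sum \delta_i$ uses $n_\ell = 0$. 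There is no real obstacle here; the only thing requiring care is step (2), i.e., being sure one has correctly read off the crossing-count change for each of the five move types from the figure, and step (3) is then a one-line comparison with the definition of $w$. The argument is a short telescoping computation once those local facts are in hand.
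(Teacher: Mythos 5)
Your final argument is correct and rests on the same core identity as the paper's proof, namely that each move $m_i$ satisfies $w(m_i) = 2 - (n_{i-1} - n_i)$, where $n_j$ is the number of crossings in $D_j$; the paper packages this as an induction on the number of moves (showing $\defe(\D) - \defe(\D') = 2 + (n' - n) = w(m_1)$ and peeling off the first move), whereas you sum the identity directly and telescope $\sum_i (n_{i-1} - n_i) = n_0 - n_\ell = n$. These are the same computation, just unrolled. The one thing I would flag is presentational: the several abandoned reformulations in the middle of your write-up (the two ``wait, that is false'' detours) should be deleted before this is a proof; only the final four-step argument (define $n_i$, record $\delta_i$ per move type, verify $w(m_i) = 2 - \delta_i$ against the weight table, telescope) belongs on the page.
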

Because the proof of the lemma is short, we present it immediately.
\begin{proof}
We will give a proof by induction in the number of moves used in $\D$. There is
nothing to prove if there is no move in $\D$ (in particular, $D$ is already
untangled in this case).

Otherwise, let $\D = (D_0, \dots, D_{\ell})$ with $\ell \geq 1$. Let $\D' = (D_1,
\dots, D_{\ell})$ and let $m_1$ be the Reidemeister move transforming $D_0$ to
$D_1$. Let $\ell' := \ell -1$ be the number of moves in $\D'$; $n$ be the number
of crossings in $D_0$ and $n'$ be the number of crossings in $D_1$. We get
\[
\defe(\D) - \defe(\D') = 2\ell - n - (2 \ell' - n') = 2 + (n' - n) = w(m_1)
\]
where the last equality follows from a very simple case analysis depending on
the type of $m_1$. Because $\defe(\D')$ is the sum of the weights of the
remaining moves (by induction), we get the desired conclusion.
\end{proof}

\subsection{A brief overview of the parameterized complexity classes}
\label{ss:parameterized_complexity}
Here we briefly overview the notions from parameterized complexity needed for
this paper. For further background, we refer the reader to monographs~\cite{flum-grohe06, fund_param_complexity,cygan+15}. A 
\emph{parameterized problem} 
is a language $L
\subseteq \Sigma^*\times \mathbb{N}$, where $\Sigma^*$ is the set of strings
over a finite alphabet $\Sigma$ and the input strings are of the form $(x, k)$.
Here the integer $k$ is called \emph{the parameter}. Note that
parameterized problems are sometimes equivalently defined as a subset $P$ of
$\Sigma^*$ and a polynomial-time computable function $\kappa$ called parameter.
In subsequent parts of the paper, given an input for a parameterized
problem, $n$ denotes the size of the input and $k$ denotes the value of the
parameter on this input.

\paragraph{The classes FPT and XP.} A parameterized problem belongs to the class FPT (fixed parameter tractable) if
it can be solved by a deterministic Turing machine in time $f(k) \cdot n^c$
where $c > 0$ is some constant and $f(k)$ is some computable function of $k$.
In other words, if we fix $k$, then the problem can be solved in polynomial time
while the degree of the polynomial does not depend on $k$. This is, of course,
sometimes not achievable and there is a wider class XP of problems, that can be
solved in time $O(n^{f(k)})$ by a deterministic Turing machine. The problems in
XP are still polynomial time solvable for fixed $k$, however, at the cost that
the degree of the polynomial depends on $k$.

\paragraph{The class W[P].} Somewhere in between FPT and XP there is an
interesting class W[P]. A parameterized problem belongs to the class W[P] if it
can be solved by a nondeterministic Turing machine in time $h(k)\cdot n^c$ provided
that this machine makes only $O(f(k) \log n)$ non-deterministic choices where
$f(k), h(k)$ are computable functions and $c > 0$ is some constant. 
Given an algorithm for some computational problem $\Pi$, we say that this
algorithm is a \emph{W[P]-algorithm} if it is represented by a Turing machine satisfying
the conditions above.

It is obvious that a problem in FPT belongs to W[P]---we use the same algorithm
with $0$ non-deterministic choices. On the other hand, given a W[P]-algorithm,
it can be converted to a deterministic algorithm by trying all option for each
non-deterministic choice. The running time of the new algorithm is $h(k)\cdot
n^c \cdot (c')^{O(f(k) \log n)}$ for some $c' > 0$ which can be easily
manipulated into a formula showing XP-membership.

\paragraph{FPT-reduction.} Given two parameterized problems $\Pi$ and $\Pi'$, we
say that $\Pi$ reduces to $\Pi'$ via an \emph{FPT-reduction} if there exist 
computable functions $f \colon \Sigma^* \times \mathbb{N} \to \Sigma^*$
and $g \colon \Sigma^* \times \mathbb{N} \to \mathbb{N}$ such that
\begin{itemize}
  \item  $(x,k) \in \Pi$ if and only if $(f(x,k), g(x,k)) \in \Pi'$ for every
    $(x,k) \in \Sigma^* \times \mathbb{N}$;
  \item  $g(x,k) \leq g'(k)$ for some computable function $g'$; and
\item there exist a computable function $h$ and a fixed constant $c > 0$ such
  that for all input string, $f(x,k)$ can be computed by a deterministic
    Turing-machine in $O(h(k)n^c)$ steps.
\end{itemize}
Let us remark that here we use a definition of reduction consistent
with~\cite[Definition~2.1]{flum-grohe06} or~\cite[Definition~13.1]{cygan+15}. 
On the other hand, some authors,
e.g.~\cite[Definition~20.2.1]{fund_param_complexity}, impose a stronger
requirement that would correspond to asking $g(x,k) = g'(k)$ in the second item
of our definition. Thus our hardness claims are with respect to the reduction
we define here.\footnote{This is not a principal problem. With some extra
effort it would be possible to provide a reduction consistent
with~\cite{fund_param_complexity}. But this would prolong the already technically
complicated proof.}

The classes FPT, W[P] and XP are closed under FPT-reductions. A problem $\Pi$ is said
to be \emph{C-hard} where C is a parameterized complexity class, if all problems in C
can be FPT-reduced to $\Pi$. Moreover, if $\Pi \in \hbox{C}$, we say that $\Pi$
is \emph{C-complete}.

\subsection{An algorithm for W[P]-membership}
\label{ss:algo_membership}
In this subsection we provide the algorithm used to prove W[P]-membership
in~\cref{t:main}. 

\paragraph{Brute force algorithm.} First, let us however look at a brute force
algorithm for \sctt{Unknotting via defect} which does not give W[P]-membership.
Spelling it out will be useful for explaining the next steps. We will exhibit
this algorithm as a non-deterministic algorithm, which will be useful for
comparison later on.
In the
algorithm, $D$ is a diagram, and $k$ is an integer, not necessarily positive.
Also, given a diagram $D$ and a feasible Reidemeister move $m$, then $D(m)$
denotes the diagram obtained from $D$ after performing $m$.

\medskip
\sctt{BruteForce$(D, k)$:}

\begin{enumerate}[1.]
  \item If $k < 0$, then output \sctt{No.}  If $D = U$ is a diagram
    without crossings and $k \geq 0$, then output \sctt{Yes}. In all
    other cases continue to the next step.
\item (Non-deterministic step.) Enumerate all possible Reidemeister
  moves $m_1, \dots, m_t$ in $D$ up to isotopy. Make a `guess' which
  move $m_i$ is the first to perform. Then, for such $m_i$, run
  \sctt{BruteForce$(D(m_i), k - w(m_i))$}.
\end{enumerate}

Therefore, altogether, the algorithm outputs \sctt{Yes}, if there is a sequence
of guesses in step 2 which eventually yields \sctt{Yes} in step 1.

It can be easily shown that the algorithm terminates because whenever step 2 is
performed, either $k - w(m_i) < k$, or $m_i$ is a $\IImin$ move, $k - w(m_i) =
k$, but $D(m_i)$ has fewer crossings than $D$.

It can be also easily shown by induction that this algorithm provides a correct
answer due to~\cref{l:defect}. Indeed, step 1 clearly provides a correct
answer, and regarding step 2, if $D(m_i)$ can be untangled with defect at most
$k - w(m_i)$, then~\cref{l:defect} shows that $D$ can be untangled with
defect at most $k$. Because, this way we try all possible sequences of
Reidemeister moves, the algorithm outputs \sctt{Yes} if and only if $D$ can be untangled with
defect at most $k$.

On the other hand, this algorithm (unsurprisingly) does not provide W[P]-membership as it performs too many non-deterministic guesses.

\paragraph{Naive greedy algorithm.} In order to fix the problem with the
previous algorithm, we want to reduce the number of non-deteministic steps. It
turns out that the problematic non-deterministic steps in the previous
algorithm are those where $k$ does not decrease. (Because the other steps
appear at most $k$ times.) Therefore, we want to avoid non-deterministic
steps where we perform a $\IImin$ move. The naive way is to perform such steps
greedily and hope that if $D$ untangles with defect at most $k$, there is also such a
`greedy' untangling (and therefore, we do not have to search through all
possible sequences of Reidemeister moves). This is close to be true but it does
not really work in this naive way. Anyway, we spell this naive algorithm
explicitly, so that we can easily upgrade it in the next step, though it does
not always provide the correct answer to \sctt{Unknotting via defect}.

\medskip

\sctt{NaiveGreedy$(D, k)$:}

\begin{enumerate}[1.]
\item If $k < 0$, then output \sctt{No.}
If $D = U$ is a diagram without crossings and $k \geq 0$, then output \sctt{Yes}. In all other cases continue to the next step.
\item If there is a feasible Reidemeister $\IImin$ move $m$, run
  \sctt{NaiveGreedy$(D(m), k)$} otherwise continue to the next step.
\item (Non-deterministic step.) If there is no feasible Reidemeister $\IImin$
  move, enumerate all possible Reidemeister moves $m_1, \dots, m_t$ in $D$ up
    to isotopy. Make a `guess' which $m_i$ is the first move to perform and run \sctt{NaiveGreedy$(D(m_i), k - w(m_i))$}. 
\end{enumerate}

The algorithm must terminate from the same reason why \sctt{BruteForce} terminates.

It can be shown that this algorithm is a W[P]-algorithm. However, we do not do this
here in detail as this is not our final algorithm. The key is that the step 3 is
performed at most $(k+1)$-times.  

If the algorithm outputs \sctt{Yes}, then this is clearly correct answer from
the same reason as in the case of \sctt{BruteForce}. However, as we hinted
earlier, outputting \sctt{No} need not be a correct answer to  
\sctt{Unknotting via defect}.
Indeed, there are known examples of diagrams of an unknot when untangling requires
performing a $\IIpl$ move, see for example \cite{kauffman12}.  With
\sctt{NaiveGreedy}, we would presumably undo such $\IIpl$ move immediately in
the next step, thus we would not find any untangling using the $\IIpl$ move.  We
have to upgrade the algorithm a little bit to avoid this problem (and a few
other similar problems).

\paragraph{Special greedy algorithm.}

The way to fix the problem above will be to guess in advance a certain subset
$S$ of so called special crossings. This set will be updated in each
non-deterministic step described above so that the newly introduced crossings will
become special as well. Then, in the greedy steps we will allow to perform only
those $\IImin$ moves which avoid $S$. (In particular, this means that a $\IIpl$
move cannot be undone by a $\IImin$ move avoiding $S$ in the next step.) It
will turn out that we may also need to perform $\IImin$ moves on $S$ but there
will not be too many of them, thus such moves can be considered in the
non-deterministic steps.

For description of the algorithm, we introduce the following notation. Let $D$
be a diagram of a knot and $S$ be a subset of the crossings of $D$; we will
refer to crossings in $S$ as \emph{special} crossings. Then we say that a
feasible Reidemeister $\IImin$ move $m$ is \emph{greedy} (with respect to
$S$), if it avoids $S$; that is, the crossings removed by $m$ do not belong to
$S$. On the other hand, a feasible Reidemeister move $m$ is \emph{special} (with
respect to $S$) if it is 
\begin{itemize}
\item a $\IImin$ move removing two crossings in $S$; or
\item a $\Imin$ move removing one crossings in $S$; or
\item a $\III$ move such that all three crossings affected by $m$ are special;
  or
\item a $\Ipl$ move performed on an edge with both of its endpoints in $S$; or
\item a $\IIpl$ move performed on edges with all of their endpoints in $S$.
\end{itemize}

Given a move $m$ in $D$, special or greedy with respect to $S$, by $S(m)$ we denote the
following set of crossings in $D(m)$. 
\begin{itemize}
\item If $m$ is a greedy $\IImin$ move or if $m$ is a $\III$ move, then $S(m) = S$ (under the convention that the
  three crossings affected by $m$ persist in $D(m)$).
\item If $m$ is a special $\IImin$ move or a $\Imin$ move, then $S(m)$ is obtained from $S$
  by removing the crossings removed by $m$.
\item If $m$ is a $\Ipl$ or a $\IIpl$ move, then $S(m)$ is obtained from $S$
  by adding the crossings introduced by $m$.
\end{itemize}

Now, we can describe the algorithm.

\medskip

\sctt{Special\linebreak[0]{}Greedy$(D, k)$:}
\begin{enumerate}[0.]
  \item (Non-deterministic step.) Guess a set $S$ of at most $3k$ crossings in $D$. Then run
     \sctt{Special\linebreak[0]{}Greedy$(D, S, k)$}.
\end{enumerate}

\sctt{Special\linebreak[0]{}Greedy$(D, S, k)$:}

\begin{enumerate}[1.]
\item If $k < 0$, then output \sctt{No.}
If $D = U$ is a diagram without crossings and $k \geq 0$, then output \sctt{Yes}. In all other cases continue to the next step.
\item If there is a feasible greedy Reidemeister $\IImin$ move $m$ with respect
  to $S$, run
    \sctt{Special\linebreak[0]{}Greedy$(D(m), S(m), k)$} otherwise continue to the next step.
\item (Non-deterministic step.) If there is no feasible greedy Reidemeister $\IImin$ move with respect
    to $S$, enumerate all possible special Reidemeister moves $m_1, \dots, m_t$
    in $D$ with respect to $S$ up to isotopy. If there is no such move, that
    is, if $t = 0$, then output \sctt{No}. Otherwise, make a guess which $m_i$ is
    performed first and run
    \sctt{Special\linebreak[0]{}Greedy$(D(m_i), S(m_i), k - w(m_i))$}.
\end{enumerate}

The bulk of the proof of W[P]-membership in~\cref{t:main} will be to show that
the algorithm \sctt{Special\linebreak[0]{}Greedy$(D, k)$} provides a correct answer to
\sctt{Unknotting via defect}. Of course, if the algorithm outpus
\sctt{Yes}, then this is the correct answer by similar arguments for the
previous two algorithms. Indeed, \sctt{Yes} answer corresponds to a sequence of
Reidemeister moves performed in step 2 or guessed in step 3 (no matter how we
guessed $S$ in step 0 and the role of $S$ in the intermediate steps of the run
of the algorithm is not important if we arrived at \sctt{Yes}). The defect of
the untangling given by this sequence of moves is at most $k$
by~\cref{l:defect}. On the other hand, we also need to show that if $D$
untanlges with defect at most $k$, then we can guess some such untangling while
running the algorithm. 
This is done in~\cref{s:algo_correct}.

\section{Diagrams and their combinatorial description}
\label{s:diagrams}

\paragraph{Arcs, strands, edges and faces.}
By an \emph{arc} $\alpha$ in the diagram $D$ we mean a set $D(A)$ where
$A$ is an arc in $S^1$ (i.e., a subset $A$ of $S^1$ homeomorphic to the closed interval). This definition is slightly non-standard but it will be very useful
for us. The \emph{endpoints} of $\alpha$ are the points $D(\partial A)$;
usually, these are two points but they may coincide. The \emph{interior} of $\alpha$
is $D(\interior A)$ where $\interior A$ stands for the interior of
$A$. (Note that $\alpha$ determines $A$ uniquely, thus the interior
and the endpoints are always well defined.) 

Consider a crossing $x$ in diagram $D$ and take a small disk
surrounding $x$. Then there are four short arcs in $D$ with one endpoint $x$
and the other endpoint on the boundary of the disk. We call these arcs
\emph{strands}. Each strand is marked as an overpass or underpass.

We also define an \emph{edge} as an arc $\varepsilon$ in $D$
between two crossings $x$ and $y$ (which may coincide) such that the interior
of $\varepsilon$ does not contain any crossing. In an exceptional case when $D
= U$ has
no crossings, with slight abuse of the terminology, we regard $D$ itself as an
edge. Finally, given arbitrary diagram $D \colon S^1 \to \R^2$, the connected components of $\R^2
\setminus D(S^1)$ are called \emph{faces} of $D$. 

\paragraph{Combinatorial description of diagrams.}
For purposes of our algorithms, we will need a purely combinatorial description
of diagrams up to isotopy. Such a description can be given as graphs embedded in
the plane as follows. By $U$, let us denote the unique diagram of the unknot
with no crossing. We provide a representation of a $D \neq U$ with at least one
crossing.  First, we give the set of vertices $V(D)$ of a diagram $D$ which
represent crossings in $D$.

Next, we represent edges of $D$ so that each edge $\varepsilon$ will be
represented as a $4$-tuple $[v_1, v_2, n_1, n_2]$ where $v_1, v_2 \in V(D)$ and
$n_1, n_2 \in [4]$. This $4$-tuple represents an edge connecting $v_1$ and $v_2$
whereas the number $n_i$ serves to mark the strand of $v_i$ used by
$\varepsilon$. We may number the strands of $v_i$ so that they appear
consecutively in clockwise order and we may assume that the strands $1$ and $3$
are overpasses while $2$ and $4$ are underpasses; see~\cref{f:lbl}, left. This
is a generic representation for graphs embedded in surfaces, called the rotation
system, see~\cite{rotation}.

\begin{figure}
\begin{center}
\includegraphics{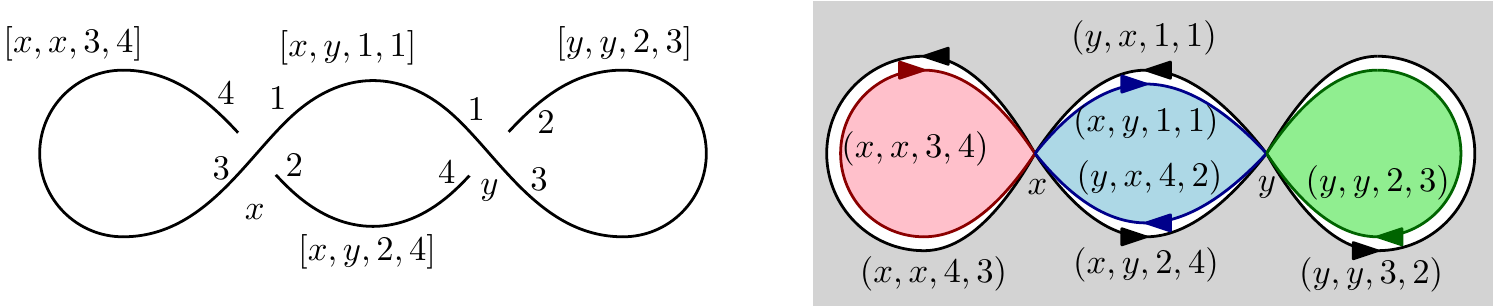}
  \caption{Left: Representation of the diagram in the figure is given by setting
  $V(D) = \{x,y\}$, and adding edges $[x,x,3,4]$,
  $[x,y,1,1]$, $[y,y,2,3]$ and
  $[x,y,2,4]$. Right: Each edge corresponds to a pair of edges, in which case, the directed
  edges determine faces. In our case, we have bounded faces
  $\{(x,x,3,4)\}$, $\{(x,y,1,1),(y,x,4,2)\}$, $\{(y,y,2,3)\}$ and unbounded face
  $\{(x,x,4,3),(y,x,1,1), (y,y,3,2),(x,y,2,4)\}$.}
\label{f:lbl}
\end{center}
\end{figure}

We remark that we consider the edges unoriented, that is, $[v_1, v_2, n_1, n_2]
= [v_2, v_1, n_2, n_1]$. However, for purpose of representing faces, it is
convenient to follow the idea of doubly-connected edge list (see~\cite{dcel}) so
that each edge $[v_1, v_2, n_1, n_2]$ is represented as a pair of oriented edges
$(v_1, v_2, n_1, n_2)$ and $(v_2, v_1, n_2, n_1)$ where $(v_1, v_2, n_1, n_2)$
stands for the directed edge leaving $v_1$ via strand marked $n_1$ and entering
$v_2$ via strand $n_2$. Each directed edge uniquely determines a face on the `right
hand side' and if two oriented edges $(v_1, v_2, n_1, n_2)$ and $(v_3, v_4, n_3,
n_4)$ satisfy $v_2 = v_3$ and $n_2 = n_3 + 1 \pmod 4$, then they determine the
same face; see~\cref{f:ee}. If we declare two such oriented edges as equivalent,
then faces of $D$ can be represented as the classes of the (inclusion-wise)
smallest equivalence containing such pairs of equivalent oriented edges;
see~\cref{f:lbl}, right. In order to complete our combinatorial description of a
diagram, it remains to mark one face as the \emph{outer face}---the unique
unbounded face.  (Of course, not every choice of $V(D)$ and $4$-tuples $[v_1,
  v_2, n_1, n_2]$ as above yields a valid representation of a diagram. In our
considerations, we will assume that we really have a valid representation.)

\begin{figure}
\begin{center}
\includegraphics{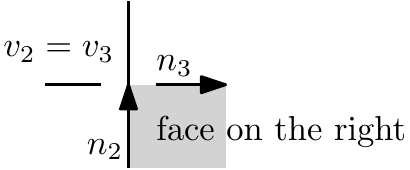}
 \caption{Directed edges pointing to the same face on the right.}
  \label{f:ee}
\end{center}
\end{figure}

We remark that if we have a diagram $D$ with $n$ crossings, then this way
we can reach an encoding of $D$ of size $O(n \log n)$ if we interpret crossings
as integers in binary. For purposes of the implementation of the algorithm, we
always assume that we have an encoding of $D$ of such size. However, for
purposes of the analysis of the algorithm, $V(D)$ will be just some abstract set.

\section{Reidemeister moves}
\label{s:r_moves}

\subsection{Combinatorial description of Reidemeister moves}
\label{ss:combinatorial_moves}

Given a diagram represented combinatorially as explained earlier, we want to provide a combinatorial
description of all feasible Reidemeister moves in $D$. When doing so, we will
have the following aims.

\begin{itemize}
 \item We want to describe each move with as less combinatorial data as possible that is
   required to determine a given move uniquely. For example, it is not hard to
    see that a $\IImin$ move in a diagram $D$ is uniquely determined by the two
    crossings removed by the move. Thus it makes sense to identify $\IImin$
    moves with pairs of crossings of $D$. The cost of this approach is that
    not every pair of crossings in a given diagram yields a feasible $\IImin$
    move. Thus we have to distinguish which pairs do.
 \item The previous approach allows us to say, in some cases, that two moves in
   different diagrams are (combinatorially) the same. For example, if we have
    diagrams $D$ and $D'$ both of them containing crossings $x$ and $y$ and in
    both diagrams there is a feasible $\IImin$ move removing $x$ and $y$. Then
    we can regard this move as the same move in both cases. This will be useful
    in stating various rearrangement lemmas on the ordering of the moves in a sequence
    of moves untangling a given diagram.
 \item We want to be able to enumerate all possible Reidemeister moves in a
   given diagram in polynomial time. This can be regarded as obvious but we
    want to include the details for completeness. Thus, for this aim, we can use
    the combinatorial description in the first item above and check feasibility in
    each case.
 \item For running and analysing our algorithm, it will be also useful if a
   feasible move $m$ in a diagram $D$ uniquely determines the combinatorial
    description of the resulting diagram $D(m)$ after performing the move.
\end{itemize}

Our description of the moves will in particular yield a proof of the following
lemma.

\begin{lemma}
  \label{l:complexity_moves}
  Let $D$ be a diagram of a knot with $n$ crossings (and with encoding of
  size $O(n \log n)$).
  \begin{enumerate}
  \item There are $O(n^2)$ feasible Reidemeister moves and all of them can be
    enumerated in time $O(n^2 \log n)$.
  \item Given a feasible move $m$ in $D$, the diagram $D(m)$ can be
    constructed in time $O(n \log n)$.
  \end{enumerate}
\end{lemma}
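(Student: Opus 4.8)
\textbf{Proof plan for Lemma~\ref{l:complexity_moves}.}

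The plan is to go through each of the five Reidemeister move types in turn, for each one identifying a small combinatorial certificate that determines the move uniquely and can be checked for feasibility quickly, and then to bound both the number of such certificates and the cost of performing the move. The key observation that drives all of part~(1) is that every feasible move is localised: a $\IImin$ move is pinned down by the unordered pair of crossings it removes, a $\Imin$ move by the single crossing it removes (equivalently, by the loop-edge at that crossing), a $\Ipl$ move by the edge on which the new kink is created together with the side of that edge into which it is pushed (and an over/under choice), a $\IIpl$ move by an ordered pair of edges cobounding a common face together with a strand-over-strand choice, and a $\III$ move by the triangular face whose three sides host the move. Since a diagram with $n$ crossings has $O(n)$ edges and $O(n)$ faces (by Euler's formula, as the underlying $4$-regular plane graph has $n$ vertices and $2n$ edges), each of these families has size $O(n^2)$ — in fact the $\Imin$ and $\III$ families are $O(n)$, the $\IImin$ and $\IIpl$ families are $O(n^2)$, and the $\Ipl$ family is $O(n)$. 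Summing gives the $O(n^2)$ bound.

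For the enumeration time bound, I would iterate over the relevant certificates using the doubly-connected-edge-list structure of the combinatorial description: walking the boundary of every face costs $O(n)$ total, listing all pairs of crossings or all pairs of co-facial edges costs $O(n^2)$, and for each candidate certificate, checking feasibility (e.g.\ that the two crossings of a putative $\IImin$ move are joined by two edges bounding a common bigon with compatible over/under information, or that a face is a triangle with the right strand pattern for a $\III$ move) is a local check touching $O(1)$ vertices, edges and faces, hence $O(\log n)$ time with integer crossing labels in binary. This yields the claimed $O(n^2 \log n)$ total. I would present one representative case (say $\IImin$) in full detail and indicate that the others are analogous; the bigon-detection step is where one must be slightly careful, since two crossings may be joined by several edges and one must verify the over/under pattern that makes the bigon removable.

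For part~(2), performing a move is a purely local surgery on the DCEL: a $\IImin$ move deletes two vertices, four incident edges, and merges the three faces around the removed bigon into one, relinking a bounded number of rotation-system entries; a $\Imin$ move deletes one vertex, two edges (the loop and one other), and merges two faces; the additive moves $\Ipl$ and $\IIpl$ insert one or two vertices and a bounded number of edges, splitting one or two faces; and a $\III$ move leaves the vertex and face sets unchanged, only permuting a bounded number of strand markings and edge endpoints. In each case only $O(1)$ records of the combinatorial description change, but we may need to rename up to $O(n)$ entries to keep a clean encoding, and reading/writing the affected $O(\log n)$-bit labels dominates, giving $O(n \log n)$; I would state the local update rules for each type (these are exactly the inverses/forward versions of the face-splitting and face-merging described in Section~\ref{s:diagrams}) and observe uniqueness of $D(m)$ follows because the surgery is deterministic given the certificate. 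The main obstacle is purely bookkeeping: writing down the rotation-system and face updates for all five move types carefully enough that the reader is convinced $D(m)$ is well-defined and correctly computed, rather than any genuine mathematical difficulty.
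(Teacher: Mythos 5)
Your plan is essentially the paper's own proof: both of you define each move type by a small combinatorial certificate (a $1$-, $2$- or $3$-element subset of $V(D)$ for $\Imin$, $\IImin$, $\III$; a directed edge plus sign for $\Ipl$; a pair of co-facial directed edges plus some extra bits for $\IIpl$), observe that faces and the rotation system let you verify feasibility locally in $O(\log n)$ per candidate, bound the total count by $O(n^2)$, and charge the $O(n\log n)$ in part~(2) almost entirely to relabelling the crossing set. One detail worth tightening in your $\IIpl$ certificate: the paper needs a \emph{winding} bit because the unbounded face is an annulus (so two points in it are joined by two non-isotopic arcs), and an \emph{order} bit for the case $\epsvec=\phivec$; your ``strand-over-strand choice'' is actually redundant once the pair of edges is ordered (the paper normalises by always pulling the finger so that it overpasses at the destination), so what you need instead are those two bits. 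This changes nothing asymptotically, but without the winding bit the enumeration would skip half the $\IIpl$ moves along the outer face.
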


We prove the lemma simultaneously with our description of the moves.

We start with $\Imin$, $\IImin$ and $\III$ moves:

\paragraph{$\Imin$, $\IImin$ and $\III$ moves: Definitions.}
By a simple case analysis, each feasible Reidemeister $\Imin$, $\IImin$ or
$\III$ move $m$ in $D$ is uniquely determined by the set of crossings $X$ in
$D$ affected by the move. In the case of $\Imin$ move there is seemingly
an exceptional case of a diagram with a single crossing and two loops attached
to the crossing; see~\cref{f:dl}. However, the resulting diagram after
performing the $\Imin$ move is $U$ independently of the choice
of the loop which is removed. Thus we do not really distinguish these two cases.
Therefore, we can define a \emph{$\Imin$ move} in $D$ as a $1$-element subset
of $V(D)$, a \emph{$\IImin$ move} as a $2$-element subset of $V(D)$, and a \emph{$\III$
move} as a $3$-element subset of $V(D)$. In this terminology, not every
$\Imin$, $\IImin$ or $\III$ move yields feasible Reidemeister move, thus we
always add the adjective \emph{feasible} if it indeed yields.

 We find it often convenient to emphasize the type of the move in the notation (if we
 know it), thus
 we also write $\{x\}_{\Imin}$ instead of $\{x\}$ for a $\Imin$ move removing
 $\{x\}$, and similarly we use $\{x,y\}_{\IImin}$ and $\{x,y,z\}_{\III}$.

\begin{figure}
\begin{center}
  \includegraphics{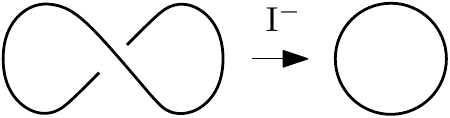}
\end{center}
\caption{The resulting diagram after applying $\Imin$ move does not depend on
  the choice of the loop which is removed.}
  \label{f:dl}
\end{figure}

\paragraph{$\Imin$, $\IImin$ and $\III$ moves: Checking feasibility.}
For checking whether a given $\Imin$, $\IImin$, or a $\III$ move is feasible in
a given diagram, we can use the following approach.

In case of a $\Imin$ move $\{x\}_{\Imin}$, it is sufficient to check whether
$D$ contains an $[x,x,n_1,n_2]$ edge where $n_1, n_2 \in \{1, 2, 3,4\}$ such
that one of the outer edges $(x,x,n_1, n_2)$ or $(x,x,n_2, n_1)$ forms a face
different from the outer face of $D$. Therefore, if we want to find all
feasible $\Imin$ moves, then it is sufficient to find all faces formed by a
single directed edge and check whether they are distinct from the outer face. This can
be done in time $O(n \log n)$ where $n$ is the number of
crossings due to encoding the crossings as integers in binary.

In case of a $\IImin$ move $\{x,y\}_{\IImin}$, we need a face, distinct from the
outer face, formed by two
directed edges $(x,y,n_1, n_2)$ and $(y,x,n_3, n_4)$. In addition we need that
one of the edges $[x,y,n_1,n_2]$ and $[y,x, n_3,n_4]$ enters both $x$ and $y$
as overpass and the other one as underpass. This can be determined from $n_1$,
$n_2$, $n_3$ and $n_4$. Therefore, if we want to enumerate all feasible
$\IImin$ moves, we can do it in time $O(n \log n)$ where $n$ is the number of
crossings due to encoding the crossings as integers in binary.

In case of a $\III$ move $\{x,y,z\}_{\III}$, we need a face, distinct from the
outer face, formed by three directed edges $(x,y,n_1, n_2)$, $(y,z,n_3, n_4)$,
$(z,x, n_5, n_6)$ up to a permutation of $\{x, y, z\}$. We also need that one
of the edges $[x,y,n_1, n_2]$, $[y,z,n_3, n_4]$, or $[z,x, n_5, n_6]$ enters
both its endpoints as overpass, the second one enters one endpoint as overpass
and the other one as underpass, and the third one enters both its endpoints as
underpass. This can be determined by providing a finite list of allowed options
for $(n_1, n_2, n_3, n_4, n_5, n_6)$. Altogether, we can again enumerate all
feasible $\III$ moves in time $O(n \log n)$ where $n$ is the number of 
crossings due to encoding the crossings as integers in binary.

\paragraph{$\Imin$, $\IImin$ and $\III$ moves: Construction of $D(m)$.}
Now, we briefly describe how to get $D(m)$ from the knowledge of $D$ and $m$.

  Let us start with a $\Imin$ move as in~\cref{f:rm_Imin_move}. Assume
  that it removes a crossing $x$ and that the strands around $x$ are numbered as
  in the figure. Assume also that the loose ends of the arc affected by the
  move head to crossings $x'$ and $x''$ via strands numbered $n'$ and $n''$.
  Then we get $D(m)$ from $D$ by removing $x$ and the edges $[x, x, n_2,n_3]$,
  $[x,x',n_4, n']$, and $[x, x'', n_1, n'']$ while adding an edge $[x', x'', n',
    n'']$.  (There is a single exception, though: If $x = x' = x''$, then $D(m)
  = U$ is the diagram without crossings.) We also need to update the list of
  faces, but this is straightforward. In the case in the figure, for the face
  of $D$ containing directed edges $(x'',x,n'', n_1)$ and $(x, x', n_4, n')$
  these two edges get replaced with the directed edge $(x'',x', n'',
  n')$. Similarly, for the face containing $(x', x, n', n_4), (x,x,n_3, n_2),
  (x,x'',n_1,n'')$ these edges get replaced with $(x',x'', n',
  n'')$. Altogether, we perform a constant number of operations. However, as we
  are removing some crossings, we may want to relabel the crossings, if we want
  to reach a representation of $D(m)$ of size $O(\widehat n \log \widehat n)$ where
  $\widehat n$ is the number of crossings of $D(m)$. Doing so naively (which is
  fully sufficient for our purposes), we may need time $O(n \log n)$ for
  this. But we do this relabelling only for the purpose of implementing the
  algorithm. For the analysis of the algorithm, we keep the names the same.

\begin{figure}
\begin{center}
  \includegraphics[page=3]{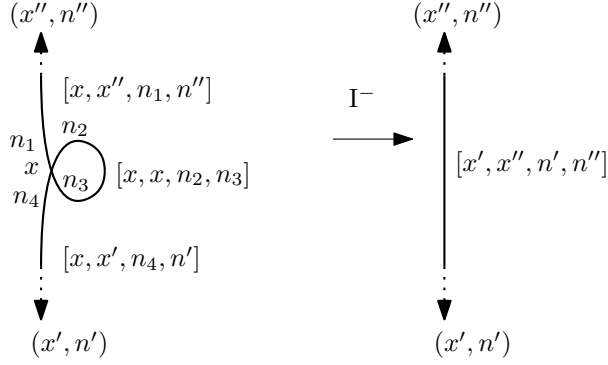}
\end{center}
  \caption{$D(m)$ coming from $D$ after a $\Imin$ move. Here we cover both
  cases whether the strands at $n_1$ and $n_3$ are underpasses or overpasses,
  thus we didn't make an explicit choice in the figure.}
  \label{f:rm_Imin_move}
\end{figure}

Moves of types $\IImin$ or $\III$ can be treated similarly, and we omit most of
the details, because we believe that the general approach can be understood from
the $\Imin$ example, while it is a bit tedious to list all possible cases. Thus
we only point out important differences. For a $\IImin$ move $m =
\{x,y\}_{\IImin}$ there are a few more exceptional cases how to rebuild the
diagram depending on whether there are loops at $x$ or at $y$. (Note that there
cannot be three edges connecting $x$ and $y$ if we have a diagram of a knot.)
For a $\III$~move $m = \{x,y,z\}_{\III}$ there are again more exceptional cases,
depending whether there loops at $x$, $y$, $z$ or additional edges connecting
two of the points among $x$, $y$, $z$. In addition, we want to keep the names of
the crossings during the move. Consider the move on~\cref{f:rm_xyz}; here
for simplicity of the figure, we label the strands only around $z$ on the left
figure and around $x$ on the right figure. We shift the labels of the
crossings as indicated in the figure. For example, $x$ is an intersection of
arcs $\alpha$ and $\beta$ in the left figure. These two arcs are shifted to
$\alpha'$ and $\beta'$ in the right figure and we label the intersection of
$\alpha'$ and $\beta'$ again by $x$. Regarding the edges when coming from $D$ to
$D(m)$, for example the edge $[z,z'',n_1,n_z'']$ is removed while it is replaced
with the edge $[x,z'',1,n_z'']$.

  \begin{figure}
\begin{center}
  \includegraphics[page=4]{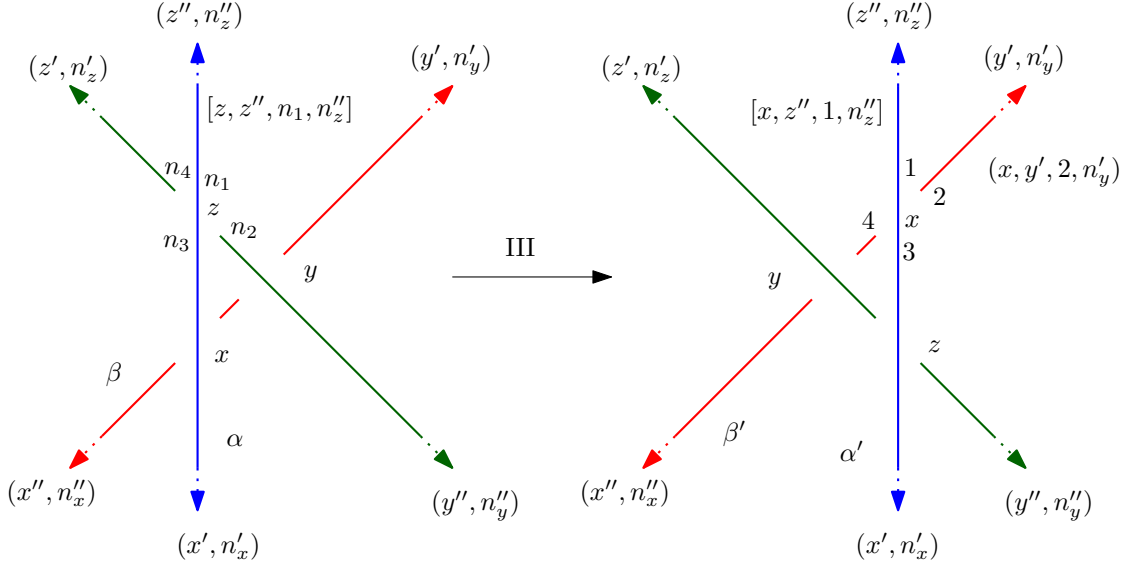}
\end{center}
  \caption{Crossings before and after a $3$-move.}
  \label{f:rm_xyz}
\end{figure}

\paragraph{$\Imin$, $\IImin$ and $\III$ moves: Circle.}
We will also need the following notion: for a feasible Reidemeister $\Imin$,
$\IImin$, or $\III$ move $m$ affecting the set of crossings $X$ on a diagram
with at least two crossings\footnote{This assumption is used to avoid the
example in~\cref{f:dl}.}, we will also denote by $c(m) \subseteq \R^2$ the
unique circle formed by images of edges in $D$ which contains the crossings of
$X$ but no other crossings; see~\cref{f:rm_circle}. (Here we regard a diagram and
edge as topological notions according to our earlier definitions). We will refer
to $c(m)$ as to the \emph{circle of $m$}.

\begin{figure}
\begin{center}
  \includegraphics[page=2]{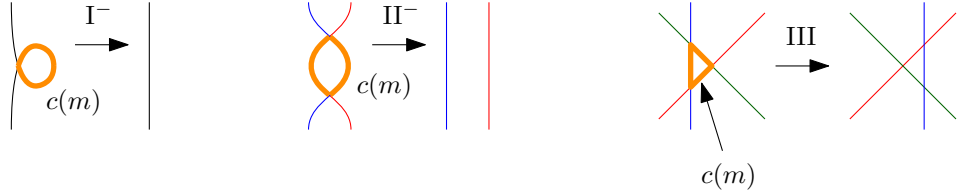}
\end{center}
  \caption{The circle $c(m)$ of a feasible $\Imin$, $\IImin$, or $\III$ move $m$. For
  emphasizing $c(m)$ the overpasses are not depicted. }
  \label{f:rm_circle}
\end{figure}
\paragraph{\Ipl moves: Definitions.}
We define a \emph{$\Ipl$ move} as a pair $(\epsvec, \sigma)$, where $\epsvec$ is
a directed edge of $D$ and $\sigma \in \{+,-\}$. Here $\epsvec$ determines the
undirected edge $\varepsilon$ on which the move is performed as well as the side
on which the loop is introduced. For being consistent with our definition of faces, we
require that the loop is introduced on the right hand-side of $\epsvec$. The
sign determines whether the crossing is positive or negative with respect to the
orientation of $\epsvec$; see~\cref{f:side_sign}. In an exceptional case
$D = U$, $\epsvec$ represents $U$ with positive or negative orientation.

\begin{figure}
\begin{center}
  \includegraphics{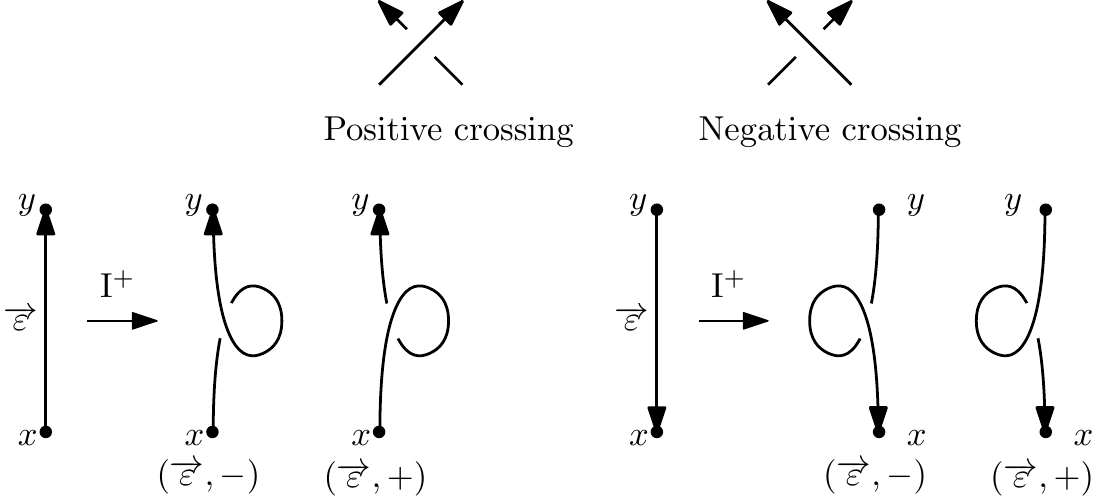}
\end{center}
\caption{Positive and negative crossings and four options for $\Ipl$ affecting
  a given edge.}
  \label{f:side_sign}
\end{figure}

\paragraph{\IIpl moves: Definitions.} 
Any feasible Reidemeister $\IIpl$ move can be performed in the following way.
We pick two points $a$, $b$ on $D$ different from crossings. We connect them
with a standard topological arc $\zeta$ inside one of the faces of $D$ (except
of the endpoints $a$, $b$ which do not belong to the face). We pull a finger
from $a$ towards $b$ along $\zeta$, thereby performing the $\IIpl$ move;
see~\cref{f:finger}. We may assume that the finger overpasses $b$ because an
underpass could be obtained, up to isotopy, by pulling a finger from $b$ to $a$
along $\zeta$.

\begin{figure}
\begin{center}
  \includegraphics{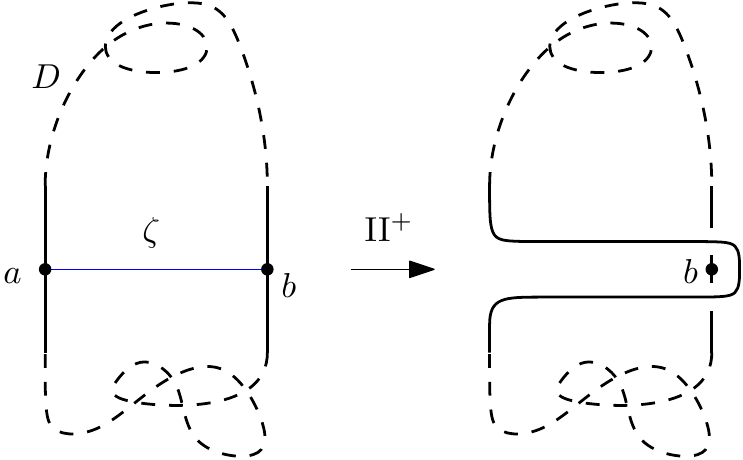}
\end{center}
\caption{A $\IIpl$ move along $\zeta$.}
  \label{f:finger}
\end{figure}

This means that any feasible $\IIpl$ move is uniquely
determined by the following data.

\begin{itemize}
  \item A pair of directed edges $(\epsvec, \phivec)$ on which the move is
    performed, subject to the condition that $\epsvec$ and $\phivec$ belong to
    the same face $f$. 
    Comparing with the description above, $a$ would belong to
    $\epsvec$, $b$ would belong to $\phivec$, and $\zeta$ would belong to the
    face containing $\epsvec$ and $\phivec$. (Here we mean the containment in
    the combinatorial sense where a face is a collection of directed edges.)
    We remark that we
    explicitly allow $\epsvec=\phivec$ (this corresponds to the case when
    $a$ and $b$ belong to the same edge). 
  \item A \emph{winding} $w \in \{+, -\}$. The winding is relevant only if $f$
    is the unbounded face. For fixed $a$ and $b$, the winding determines $\zeta$
    up to isotopy. Indeed, first consider the case that $f$ is bounded, then $f$
    is homeomorphic to an open disk (because $D$ is a diagram of a knot) and
    then $\zeta$ is unique up to isotopy. If $f$ is unbounded, then $f$ is
    homeomorphic to an open annulus.  This is easy to see in one point
    compatification of $\R^2$ by adding a point $\infty$ thereby obtaining the
    sphere $S^2$. Then $f \cup \{\infty\}$ is a (necessarily bounded) face in
    $S^2$ therefore an open disk.  Consequently, $f$ is an open disk minus a
    point that is an open annulus. Then we have two options, either $\zeta$
    passes from $a$ to $b$ in positive direction or negative direction;
    see~\cref{f:sign_zeta}. Then we set $w = +$ in the former case and $w = -$
    in the latter case.
\begin{figure}
\begin{center}
  \includegraphics{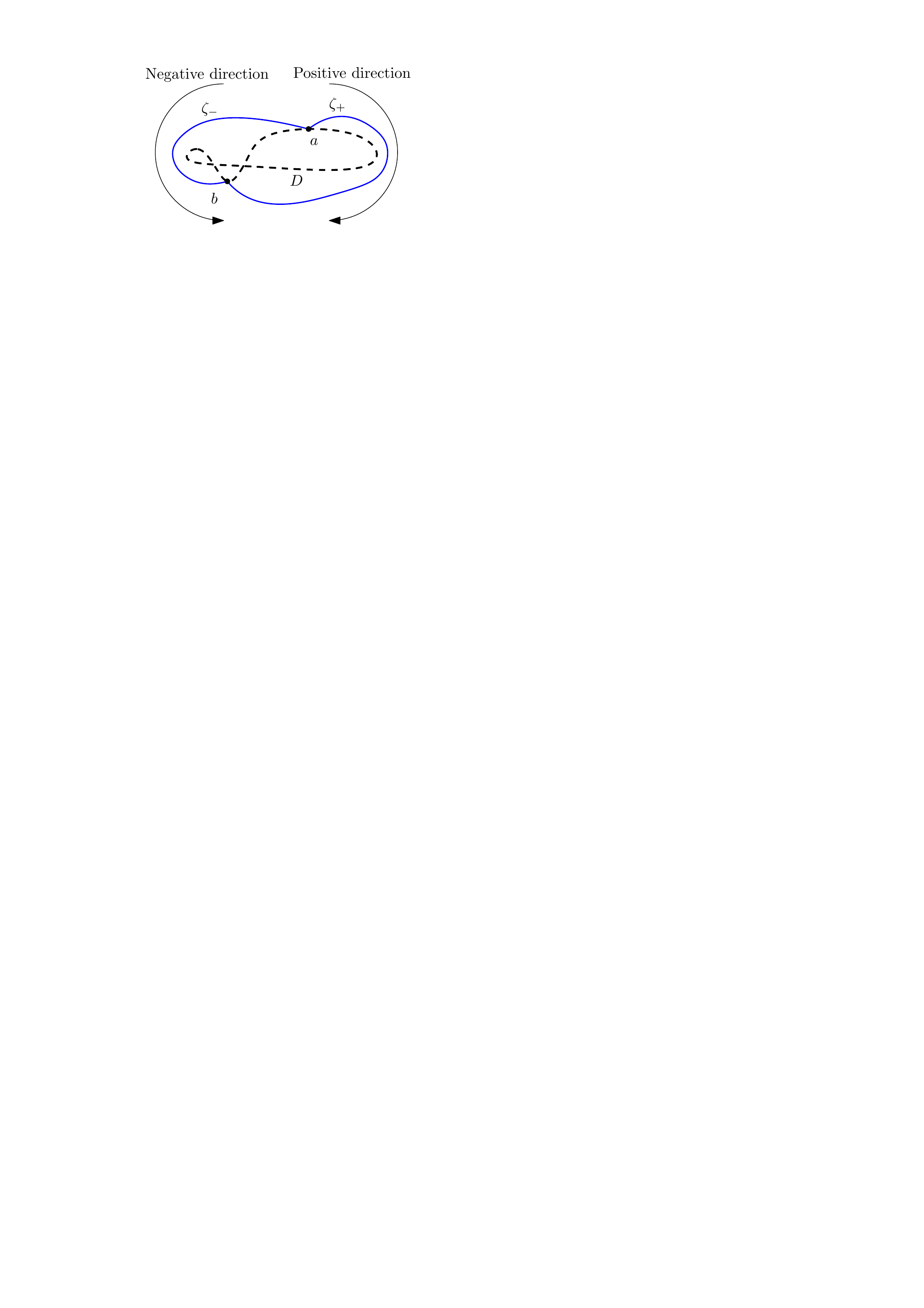}
\end{center}
\caption{If $\zeta$ belongs to the unbounded face, there are two possible
  isotopy classes for $\zeta$.}
  \label{f:sign_zeta}
\end{figure}
  \item An \emph{order} (of $a$ and $b$) denoted $o \in \{1, 2\}$. The order is
    relevant only if $\epsvec = \phivec$. Indeed, if $\epsvec \neq
    \varphi$, then any shift of $a$ along $\epsvec$ or any shift of $b$
    along $\phivec$ yields an isotopic $\II^+$, therefore the data above
    uniquely determine the move. (Note that if $\epsvec \neq  \phivec$, then
    also the undirected edges determined by $\epsvec$ and $\phivec$ are
    distinct because $\epsvec$ and $\phivec$ belong to the same face $f$.)
    If $\epsvec = \phivec$, it remains to
    distinguish whether $a$ precedes $b$ along $\epsvec$ or whether $b$
    precedes $a$. In the former case we set $o = 1$, in the latter case, we set
    $o = 2$.
\end{itemize}

Thus we define a \emph{$\IIpl$ move} as quadruple $(\epsvec, \phivec, w, o)$ as
above. If, for example, $w$ is irrelevant for given $\epsvec$, $\phivec$ and
$o$, then $(\epsvec, \phivec, +, o)$ and $(\epsvec, \phivec, -, o)$ yield the
same move and we regard them as the same combinatorial object.
The most complex case when $\zeta$ is in the unbounded face and $\epsvec =
\phivec$ is depicted at~\cref{f:zeta_complicated}.

\begin{figure}
\begin{center}
  \includegraphics{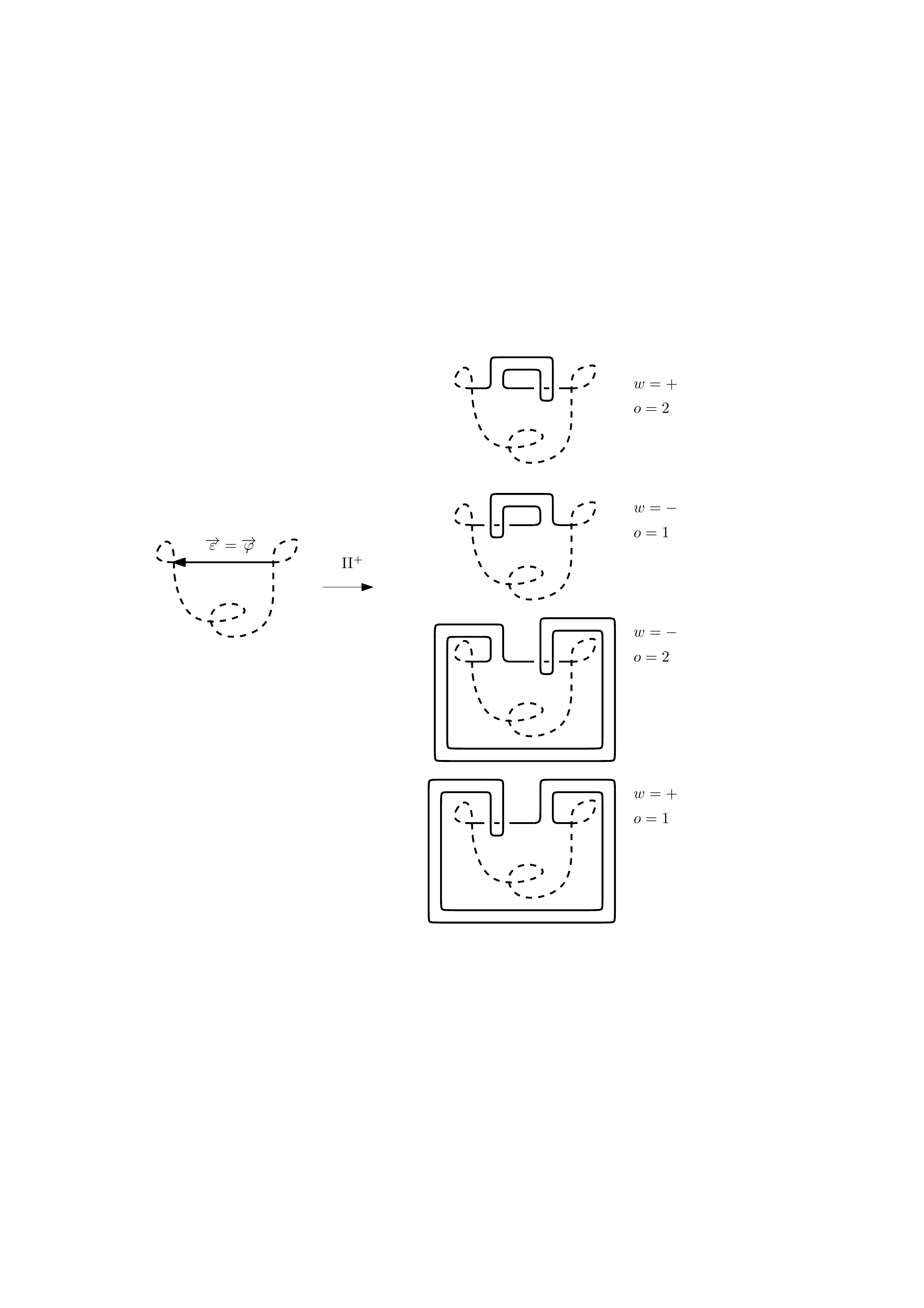}
\end{center}
\caption{Four possible outputs of an $\IIpl$ move depending on $w$ and $o$.}
  \label{f:zeta_complicated}
\end{figure}

In an exceptional case when $D = U$, $\epsvec$ and $\phivec$ represent
$U$ with a positive or negative orientation. This orientation has to be the
same for $\epsvec$ and $\phivec$.

\paragraph{\Ipl moves and \IIpl moves: Checking feasibility.} There is nothing
to do regarding checking feasibility for \Ipl and \IIpl moves. Every $\Ipl$ move
$(\epsvec, \sigma)$ is feasible. Similarly, every $\IIpl$ move $(\epsvec,
\phivec, w, o)$ is feasible subject to the condition $\epsvec$ and $\phivec$
belong to a same face.  
Thus the number of feasible \Ipl moves or \IIpl moves is at most
quadratic in $n$, and they can be enumerated in time $O(n^2 \log n)$ due to
encoding the labels of the crossings in binary.

\paragraph{\Ipl moves and \IIpl moves: Construction of $D(m)$.}
For a construction of $D(m)$ for a \Ipl move $m$, we can invert the process
explained for a \Imin move; see again~\cref{f:rm_Imin_move} (of course, in
this case, we need to set $n_1, n_2, n_3, n_4$ as concrete numbers depending on
the sign of $m$). We also need to treat the case $D = U$ separately. It is
similarly straightforward to treat \IIpl moves while $D = U$ is again an
exceptional case.

The only missing information, if we want to get $D(m)$ uniquely is to add the
names of the newly introduced crossings. Here, our convention depends on
the purpose. If we want to construct $D(m)$ for the algorithm, then we pick the
label of the new crossing as the first available integer in binary so that we
get $D(m)$ of correct size. On the other hand for the purpose of analysis of
the algorithm when we consider $V(D)$ as an abstract set, we assume that the
newly introduced crossings have fresh new labels that haven't been used
previously (in case that $D(m)$ is an intermediate step of some sequence of
Reidemeister moves on some diagram).

This also finishes the proof of~\cref{l:complexity_moves} by checking how
much which moves contribute to each item of the lemma.

\paragraph{A piece of notation for diagrams.} Here we extend our notation
$D(m)$ to several consecutive moves. 
We inductively define the notation $D(m_1, \dots, m_{k-1},
m_k) := D(m_1, \dots, m_{k-1})(m_k)$ provided that $m_k$ is a feasible
Reidemeister move in $D(m_1, \dots, m_{k-1})$. We also say that a sequence
$(m_1, \dots, m_k)$ is a \emph{feasible} sequence of Reidemeister moves for $D$
if $m_i$ is a feasible Redemeister move in $D(m_1, \dots, m_{i-1})$ for every
$i \in [k]$.

\subsection{Special and greedy moves revisited}
Let $X$ be a set of crossings in diagram $D$. We say that a move $m$
  \emph{avoids} $X$ if $m$ is 
\begin{itemize}
\item a $\Imin$ move removing a crossing $x \not \in X$, or
\item a $\IImin$ move removing crossings $x, y \not \in X$, or 
\item a $\III$ move affecting crossings $x, y, z \not \in X$, or
\item a $\Ipl$ move performed on an edge $\varepsilon$ such that no endpoint of
  $\varepsilon$ belongs to $X$, or
\item a $\IIpl$ move performed on edges $\epsvec$, $\phivec$ (possibly
  $\epsvec = \phivec$) such that no endpoint of any of these two edges
    belongs to $X$.
\end{itemize}

Let us recall from the introduction that given a set $S$ of crossings in a
diagram $D$, a feasible Reidemeister move $m$ is \emph{greedy} (with respect to
$S$) if it is a $\IImin$ move avoiding $S$. The definition of a special move
from the introduction can be equivalently reformulated so that $m$ is
\emph{special} (with respect to $S$) if it avoids crossings not in $S$, that is,
the set $V(D) \setminus S$.

Obviously, it may happen that some feasible move in $D$ is neither greedy nor
special with respect to $S$. Because of our algorithm
\sctt{Special\linebreak[0]{}Greedy}, we will be interested in sequences of moves
on $D$ such that each move is either greedy or special. However, the set $S$ may
vary while performing the special moves, thus we need to be a bit careful with
the definitions.

Given a move $m$ in $D$, we denote by $X(m)$ the following set of crossings in $D(m)$. 
\begin{itemize}
\item If $m$ is a $\III$ move, then $X(m) =
  \emptyset$.
\item If $m$ is a $\IImin$ move or a $\Imin$ move, then $X(m)$ is 
  the set of crossings removed by $m$.
\item If $m$ is a $\Ipl$ or a $\IIpl$ move, then $X(m)$ is the set of crossings
  introduced by $m$.
\end{itemize}
Let $\triangle$ denote the symmetric difference of two sets. We get the
following observation.
\begin{observation}
\label{o:Sm}
Let $m$ be a special or greedy move in a diagram $D$ with respect to a set $S$.  Let $S(m)$ be the set defined in the introduction when describing the
  \sctt{Special\linebreak[0]{}Greedy} algorithm. Then 
\begin{itemize}
\item
$S(m) = S$ if $m$ is greedy; while
\item
  $S(m) = S \triangle X(m)$ if $m$ is special.
\end{itemize}
\end{observation}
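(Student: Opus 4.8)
The plan is to verify the two claimed identities by a direct case analysis on the type of $m$, simply reconciling the definition of $S(m)$ recalled from the introduction with the definition of $X(m)$ given immediately above the observation. The only non-formal inputs are two facts that follow at once from the relevant definitions: a special $\IImin$ (resp.\ $\Imin$) move removes crossings that lie in $S$, and a $\Ipl$ or $\IIpl$ move introduces crossings carrying fresh labels, which therefore do not lie in $S$.

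For the first bullet, suppose $m$ is greedy. By definition a greedy move is a feasible $\IImin$ move avoiding $S$, so the first clause in the introduction's definition of $S(m)$ applies and gives $S(m) = S$ verbatim. (No $\III$ move is greedy, so nothing else needs to be checked here.)

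For the second bullet, suppose $m$ is special, and split into cases according to its type. If $m$ is a $\III$ move, the introduction gives $S(m) = S$, while $X(m) = \emptyset$, so $S \triangle X(m) = S = S(m)$. If $m$ is a $\IImin$ or a $\Imin$ move, then, being special, it removes only crossings in $S$; writing $X(m)$ for the (one or two) removed crossings we thus have $X(m) \subseteq S$, whence $S \triangle X(m) = S \setminus X(m)$, which is exactly the set the introduction assigns to $S(m)$. Finally, if $m$ is a $\Ipl$ or a $\IIpl$ move, the crossings introduced by $m$ are fresh and hence disjoint from $S$; writing $X(m)$ for this set of new crossings we get $S \triangle X(m) = S \cup X(m)$, again matching the introduction's definition of $S(m)$.

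I do not anticipate any genuine obstacle: the observation is essentially a bookkeeping reconciliation of two notations. The only steps worth an explicit sentence are why $X(m) \subseteq S$ in the two removing cases (use the definition of \emph{special}) and why $X(m) \cap S = \emptyset$ in the two introducing cases (use the fresh-label convention adopted when constructing $D(m)$ for a $\Ipl$ or $\IIpl$ move), since those are precisely the places where a symmetric difference collapses to a difference or to a union.
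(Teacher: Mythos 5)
Your proof is correct and takes essentially the same approach as the paper: a direct case analysis on the type of $m$, reconciling the introduction's definition of $S(m)$ with the definition of $X(m)$ via the two containments $X(m)\subseteq S$ (special removing moves) and $X(m)\cap S=\emptyset$ (introducing moves). The paper states this more compactly by observing that the definition of $X(m)$ amounts to $X(m)=S\triangle S(m)$ and then applying symmetric-difference algebra, but the underlying facts checked are identical, and your version simply makes them explicit.
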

\begin{proof}
  The first item is exactly the definition of $S(m)$ for a greedy move. For the
  second item, when compared with the definition in the introduction, we have
  defined $X(m)$ as $S \triangle S(m)$ from which the observation follows.
\end{proof}

In the remainder of the paper, we plan to use~\cref{o:Sm} as an equivalent definition
of $S(m)$. The advantage of this reformulation is that $X(m)$ does not depend
on $S$.

\paragraph{Feasible moves for pairs.} Now we aim to extend our definition of a
feasible sequence $(m_1, \dots, m_\ell)$ of Reidemeister moves in a
diagram $D$ to a pair $(D, S)$ where $S$ is a set of crossing of $D$ so
that we allow only special or greedy moves. For this, we say that $m$
is a \emph{feasible} move for a pair $(D,S)$ if it is a move in $D$
special or greedy with respect to $S$. If $m$ is feasible for $(D,S)$,
then we also use the notation $(D,S)(m)$ for the resulting pair after
performing the move, that is, $(D, S)(m) := (D(m), S(m))$.  Now, we
inductively define the notation $(D,S)(m_1, \dots, m_{\ell-1}, m_\ell)
:= (D,S)(m_1, \dots, m_{\ell-1})(m_\ell)$ provided that $m_\ell$ is a
feasible Reidemeister move in $(D,S)(m_1, \dots, m_{\ell-1})$. We also
say that a sequence $(m_1, \dots, m_\ell)$ is a \emph{feasible}
sequence of Reidemeister moves for $(D,S)$ if $m_i$ is a feasible move
in $(D,S)(m_1, \dots, m_{i-1})$ for every $i \in [\ell]$. It also may
be convenient to write $(D,S)(m_1, \dots, m_{\ell-1}, m_\ell)$
directly as a pair $(D', S')$. In this case, we get $D' = D(m_1,
\dots, m_\ell)$ which follows immediately from the earlier definition
of $D(m_1, \dots, m_\ell)$ and we define $S(m_1, \dots, m_{\ell})$ as
$S'$ above. From the definition above, we also get $S(m_1, \dots,
m_{\ell}) = S(m_1, \dots, m_{\ell - 1})(m_\ell)$. However, we have to
be a bit careful about how to interpret this formula: $m_\ell$ is regarded
here as a move in $D(m_1, \dots, m_{\ell-1})$ special or greedy with
respect to $S(m_1, \dots, m_{\ell- 1})$ although $D(m_1, \dots,
m_{\ell-1})$ does not really appear in the notation. We get the
following extension of~\cref{o:Sm}.

\begin{lemma}
  \label{l:Smulti}
  Let $(m_1, \dots, m_\ell)$ be a feasible sequence for a pair $(D, S)$ where
  $D$ is a diagram and $S$ is a set of crossings. Then
  \[S(m_1, \dots, m_{\ell}) = 
  S \triangle X(m_{i_1}) \triangle \cdots \triangle X(m_{i_s})\]
  where $(i_1, \dots, i_s)$ is the subsequence of $(1, \dots, \ell)$ consisting
  of exactly the indices $i_j$ such that $m_{i_j}$ is special in $D(m_1,
  \dots, m_{i_j -1})$ with respect to $S(m_1,
    \dots, m_{i_j -1})$.
\end{lemma}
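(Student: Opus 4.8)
The plan is to prove \cref{l:Smulti} by induction on $\ell$, using \cref{o:Sm} as the engine for the inductive step. The base case $\ell = 0$ is immediate: the empty sequence has no special moves, the subsequence $(i_1,\dots,i_s)$ is empty, and $S(\,) = S$ by convention, so both sides equal $S$.

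For the inductive step, suppose the claim holds for feasible sequences of length $\ell-1$. Given a feasible sequence $(m_1,\dots,m_\ell)$ for $(D,S)$, write $(D',S') := (D,S)(m_1,\dots,m_{\ell-1})$, so that $S' = S(m_1,\dots,m_{\ell-1})$, and by the induction hypothesis
\[
  S' = S \triangle X(m_{i_1}) \triangle \cdots \triangle X(m_{i_r}),
\]
where $(i_1,\dots,i_r)$ is the subsequence of $(1,\dots,\ell-1)$ picking out exactly the moves that are special at their respective stages. By the definition recalled just before the lemma statement, $S(m_1,\dots,m_\ell) = S'(m_\ell)$, where $m_\ell$ is regarded as a move in $D'$ that is greedy or special with respect to $S'$. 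Now apply \cref{o:Sm} to the move $m_\ell$ in the pair $(D',S')$: if $m_\ell$ is greedy with respect to $S'$, then $S'(m_\ell) = S'$, and since $m_\ell$ contributes no index to the subsequence, the right-hand side of the claimed formula for length $\ell$ is identical to that for length $\ell-1$, so we are done. If $m_\ell$ is special with respect to $S'$, then \cref{o:Sm} gives $S'(m_\ell) = S' \triangle X(m_\ell)$, and $m_\ell$ contributes the index $\ell$ as the last term of the subsequence; substituting the induction hypothesis for $S'$ yields exactly
\[
  S(m_1,\dots,m_\ell) = S \triangle X(m_{i_1}) \triangle \cdots \triangle X(m_{i_r}) \triangle X(m_\ell),
\]
as required. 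This closes the induction.

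The only subtlety worth spelling out — and what I expect to be the main (mild) obstacle — is the careful bookkeeping of \emph{which set} each move is special or greedy with respect to. The notion of a move being special or greedy is relative to a set of crossings, and that set evolves along the sequence; the definition of a feasible sequence for $(D,S)$ is set up precisely so that $m_i$ is required to be special or greedy with respect to $S(m_1,\dots,m_{i-1})$, the set produced after the first $i-1$ moves. So when I invoke \cref{o:Sm} in the inductive step, I must invoke it with $S$ replaced by $S' = S(m_1,\dots,m_{\ell-1})$, and I must note that the status (greedy vs.\ special) of $m_\ell$ relevant to the subsequence $(i_1,\dots,i_s)$ in the lemma's statement is exactly its status with respect to $S'$ — which matches, by definition. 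Once this is observed, no genuine difficulty remains: the proof is a one-line application of \cref{o:Sm} inside an induction, and the symmetric difference $\triangle$ being associative and commutative makes the reassembly of terms automatic. I would present it in essentially the three short paragraphs above (base case, inductive step split into the greedy and special subcases, and a sentence reconciling the indexing).
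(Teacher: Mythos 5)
Your proof is correct and follows essentially the same route as the paper's: induction on $\ell$, peeling off the last move $m_\ell$ and applying \cref{o:Sm} relative to $S' = S(m_1,\dots,m_{\ell-1})$, splitting into the greedy and special subcases. The only cosmetic difference is that the paper starts the induction at $\ell = 1$ rather than $\ell = 0$.
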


\begin{proof}
The proof follows by induction in $\ell$. If $\ell = 1$, then it directly
  follows from~\cref{o:Sm}.

If $\ell > 1$, let $(i_1, \dots, i_s)$ be the subsequence of $(1, \dots,
\ell-1)$ consisting of exactly the indices $i_j$ such that $m_{i_j}$ is special
in $D(m_1, \dots, m_{i_j -1})$ with respect to $S(m_1, \dots, m_{i_j -1})$. Let
us distinguish whether $m_{\ell}$ is greedy or special (in $D(m_1, \dots,
m_{\ell-1})$ with respect to $S(m_1, \dots, m_{\ell-1})$).

If $m_\ell$ is greedy, then by induction and by~\cref{o:Sm} applied to $m_\ell$
we get
  \[S(m_1, \dots, m_{\ell}) = S(m_1, \dots, m_{\ell-1})(m_\ell) =   (S \triangle X(m_{i_1}) \triangle \cdots
  \triangle X(m_{i_s}))(m_\ell) = S \triangle X(m_{i_1}) \triangle \cdots
  \triangle X(m_{i_s})\] as required.

  Similarly, if $m_\ell$ is special, we get
  \[S(m_1, \dots, m_{\ell}) =   (S \triangle X(m_{i_1}) \triangle \cdots
  \triangle X(m_{i_s}))(m_\ell) = S \triangle X(m_{i_1}) \triangle \cdots
  \triangle X(m_{i_s}) \triangle X(m_\ell)\] as required.
\end{proof}

As a consequence, we also get the following description of $S(m_1, \dots,  m_\ell)$.

\begin{lemma}\label{lem:special_set}
  Let $D$ be a diagram of a knot, $S$ be a set of crossings in $D$ and $(m_1,
  \dots, m_\ell)$ be a feasible sequence for the pair $(D,S)$. Then
  $S(m_1, \dots,  m_\ell)$ contains exactly the crossings in $D(m_1, \dots,
  m_\ell)$ which either already belong to $S$ or do not belong to $D$.
\end{lemma}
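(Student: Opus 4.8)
The plan is to derive this as a direct corollary of Lemma \ref{l:Smulti}, which already gives an explicit formula for $S(m_1,\dots,m_\ell)$ as a symmetric difference $S \triangle X(m_{i_1}) \triangle \cdots \triangle X(m_{i_s})$ over the special moves in the sequence. So the task reduces to understanding which crossings of $D(m_1,\dots,m_\ell)$ survive this symmetric difference. First I would fix a crossing $c$ present in $D(m_1,\dots,m_\ell)$ and analyze its membership in $S(m_1,\dots,m_\ell)$ by tracking how often $c$ appears among the sets $X(m_{i_1}),\dots,X(m_{i_s})$. The key combinatorial observation is that each $X(m_{i_j})$ consists either of crossings \emph{removed} by $m_{i_j}$ (for special $\Imin$ and $\IImin$ moves) or crossings \emph{introduced} by $m_{i_j}$ (for $\Ipl$ and $\IIpl$ moves), while $X(m) = \emptyset$ for $\III$ moves; and a crossing, once removed by a move, never reappears, while a crossing introduced by a move carries a fresh label never used before (by the labelling convention from Section~\ref{s:r_moves}).

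The main step is a parity count. I would argue that for a crossing $c$ in $D(m_1,\dots,m_\ell)$, the number of indices $j$ with $c \in X(m_{i_j})$ is: zero if $c \in V(D)$ and $c$ is never touched by a special removal/introduction among the $m_{i_j}$; and exactly one if $c \notin V(D)$, namely the single $\Ipl$ or $\IIpl$ move that introduced $c$ (it cannot later be removed by a \emph{special} move without disappearing from the final diagram — here one must check that if a freshly introduced crossing $c$ is later removed, it is gone from $D(m_1,\dots,m_\ell)$, contradicting $c$ being present). The case $c \in V(D)$ but $c$ touched by some special move needs the observation that a crossing of the original $D$ can only be \emph{removed} (never introduced, since introduced crossings get fresh labels), so it appears in at most one $X(m_{i_j})$, and if it does, it is absent from the final diagram — again contradiction. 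Hence a crossing $c$ present in the final diagram appears in the symmetric difference an even number of times (zero) if $c \in S \subseteq V(D)$, giving $c \in S(m_1,\dots,m_\ell)$; appears zero times if $c \in V(D)\setminus S$, giving $c \notin S(m_1,\dots,m_\ell)$; and appears exactly once if $c \notin V(D)$, which means $c$ was introduced by a $\Ipl$ or $\IIpl$ move — but such a move, being one of the $m_{i_j}$, is special, and I must finally check that \emph{every} $\Ipl$ or $\IIpl$ move in a feasible sequence for $(D,S)$ is automatically special (a $\IIpl$ or $\Ipl$ move is never greedy, since greedy means $\IImin$; so it is special by the dichotomy). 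Thus $c$ appearing once puts it in $S(m_1,\dots,m_\ell)$, matching the claim that crossings not in $D$ are included.

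Assembling: $S(m_1,\dots,m_\ell) \cap D(m_1,\dots,m_\ell)$ equals $\{c \in D(m_1,\dots,m_\ell) : c \in S\} \cup \{c \in D(m_1,\dots,m_\ell) : c \notin V(D)\}$, which is exactly the statement. The one subtlety I would be most careful about is the claim ``once removed, never reappears'' and ``freshly introduced, unique label'' — these rely precisely on the abstract-label convention adopted in Section~\ref{s:r_moves} for the analysis (as opposed to the relabelling used in the implementation), and without them the parity argument collapses. I expect that to be the only real obstacle; the rest is bookkeeping with symmetric differences. A clean way to present it is: for each $c$, let $r(c)$ be the number of $j \le s$ with $c \in X(m_{i_j})$; show $r(c) \in \{0,1\}$ always, $r(c) = 1$ iff ($c \notin V(D)$ and $c$ present at the end) or ($c \in V(D)$ and $c$ absent at the end), then intersect with $D(m_1,\dots,m_\ell)$ and with $S$'s status, and read off the four cases.
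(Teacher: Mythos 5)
Your proof is correct and takes essentially the same approach as the paper: both deduce the lemma directly from Lemma~\ref{l:Smulti} by tracking, for a crossing $x$ of $D(m_1,\dots,m_\ell)$, which sets $X(m_{i_j})$ it can belong to, using the fresh-label convention and the observation that $\Ipl/\IIpl$ moves are never greedy. One small slip in your closing ``clean way'' restatement: the claim that $r(c)\in\{0,1\}$ for \emph{every} crossing $c$ is false in general (a crossing introduced by a special $\Ipl/\IIpl$ move and later removed by a special $\Imin/\IImin$ move has $r(c)=2$); this does not affect the lemma, since you correctly restrict attention to crossings present in $D(m_1,\dots,m_\ell)$ in the main argument, for which $r(c)\in\{0,1\}$ does hold.
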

\begin{proof}%[Sketch.]
  Let $x$ be a crossing in $D(m_1, \dots, m_\ell)$. If $x$ does not
  belong to any of $X(m_{i_1}), \dots, X(m_{i_s})$ where $(i_1, \dots,
  i_s)$ is a sequence as in~\cref{l:Smulti}, then $x$ necessarily
  belongs to $D$ and in addition it belongs to $S$ if and only if it
  belongs to $S(m_1, \dots, m_\ell)$ by~\cref{l:Smulti}, which is
  exactly what we need.
  
  If $x$ belongs to some $X(m_{i_j})$, then $m_{i_j}$ is necessarily a $\Ipl$ or
  a $\IIpl$ move, otherwise $x$ would be removed by $m_{i_j}$ and according to
  our conventions it cannot be reintroduced later on with the same label (but we
  know that $x$ belongs to $D(m_1, \dots, m_\ell)$). We also get that in this
  case $m_{i_j}$ is unique as $x$ cannot be added to the diagram twice.  Thus
  $x$ does not belong to $D$ while it belongs to $S(m_1, \dots, m_\ell)$ due
  to~\cref{l:Smulti} which is what we need.
\end{proof}

  The following lemma shows that if the diagram can be untangled with defect
  $d$, then there exists a choice of $S$ at step 0 of the algorithm
  \sctt{Special\linebreak[0]{}Greedy$(D, k)$} that will allow the algorithm to
  find an untangling sequence.

\begin{lemma}
\label{l:step0}
Let $D$ be a diagram of a knot and let $(m_1, \dots, m_{\ell})$ be a feasible
  sequence of Reidemeister moves in $D$. Then there is a set of crossings $S$
  in $D$ such that $(m_1,
  \dots, m_{\ell})$ is feasible for $(D,S)$.

In addition, if this sequence induces an untangling with defect at most $k$,
  then $|S| \leq 3k$.
\end{lemma}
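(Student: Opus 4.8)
The plan is to construct $S$ explicitly as the set of crossings that must be treated as ``special'' if we want the given sequence $(m_1,\dots,m_\ell)$ to run as a feasible sequence for $(D,S)$. The key observation is that a feasible sequence for $(D,S)$ is exactly a feasible sequence in $D$ in which every move is, at the moment it is performed, either greedy or special with respect to the current special set. So I would go through the moves and decide, for each one, whether I want it to be greedy or special, and then let $S$ be forced by those choices together with Observation~\ref{o:Sm} (equivalently, Lemma~\ref{l:Smulti}).

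First I would fix the choice: a move $m_j$ should be \emph{greedy} precisely when it is a $\IImin$ move that I am willing to allow as greedy, and \emph{special} otherwise. The natural choice is to declare $m_j$ special exactly when $m_j$ is not a $\IImin$ move, i.e.\ $m_j$ has positive weight (it is a $\Imin$, $\III$, $\Ipl$ or $\IIpl$ move). For such a special move, being feasible for the pair requires that $m_j$ avoids all non-special crossings, i.e.\ every crossing it affects (or every endpoint of the edges it acts on) lies in the current special set $S(m_1,\dots,m_{j-1})$. Reading this backwards through Lemma~\ref{lem:special_set}, a crossing of $D(m_1,\dots,m_{j-1})$ lies in $S(m_1,\dots,m_{j-1})$ iff it is a crossing of the original $D$ lying in $S$, or it was introduced by an earlier $\Ipl$/$\IIpl$ move. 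Crossings introduced by earlier moves are automatically in the special set; so the constraint on $S$ is only about the \emph{original} crossings of $D$: for each special move $m_j$, all original crossings of $D$ that are still present and that are affected by $m_j$ must be put into $S$. Define $S$ to be exactly the union, over all $j$ with $m_j$ special, of the set of original crossings of $D$ affected by $m_j$ (using ``affected by'' in the sense appropriate to the move type: the crossings removed by a $\Imin$/$\III$ move, all three crossings of a $\III$ move, the endpoints of the edge of a $\Ipl$ move, the endpoints of the two edges of a $\IIpl$ move). I would then verify by induction on $j$ that with this $S$, each $m_j$ is indeed feasible for the pair: the greedy $\IImin$ moves are feasible because a $\IImin$ move is always allowed as greedy if it avoids the current special set, and no crossing it removes is forced into $S$ (it is not affected by any special move — here one has to be slightly careful, but a greedy $\IImin$ move by definition removes two crossings, and we simply did not put those crossings into $S$ unless some special move touched them; if a later special move does touch an original crossing that a greedy move removed, there is no conflict because the greedy move is performed first and removes it, so it is never in the diagram when the special move happens — I should double-check the ordering here, since $S$ is built from \emph{all} special moves regardless of position). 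Actually the cleaner inductive invariant to maintain is: at step $j$, $S(m_1,\dots,m_{j-1})$ consists of all original crossings in $D(m_1,\dots,m_{j-1})$ that are affected by some special move $m_{j'}$ with $j' \ge j$ (together with all introduced crossings still present); this is preserved by both greedy and special moves and immediately gives feasibility of $m_j$.

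For the cardinality bound: suppose the sequence induces an untangling with defect at most $k$. By Lemma~\ref{l:defect}, the sum of the weights $w(m_j)$ over all moves is $\defe(\D) \le k$. Every special move $m_j$ has weight $w(m_j) \ge 1$, so there are at most $k$ special moves. Each special move contributes at most $3$ original crossings to $S$ (a $\III$ move affects $3$ crossings; a $\IIpl$ move acts on two edges hence at most $4$ endpoints — so I should take the bound $3k$ with a little more care, or note that a $\IIpl$ move has weight $4$ and contributes at most $4$, while $\Ipl$ has weight $3$ and contributes at most $2$, $\III$ has weight $2$ and contributes at most $3$, $\Imin$ has weight $1$ and contributes at most $1$; in every case the number of original crossings contributed is at most $\tfrac{3}{2}w(m_j)$ except possibly for $\IImin$-adjacent bookkeeping — the paper states $|S|\le 3k$, and the honest bound I would prove is: contribution per move $\le 3$ for $\III$, $\le 2$ for $\Ipl$, $\le 1$ for $\Imin$, and for $\IIpl$ it is $\le 4$ but weight $4$, so summing gives $|S| \le \max(3,\,2,\,1,\,1)\cdot(\text{number of special moves}) $ is too crude; instead $|S| \le 3 \cdot \#\{\III \text{ moves}\} + 2\cdot\#\{\Ipl\} + 1\cdot\#\{\Imin\} + 4\cdot\#\{\IIpl\} \le 2\sum w(m_j) \le 2k \le 3k$ once one checks $3 \le 2\cdot 2$, $2 \le 2\cdot 3$, $1 \le 2 \cdot 1$, $4 \le 2\cdot 4$, all true, giving even $|S|\le 2k$, a fortiori $|S|\le 3k$). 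So the bound follows directly from the weight accounting in Lemma~\ref{l:defect}.

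The main obstacle I anticipate is the bookkeeping in the inductive step showing feasibility — specifically, getting the invariant right so that a \emph{greedy} $\IImin$ move is genuinely greedy with respect to the \emph{current} special set (not the original $S$), and confirming that crossings which a special move ``affects'' are all present in the diagram at that time (otherwise putting them into $S$ would be vacuous or, worse, the move would not be the move we think it is). Handling the exceptional small-diagram cases for $\Imin$ (Figure~\ref{f:dl}) and the edge-endpoint conventions for $\Ipl$/$\IIpl$ moves is the other fiddly point, but these are covered by the combinatorial definitions already set up in Section~\ref{s:r_moves}, so no new ideas are needed.
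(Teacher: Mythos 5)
Your construction of $S$ coincides with only the first piece, $S'$, of the paper's set $S = S' \cup S'' \cup S'''$: you take exactly the original crossings of $D$ touched by a non-\IImin move. This is not enough, and the gap you flag parenthetically (``I should double-check the ordering here'') is real and fatal. The problem is a $\IImin$ move $m_j$ that removes a crossing $x$ which was touched by an \emph{earlier} special move $m_{j'}$ (say a $\III$ move, or the endpoint of an edge on which an earlier $\Ipl$/$\IIpl$ move was performed). Then $x \in S$, hence by Lemma~\ref{lem:special_set} $x$ is still in $S(m_1,\dots,m_{j-1})$, so $m_j$ is not greedy. For $m_j$ to be special, the \emph{other} crossing $y$ removed by $m_j$ must also lie in the current special set; but your $S$ gives no reason for $y \in S$, so $m_j$ is neither greedy nor special and the sequence is not feasible for $(D,S)$. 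This is exactly what the paper's $S''$ repairs. A second instance of the same phenomenon occurs when a $\IImin$ move removes one crossing $z$ introduced by an earlier $\Ipl$/$\IIpl$ move (which is automatically special, again by Lemma~\ref{lem:special_set}) together with an original crossing $y$; again $y$ must be forced into $S$, which is the paper's $S'''$.

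Your proposed ``cleaner inductive invariant'' does not hold either: you claim $S(m_1,\dots,m_{j-1})$ equals the set of original crossings present and touched by a special move $m_{j'}$ with $j' \ge j$ (plus introduced crossings), but a crossing touched only by an earlier special move and subsequently removed by a \IImin\ move belongs to $S$ (hence to $S(m_1,\dots,m_{j-1})$ by Lemma~\ref{lem:special_set}) while not being touched by any special move with index $\ge j$. So the invariant fails in precisely the configuration above, and ``is preserved by greedy moves'' does not go through. Your declaration that every \IImin\ move is greedy is a choice you are not free to make; which \IImin\ moves end up special is forced by $S'$ and by the $\Ipl$/$\IIpl$ history, and you must enlarge $S$ accordingly. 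Once $S''$ and $S'''$ are added, the bound has to be rechecked: the tidy $|S| \le 2k$ you derive for $S'$ alone no longer holds, and one gets the paper's $|S| \le 3k$ by the accounting $c^+(m_i) + c'(m_i) \le \tfrac{3}{2} w(m_i)$ together with $|S''| \le |S'|$ and $|S'''| \le \sum c^+(m_i)$.
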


\begin{proof}
  First, we put a crossing $x$ of $D$ into an auxiliary set $S'$ in the following cases:
\begin{itemize}
\item If $x$ is removed by a $\Imin$ move in the sequence.
\item If $x$ is affected by some $\III$ move in the sequence.
\item If there is a $\Ipl$ move or a $\IIpl$ move performed on an edge with
  endpoint $x$.
\end{itemize}
  Then we define $S''$ as a set of crossings $x$ in $V(D) \setminus S'$
  such that there is $y \in S'$ such that $x$ and $y$ are simultaneously
  removed by a
  $\IImin$ move in the sequence $(m_1, \dots, m_{\ell})$ and $S'''$ as a set of
  crossings $x$ in $V(D) \setminus S'$ such that $x$ is removed by a
    $\IImin$ move in the sequence $(m_1, \dots, m_{\ell})$ simultaneously with
    some crossing $y$ not belonging to $V(D)$ (that is, with $y$ which is
    introduced by some $\Ipl$ or $\IIpl$ move before applying the move removing
    $x$ and $y$). See~\cref{f:untangling_S} for an example how may $S'$,
    $S''$ and $S'''$ look like.
We set $S := S' \cup S'' \cup S'''$ and we want to check that $(m_1,
    \dots, m_{\ell})$ is feasible with respect to $S$. 

  \begin{figure}
  \begin{center}
    \includegraphics{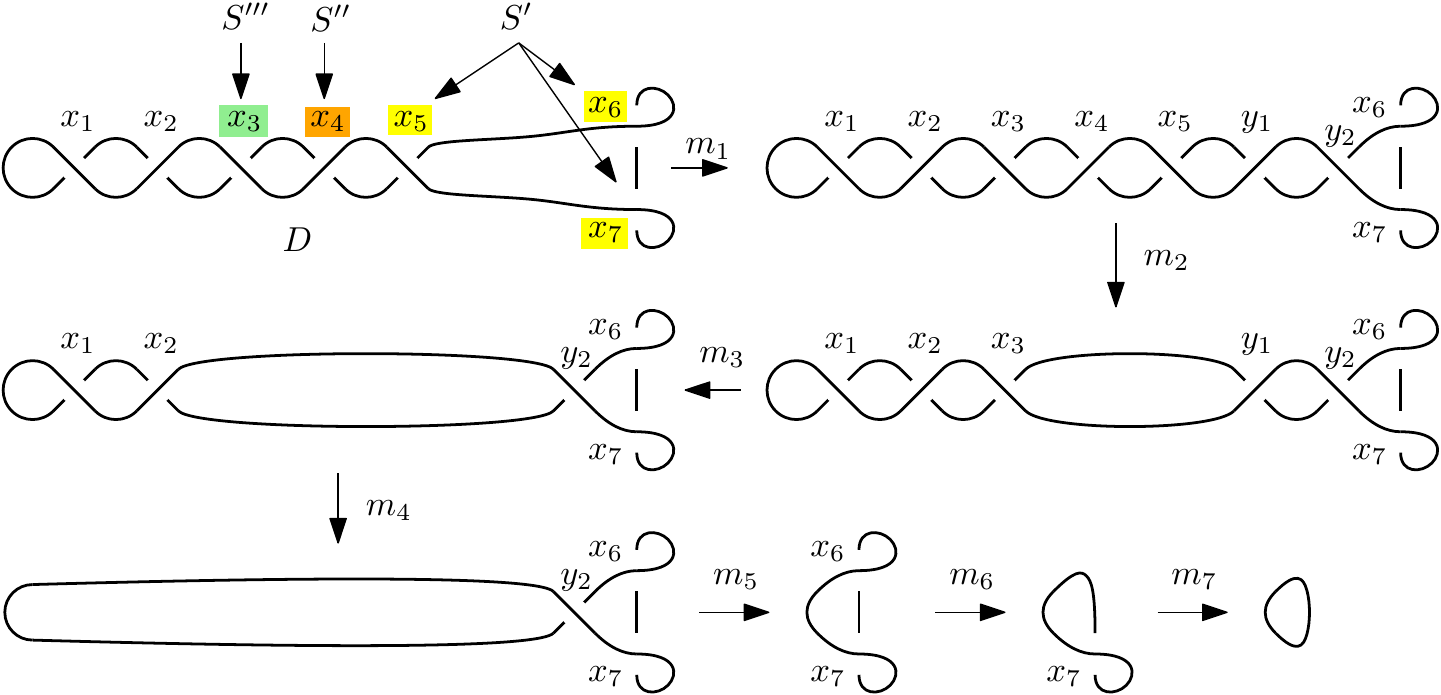}
  \end{center}
\caption{Consider a slightly artificial untangling of the diagram $D$ in the
  top left corner, given by the sequence $(m_1, \dots, m_7)$. Then $x_5$, $x_6$
  and $x_7$ belong to $S'$ because they are endpoints of edges on which the
  $\IIpl$ move $m_1$ is performed (another reason why $x_6$ and $x_7$ are in
  $S'$ is that they are removed by $\Imin$ moves $m_6$ and $m_7$). Consequently $x_4
  \in S''$ because it is removed simultaneously with $x_5 \in S'$ by a $\IImin$
  move $m_2$. We also get $x_3 \in S'''$ because it is removed simultaneously
  with $y_1 \not \in V(D)$ by a $\IImin$ move $m_3$. The move $m_4$ is the only
  greedy move in this sequence, all other moves are special.}
  \label{f:untangling_S}
\end{figure}

  In other words, we want to check that for every $i \in [\ell]$, the move
  $m_i$ is either special or greedy in $D(m_1, \dots, m_{i-1})$ with respect to
  $S(m_1, \dots, m_{i-1})$. For simplicity of notation, let $D_{i-1} := D(m_1,
  \dots, m_{i-1})$ and $S_{i-1} := S(m_1, \dots, m_{i-1})$. We will distinguish
  several cases according to the type of $m_i$.

  If $m_i$ is a $\Imin$ move removing a crossing $x$ in $D_{i-1}$ then either
  $x$ belongs to $D$ or it has been introduced by some previous $\Ipl$ or
  $\IIpl$ move (see~\cref{lem:special_set}). If $x$ belongs to $D$ then
  it belongs to $S'$ by the first condition of the definition of $S'$,
  consequently $x$ belongs to $S$ and $S_{i-1}$ as well by the definition of
  $S_{i-1}$. If $x$ has been introduced by some previous $\Ipl$ or $\IIpl$ move,
  then it belongs to $S_{i-1}$ by the definition of $S_{i-1}$ (newly introduced
  vertex is always added). In particular $m_i$ is special with respect to $S$.

  If $m_i$ is a $\III$ move then by analogous reasoning, using the second
  condition in the definition of $S'$, we deduce that every
  crossing affected by $m_i$ belongs to $S_{i-1}$. Therefore $m_i$ is special
  as well.

  If $m_i$ is a $\Ipl$ or a $\IIpl$ move, let $E$ be the set of endpoints in
  $D_{i-1}$ of the edges on which $m_i$ is performed. Again by similar
  reasoning, using the third condition in the definition of $S'$, we get that
  all crossings in $E$ belong to $S_{i-1}$. Therefore $m_i$ is special as well.

 The last case is to consider when $m_i$ is a $\IImin$ move. Let us say that $m_i$
  removes crossings $x$ and $y$. If any of these crossings does not belong to
  $D$, then it belongs to $S_{i-1}$ by~\cref{lem:special_set} and the
  other crossing either does not belong to $D$ as well which implies that it
  belongs to $S_{i-1}$, or it belongs to $D$ which implies that it belongs to
  $S'$ or $S'''$ by the definition of $S'''$. Altogether the other crossing
  belongs to $S_{i-1}$ in each case and $m_i$ is special. Thus we may assume
  that both $x$ and $y$ belong to $D$. If one of them belongs to $S'$ then the
  other one belongs to $S'$ or $S''$ and consequently $m_i$ is special. If none
  of them belongs to $S'$ then none of them belongs to $S'' \cup S'''$ by the
  definition of $S''$ and $S'''$, thus none of them belongs to $S$. Therefore
  $m_i$ is greedy in this case.

 It remains to bound the size of $S$. Let $c^+(m_i)$ be the number of newly
 created crossings by a move $m_i$ and let $c'(m_i)$ be the number of crossings
  inserted into $S'$ because of $m_i$. Explicitly, if $m_i$ is a $\Imin$ move,
  then we count the crossing removed by $m_i$ (if present in $D$) that is
  $c'(m_i) \leq 1$; if $m_i$ is a $\IImin$ move there is no such crossing, thus
  $c'(m_i) = 0$; if $m_i$ is a $\III$ move then there are at most three such
  crossings affected by $m_i$, thus $c'(m_i) \leq 3$; if $m_i$ is a $\Ipl$ move
  then there are at most two such crossings as the endpoints of the edge on
  which is $m_i$ performed, thus $c'(m_i) \leq 2$; and if $m_i$ is a $\IIpl$
  move then there are at most four such endpoints thus $c'(m_i) \leq 4$.
  Altogether, by considering the values $c^+(m_i)$, we get that $c^+(m_i) +
  c'(m_i) \leq \frac32 w(m_i)$ where $w(m_i)$ is the weight defined before
 ~\cref{l:defect}.

On the other hand, we get the following inequalities:
\[
 |S'| \leq \sum_{i=1}^\ell c'(m_i)
\]
by the definition of $c'(m_i)$;
\[
 |S''| \leq |S'| \leq \sum_{i=1}^\ell c'(m_i)
\]
by the definition of $S''$ and the previous inequality; and
\[
 |S'''| \leq \sum_{i=1}^\ell c^+(m_i)
\]
by the definition of $c^+(m_i)$ and $S'''$. Altogether, we get
\[
  |S| \leq |S'| + |S''| + |S'''| \leq \sum_{i=1}^\ell 2(c'(m_i) + c^+(m_i))
  \leq \sum_{i=1}^\ell 3 w(m_i) \leq 3k
\]
where the last inequality uses~\cref{l:defect}.
\end{proof}

We finish this section by stating the following theorem that is the key ingredient
for the proof of correctness of our \sctt{Special\linebreak[0]{}Greedy$(D, k)$}
algorithm. Namely, it shows that we cannot make wrong choice in Step 2 of the
algorithm.

\begin{theorem}
\label{t:swappable}
  Let $D$ be a diagram of a knot and $S$ be a set of crossings in $D$. Let $(m_1,
  \dots, m_\ell)$ be a feasible sequence of Reidemeister moves for $(D,S)$, inducing
  an untangling of $D$ with defect $k$.
  Assume that there is a greedy move $\widetilde m$ in $D$
  with respect to $S$. Then there is a feasible sequence of
  Reidemeister moves for $(D,S)$ starting with $\widetilde m$ and inducing an
    untangling of $D$ with defect $k$.
\end{theorem}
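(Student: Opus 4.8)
The plan is to take the given feasible sequence $(m_1,\dots,m_\ell)$ for $(D,S)$ and "bubble" the greedy move $\widetilde m$ to the front, one transposition at a time, while controlling the defect. Let $\widetilde m = \{a,b\}_{\IImin}$ be the greedy move, so $a,b\notin S$. If $\widetilde m$ already equals $m_1$ (as a combinatorial move, i.e.\ as the pair $\{a,b\}$) there is nothing to do, so assume otherwise. The key observation is that $\widetilde m$ is available in $D$, meaning the circle $c(\widetilde m)$ is present in $D$ with exactly the crossings $a,b$ on it. I would track how the arcs forming $c(\widetilde m)$, together with the crossings $a$ and $b$, survive as we perform $m_1, m_2, \dots$: since $a,b\notin S$ and every $m_i$ is special or greedy, the only way one of $a,b$ can be removed is by a greedy $\IImin$ move or, for $a$ or $b$ being affected, never by a special move (special moves only touch $S$). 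Let $j$ be the least index such that $m_j$ ``interferes'' with $\widetilde m$ — concretely, $m_j$ removes $a$ or $b$, or $m_j$ is a move whose circle/finger crosses $c(\widetilde m)$ in a way that destroys the bigon. I want to argue that for all $i<j$, the move $m_i$ commutes with $\widetilde m$: performing $\widetilde m$ first and then $m_i$ on $D(\widetilde m)$ (or $m_i$ then $\widetilde m$) yields the same diagram, and moreover $m_i$ stays feasible for the appropriate pair, because $m_i$ is localized away from the small disk bounded by $c(\widetilde m)$ (or vice versa), and $\widetilde m$ does not change the combinatorial data $m_i$ depends on.

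The main case analysis is on what $m_j$ is. If $m_j$ is itself the greedy move $\widetilde m$ (same pair $\{a,b\}$), then we are done after commuting $m_1,\dots,m_{j-1}$ past it: the sequence $(\widetilde m, m_1,\dots,m_{j-1}, m_{j+1},\dots,m_\ell)$ works, and its defect is unchanged since we only permuted moves (by Lemma~\ref{l:defect}, defect is the sum of weights, permutation-invariant). If $m_j$ removes exactly one of $a,b$ — say $m_j$ is a greedy $\IImin$ move $\{a,c\}$ or a $\Imin$ move $\{a\}$ — this cannot happen for a $\Imin$ move unless $a\in S$ (it is special), contradiction; for a greedy $\IImin$ $\{a,c\}$, I would use the standard picture: two bigons sharing the crossing $a$ force a specific local configuration, and one checks that after doing $\widetilde m=\{a,b\}$ first, there is still a feasible way to remove the relevant crossings with the same total weight. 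The subtle subcase is when $m_j$ is a $\IIpl$ or $\Ipl$ move (special, hence performed on edges inside $S$) whose finger or loop is pushed \emph{through} the disk of $c(\widetilde m)$: here $\widetilde m$ and $m_j$ genuinely interact, but since all endpoints of the edges of $m_j$ lie in $S$ while $a,b\notin S$, the edges of $c(\widetilde m)$ incident to $a,b$ are not the edges $m_j$ acts on, so I expect $m_j$ still commutes with $\widetilde m$ — pushing these past each other only relabels crossings, not the combinatorial type. The genuinely new phenomenon is the $\IIpl$ move that \emph{creates} one of the crossings that $\widetilde m$'s circle must pass through; but $\widetilde m$ is greedy in $D$, so $a,b$ already exist in $D$, ruling this out.

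I expect the main obstacle to be the bookkeeping for the bigon $c(\widetilde m)$: formalizing what ``$m_i$ does not interfere with $\widetilde m$'' means at the level of the combinatorial encoding (edges as $4$-tuples, faces as cyclic lists of directed edges), and verifying that commuting $\widetilde m$ past $m_i$ preserves not only the resulting diagram but also the feasibility \emph{for the pair} $(D_i, S_i)$ — i.e.\ that $m_i$ is still special or greedy after $\widetilde m$, which follows because $X(m_i)$ and the special/greedy status of $m_i$ depend only on $S_i$ and the crossings $m_i$ touches, none of which are $a$ or $b$, and because $\widetilde m$ only deletes $a,b$ and their incident edges. I would organize the proof as an induction on $\ell$: establish that $\widetilde m$ commutes with $m_1$ (the heart of the argument, a finite local case check using the figures of Section~\ref{s:r_moves}), obtaining a feasible sequence $(\widetilde m, m_1, m_2', \dots, m_\ell')$ for $(D,S)$ where $m_i'$ is $m_i$ viewed as a move in $(D,S)(\widetilde m, m_1, \dots, m_{i-1})$ with the same type and weight; then observe that $(m_1, m_2', \dots, m_\ell')$ is a feasible sequence for $(D(\widetilde m), S(\widetilde m)) = (D(\widetilde m), S)$ inducing an untangling with defect $k$ (same total weight), and apply the induction hypothesis — except the greedy move we want at the front is now $m_1$ itself, so actually after the single commutation step we are \emph{already done}, because the new sequence literally starts with $\widetilde m$. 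So the induction is only needed to handle the commutation step cleanly when $\widetilde m$ does interfere with $m_1$, in which case the case analysis above (with $j=1$) shows the first move can be replaced by $\widetilde m$ at no extra weight cost; I would fold that into the same local case check.
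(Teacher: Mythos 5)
Your high-level plan — commute the greedy move $\widetilde m$ forward through the sequence and, when it "interferes," do a local replacement — is the same strategy as the paper, and your handling of the easy case (when $\widetilde m=\{a,b\}$ literally appears as some $m_j$, so you just bubble it to the front via swaps with moves that avoid $a,b$; this is the paper's Corollary~\ref{c:swap_i}) is essentially correct. The genuine gap is in the hard case where $a$ and $b$ are removed by \emph{two different} $\IImin$ moves, say $m_i=\{a,c\}_{\IImin}$ and, somewhere much later, $m_j=\{d,b\}_{\IImin}$. Here you gesture at "the standard picture: two bigons sharing the crossing $a$... one checks there is still a feasible way," but this does not yet constitute an argument, for a concrete reason: the rewiring you need (replace the pair $\{a,c\},\{d,b\}$ by $\{a,b\},\{d,c\}$, which is the paper's Lemma~\ref{l:swap_iii}) only applies when the two moves are \emph{consecutive} in the sequence. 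In general they are not; there can be arbitrarily many greedy and special moves between $m_i$ and $m_j$. You have no mechanism in your proposal for bringing them adjacent — your bubbling argument commutes $\widetilde m$ past moves that avoid $\{a,b\}$, but once you reach $m_i=\{a,c\}$ you stop, and at that point $b$ is still buried deep in the remaining tail.

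This is exactly why the paper needs a second, asymmetric swapping machine (Lemma~\ref{l:swap_ii} and Corollary~\ref{c:swap_ii}): starting from two greedy $\IImin$ moves $m$ and $\widetilde m$ in $D$ that share the crossing $x$, it pushes $m$ \emph{to the end} of a feasible subsequence while preserving the pair $(D,S)$, and it is specifically engineered so that after the pushing, $m_i$ and $m_j$ become adjacent and Lemma~\ref{l:swap_iii} applies. That corollary is not a trivial bookkeeping step — its inductive proof has to re-verify, at each stage, that the surviving greedy move $\{x_1,x_\ell\}_{\IImin}$ is still feasible in the intermediate diagram and that $x_1,y_1,x_\ell,y_\ell$ remain outside $S$. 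Without this adjacency-forcing step, the subsequent rewiring has nothing to act on, and your "same total weight" assertion floats free. (A secondary soft spot: your "commute $\widetilde m$ past $m_i$" step also needs to preserve the special/greedy status \emph{with respect to the evolving $S_i$}, not just the diagram — you note this in passing, but this is precisely what the paper's Lemma~\ref{l:swap_i} proves with a case split on $m'$ greedy vs.\ special, and it is not automatic, since $S$ changes across special moves.) In short: the skeleton is right, the easy case is right, but the crucial adjacency/rewiring mechanism for the case $\widetilde m\notin\{m_1,\dots,m_\ell\}$ is missing and cannot be folded into a "local case check."
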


The proof of this theorem is given at the end of~\cref{s:rearrange}.

\section{W[P]-membership modulo~\cref{t:swappable}.}
\label{s:algo_correct}

Assuming~\cref{t:swappable} we have collected enough tools to show that
\sctt{Special\linebreak[0]{}Greedy$(D, k)$} is a W[P]-algorithm which provides a correct
answer to \sctt{Unknotting via defect}. Because this part is technically easier
and because some auxiliary lemmas are more fresh now, we show this before
proving~\cref{t:swappable}. For the reader's convenience, we first
recall the algorithm \sctt{Special\linebreak[0]{}Greedy$(D, k)$}.

\sctt{Special\linebreak[0]{}Greedy$(D, k)$:}
\begin{enumerate}[0.]
  \item (Non-deterministic step.) Guess a set $S$ of at most $3k$ crossings in $D$. Then run
     \sctt{Special\linebreak[0]{}Greedy$(D, S, k)$} below.
\end{enumerate}

\sctt{Special\linebreak[0]{}Greedy$(D, S, k)$:}

\begin{enumerate}[1.]
\item If $k < 0$, then output \sctt{No.}  If $D = U$ is a diagram without
  crossings and $k \geq 0$, then output \sctt{Yes}. In all other cases continue
  to the next step.
\item If there is a feasible greedy Reidemeister $\IImin$ move $m$ with respect
  to $S$, run \sctt{Special\linebreak[0]{}Greedy$(D(m), S(m), k)$} otherwise
  continue to the next step.
\item (Non-deterministic step.) If there is no feasible greedy Reidemeister
  $\IImin$ move with respect to $S$, enumerate all possible special Reidemeister
  moves $m_1, \dots, m_t$ in $D$ with respect to $S$ up to isotopy. If there is
  no such move, that is, if $t = 0$, then output \sctt{No}. Otherwise, make a
  guess which $m_i$ is performed first and run
  \sctt{Special\linebreak[0]{}Greedy$(D(m_i), S(m_i), k - w(m_i))$}.
\end{enumerate}

\begin{proof}[Proof of W{[P]}-membership in~\cref{t:main} modulo~\cref{t:swappable}.]

  First, we show that \sctt{Special\linebreak[0]{}Greedy$(D, k)$} provides a correct answer to
  \sctt{Unknotting via defect}. We have already explained that if the algorithm outputs \sctt{Yes}, it
  is the correct answer to \sctt{Special\linebreak[0]{}Greedy$(D, k)$}. Thus it remains to show
  that if we start with an input $(D,k)$ which admits an untangling with defect
  $k$, then the algorithm will output \sctt{Yes}. In particular, $k \geq 0$ in this
  case.

  By~\cref{l:step0}, given such an input $(D,k)$, there is a set of crossings
  $S$ in $D$ of size at most $3k$ such that there is a sequence $\mathcal M$ of
  Reidemeister moves feasible for $(D,S)$ inducing an untangling of $D$ with
  defect at most $k$.  Thus it is sufficient to show that
  \sctt{Special\linebreak[0]{}Greedy$(D, S, k)$} outputs \sctt{Yes} if such a
  sequence for a given $S$ exists. (In fact this is if and only if but we only
  need the if case.)  We will show this by a double induction on $k$ and
  $\crr(D)$, where $\crr(D)$ is the number of crossings in $D$. The outer
  induction is on $k$, the inner one is on $\crr(D)$.  It would be sufficient to
  start our induction with the pair $(k,\crr(D)) = (0,0)$; however, whenever
  $\crr(D) = 0$, then the algorithm outputs \sctt{Yes} in step 1. Thus we may
  assume that $\crr(D) > 0$.

If $(D,S)$ admits any greedy move, then we are in step 2. For every greedy move
$m$ there is a sequence of Reidemeister moves starting with $m$ feasible for
$(D,S)$ inducing an untangling of $D$ with defect at most $k$
by~\cref{t:swappable}. This move has weight $0$. Thus by~\cref{l:defect} there
is a sequence of Reidemeister moves feasible for $(D(m),S(m))$ inducing an
untangling of $D(m)$ with defect at most $k$. In addition $\crr(D(m)) <
\crr(D)$, thus \sctt{Special\linebreak[0]{}Greedy$(D(m), S(m), k)$} outputs \sctt{Yes} by
induction.

If $(D,S)$ does not admit any greedy move, then we are in step 3. We in
particular know that the first move in the sequence $\mathcal M$ must be
special. We guess this move as we are in a non-deterministic step and denote it
$m_i$ (in consistence with the notation in step 3). Now, if we remove $m_i$ from
$\mathcal M$, this is a feasible sequence for $(D(m_i),S(m_i))$ inducing an
untangling of $D(m_i)$ with defect at most $k - w(m_i)$ by~\cref{l:defect}. Thus
\sctt{Special\linebreak[0]{}Greedy$(D(m_i), S(m_i), k - w(m_i))$} outputs \sctt{Yes} by
induction, as we need.

\bigskip
 
  We also need to show that the algorithm \sctt{Special\linebreak[0]{}Greedy$(D, k)$} is a 
W[P]-algorithm. 

We have provided an encoding of diagrams such that if $D$ contains $n$
crossings, then the size of the encoding is $\Theta(n \log n)$.
Therefore, when estimating the complexity of the algorithm, we may
estimate it with respect to the number of crossings of $D$ (up to a polynomial
factor).

Assume that $(D,k)$ is a positive instance of \sctt{Special\linebreak[0]{}Greedy$(D, k)$}. In
step 0, there are $\binom n{3k} + \cdots + \binom n0 \leq (3k+1)n^{3k}$ choices for
how to pick $S$. This can be encoded as at most $O(\log_2((3k+1)n^{3k}))$ binary
choices; thus we perform $O(k \log n)$ non-deterministic choices in step 0.

Now, we analyse the complexity of the remaining steps, that is, the complexity
of \sctt{Special\linebreak[0]{}Greedy$(D, S, k)$}. We can easily bound the number of greedy
$\IImin$ moves performed in step 2 by $n/2$ because each such move removes two
non-special crossings and there are at most $n$ non-special crossings in the
beginning.

We also want to bound the number of the special moves guessed in step 3. Here,
we only consider a sequence of guesses yielding \sctt{Yes}. (While we terminate
the algorithm if we need more guesses than the claimed bound.)

Let $s_{\Imin}$, $s_{\Ipl}$, $s_{\IImin}$, $s_{\IIpl}$ and
$s_{\III}$ be the number of special moves guessed in step 3 in this sequence, of
type $\Imin$, $\Ipl$, $\IImin$, $\IIpl$ and $\III$ respectively. For the moves
of positive weight, we get the following bounds $s_{\Imin} \leq k$, $s_{\III}
\leq k/2$, $s_{\Ipl} \leq k/3$ and $s_{\IIpl} \leq k/4$ as a move with positive
weight reduces $k$ by its weight. We also want to bound $s_{\IImin}$ by a
function of $k$. We observe that 
\[
  |S| = 2 s_{\IImin} + s_{\Imin} - s_{\Ipl} - 2s_{\IIpl}
\]
because at the end of the sequence we have no crossings while each special
$\IImin$ move reduces the number of special crossings by $2$, each special
$\Imin$ move reduces the number of special crossings by $1$, etc. This gives
\[
s_{\IImin} = |S|/2 - s_{\Imin}/2 + s_{\Ipl}/2 + s_{\IIpl} \leq 3k/2 + 0 + k/6 +
k/4 \leq 2k.
\]
Altogether, these bounds show that the number of special moves guessed in
step 3 is $O(k)$.

Now, we put everything together. By bounding the number of occurences of step 2
and step 3, we see that any intermediate diagram appearing in the algorithm has
size at most $O(n+k)$. Thus, by~\cref{l:complexity_moves}(i), there at most
$O((n+k)^2)$ feasible special moves in step 3 which can be constructed in time
$O((n+k)^2\log(n+k))$. In particular, guesses in step 3 can be encoded as
$O(\log(n+k))$ binary choices. This means that the number of non-deterministic
steps (including step 0) is $O(k \log n)$ as required. It also follows,
using~\cref{l:complexity_moves}(ii) and using again that steps 2 and 3 are
performed at most $O(n+k)$ times that the algorithm runs in nondeterministic
time $O((n+k)^2\log(n+k))$. This belongs to $O(k^3 n^3)$ which is form
$O(f(k)n^c)$ where $c$ is a fixed constant and $f(k)$ is a computable function
as required.
\end{proof}

\section{Rearranging \IImin moves}
\label{s:rearrange}

The goal of this section is to prove~\cref{t:swappable}. We start with auxiliary
results on rearranging two consecutive moves.

To each feasible Reidemeister move $m$ in a diagram $D$, we can assign its
\emph{private} $2$-ball $B(m) \subseteq \R^2$ so that any modifications of $D$
via $m$ are performed only inside $B(m)$; see~\cref{f:rm_private_balls}.

\begin{figure}
\begin{center}
  \includegraphics[page=5]{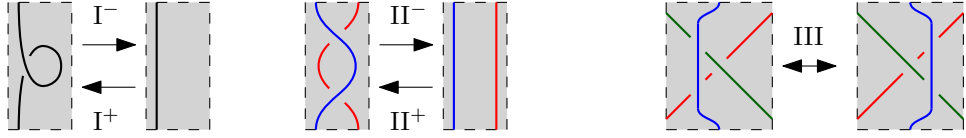}
\end{center}
  \caption{Private balls $B(m)$ (in gray) around Reidemeister moves.}
  \label{f:rm_private_balls}
\end{figure}

Now let us assume that we have two feasible Reidemeister moves $m$ and $m'$ in a
diagram $D$ such that we can choose $B(m)$ and $B(m')$ so that they are
disjoint. Considering the moves purely topologically (ignoring the combinatorial
description) we can perform the two moves in any order yielding the same
diagram; see~\cref{f:im}.

\begin{figure}
\begin{center}
  \includegraphics[page=1]{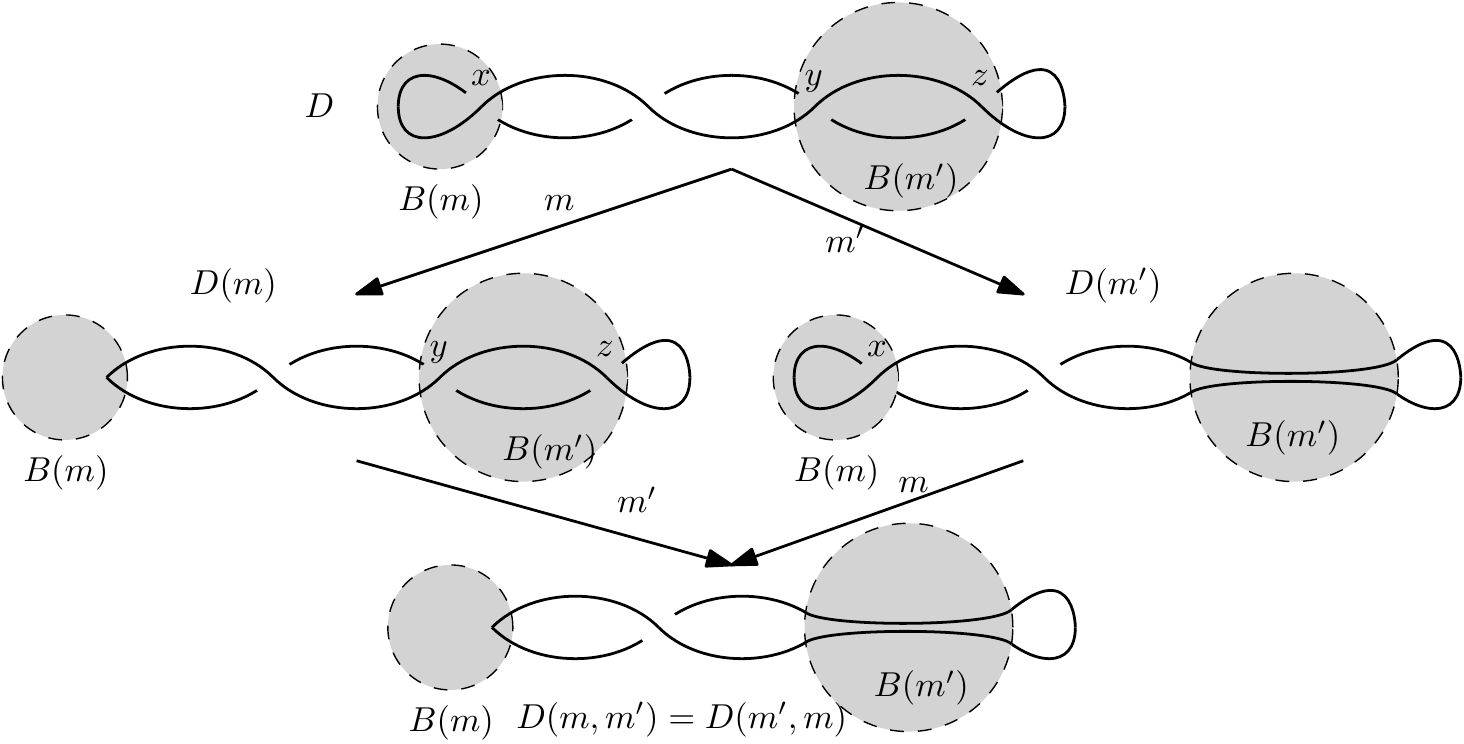}
\end{center}
  \caption{Two feasible moves $m = \{x\}_{\Imin}$ and $m' = \{y,z\}_{\IImin}$
  in a diagram $D$ with disjoint private balls. After applying the two moves in
  any order, we get the same diagram.}
  \label{f:im}
\end{figure}

In example, on~\cref{f:im}, the two used moves have the same combinatorial
description in both orders. Namely $m$ is feasible in $D(m')$, $m'$ is feasible
in $D(m)$ and $D(m, m') = D(m', m)$. However, this need not be true in general
even if $B(m)$ and $B(m')$ are disjoint. An example of such behavior is given on
Figure~\ref{f:im_close}. Thus, we will sometimes need to introduce a new name
for a move after swapping (for example $\widehat m'$ on
Figure~\ref{f:im_close}).  But this is essentially only a notational
problem.\footnote{With some extra effort we would be able to explain that we
only need swapping moves which remain combinatorially the same after the
swap. But it is simply not worth some additional case analyses. Changing the
notation is easier.}

  \begin{figure}
  \begin{center}
    \includegraphics[page=2]{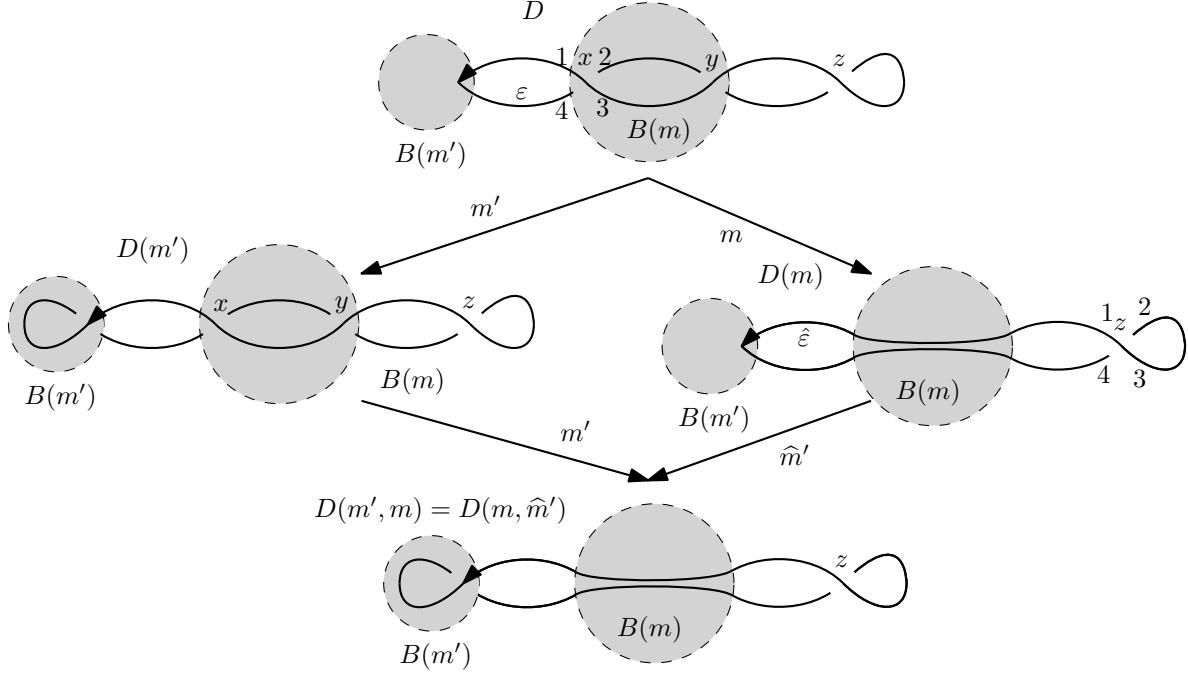}
  \end{center}
  \caption{Two feasible moves $m = \{x,y\}_{\IImin}$ and $m' =
    (\epsvec,+)_{\Ipl}$in a diagram $D$ with disjoint private balls. After
    applying $m$, the move $m'$ is not feasible anymore but instead of it we
    have to use a combinatorially different move $\widehat m' =
    (\protect\overrightarrow{\widehat{\varepsilon}},+)_{\Ipl}$ to get the same
    result. (Note that the edge $\varepsilon$ is different from $\widehat
    \varepsilon$.)}
  \label{f:im_close}
  \end{figure}
  
Now we provide the main technical lemma which allows swapping the moves.

\begin{lemma}
  \label{l:swap_i}
Let $D$ be a diagram of a knot and $S$ be a set of crossings in $D$. Let $m =
\{x,y\}_{\IImin}$ be a $\IImin$ move in $D$, greedy with respect to $S$. Let $m'$
be a feasible move for the pair $(D,S)$ which avoids $\{x,y\}$. Then there is a
move $\widehat m'$ feasible in $D(m)$ of same type as $m'$ and the
following conditions hold:
  \begin{enumerate}[(i)]
\item $m$ is greedy in $D(m')$ with respect to $S(m')$, in particular $m$ is
    feasible in $(D,S)(m')$
\item 
    $\widehat m'$ is feasible in $(D,S)(m)$; and
\item
    $(D,S)(m, \widehat m') = (D,S)(m', m)$.
\end{enumerate}
\end{lemma}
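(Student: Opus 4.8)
The plan is to argue topologically first and then upgrade to the combinatorial
statement. The key observation is that since $m=\{x,y\}_{\IImin}$ avoids the
crossings affected by $m'$, and $m'$ avoids $\{x,y\}$, we can choose private
$2$-balls $B(m)$ and $B(m')$ that are disjoint: the ball $B(m)$ can be taken as
a thin neighborhood of the circle $c(m)$ bounding the bigon of $m$, which meets
$D$ only in $x$, $y$, and the two edges of the bigon, while $B(m')$ can be taken
in the region of $D$ that $m'$ touches, which by hypothesis is disjoint from
$\{x,y\}$ and from those two bigon edges (for a $\Ipl$ or $\IIpl$ move $B(m')$
is a neighborhood of the relevant $\zeta$ arc or loop site, which we may push
off the bigon of $m$). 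Once $B(m)$ and $B(m')$ are disjoint, performing the two
moves in either order yields the same diagram up to isotopy, exactly as
illustrated in Figures~\ref{f:im} and~\ref{f:im_close}. This gives a move
$\widehat m'$ of the same type as $m'$, feasible in $D(m)$, with
$D(m,\widehat m') = D(m',m)$ as diagrams; and it simultaneously shows $m$ is
feasible (as a $\IImin$ move) in $D(m')$, since the bigon of $m$ survives $m'$
untouched inside $B(m)$.

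Next I would track the special/greedy status and the set $S$, which is the only
genuinely new content beyond the topological picture. For item~(i): $m$ removes
$x,y\notin S$, and by~\cref{o:Sm} or~\cref{lem:special_set} we have that
$S(m')$ consists of the crossings of $D(m')$ that were already in $S$ together
with any crossings newly created by $m'$; in particular $x,y\notin S(m')$ (they
are old crossings not in $S$, and they are not the new ones introduced by a
$\Ipl$/$\IIpl$ $m'$ since $m'$ avoids $\{x,y\}$). Hence $m$ is greedy with
respect to $S(m')$, so $m$ is feasible for $(D,S)(m')$. For item~(ii): I need
$\widehat m'$ to be special or greedy with respect to $S(m)$. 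Since $m$ is
greedy, $S(m)=S$ (as sets of crossings of $D(m)$, noting $x,y\notin S$ so
removing them does not change $S$); and $\widehat m'$ affects exactly the same
crossings of $D(m)$ that $m'$ affected in $D$ (the bigon removal of $m$ does not
alter the endpoints or edges relevant to $m'$, only possibly renames the edge
as in Figure~\ref{f:im_close}, while crossing labels are preserved by our
conventions). Therefore $\widehat m'$ avoids (resp.\ is performed on crossings
in) exactly the same subset of $S=S(m)$ as $m'$ did of $S$, so $\widehat m'$
inherits the special-or-greedy status of $m'$.

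Finally, for item~(iii) I would check the equality of pairs
$(D,S)(m,\widehat m') = (D,S)(m',m)$ coordinate-wise. The first coordinates
agree by the topological step, $D(m,\widehat m') = D(m',m)$. For the second
coordinates, apply~\cref{l:Smulti}: on the left, $m$ is greedy so contributes
nothing, and $\widehat m'$ contributes $X(\widehat m')$ if it is special,
giving $S(m,\widehat m') = S \triangle X(\widehat m')$ (or $S$ if $\widehat m'$
is greedy); on the right, $m'$ contributes $X(m')$ if special and $m$
contributes nothing (it is greedy in $D(m')$ by item~(i)), giving
$S(m',m) = S \triangle X(m')$ (or $S$). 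So it remains to see $X(\widehat m') =
X(m')$ as subsets of the common diagram $D(m,\widehat m')=D(m',m)$: both are the
set of crossings removed by, or introduced by, the move, and since $\widehat m'$
and $m'$ are of the same type and, as noted, act on the same crossings with the
same labels (new crossings get the same fresh labels by our naming convention),
these sets coincide. The main obstacle I anticipate is purely bookkeeping:
being careful that the private balls can genuinely be made disjoint in every
combination of types of $m'$ (including the delicate $\IIpl$ case where $\zeta$
might a priori be forced near the bigon of $m$, which one resolves by an
isotopy of $\zeta$ within its face away from $c(m)$), and that our labelling
conventions make $X(\widehat m') = X(m')$ literally hold rather than merely up
to relabelling.
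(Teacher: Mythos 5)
Your proposal is correct and follows essentially the same route as the paper's proof: first a case analysis on the type of $m'$ to arrange disjoint private balls $B(m)$ and $B(m')$ (handling the $\IIpl$ case by isotoping $\zeta$ away from the bigon of $m$), then the topological swap giving $D(m,\widehat m')=D(m',m)$ with $m$ still feasible in $D(m')$, and finally the bookkeeping on $S$ via $S(m)=S$, the fact that $x,y\notin S(m')$, and $X(\widehat m')=X(m')$ combined with \cref{l:Smulti}. The paper spells the last part out as an explicit case split on whether $m'$ is greedy or special, but the content is the same.
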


\begin{proof}
  We will use case analysis according to the type of $m'$
  in order to show that the private balls $B(m)$ and $B(m')$ can be chosen to
  be disjoint and we will perform the moves only inside these private
  balls. As soon as we get this, 
let $\widehat m'$ be the move in $D(m)$ which performs the
  same change inside $B(m')$ as $m$ in~$D$. We also observe that the move which
  performs the same change inside $B(m)$ in $D(m')$ as $m$ in $D$ is again $m$,
  because a $\IImin$ move is uniquely determined by the crossings removed by this
  move.
  Then we get that $m$ is feasible in $D(m')$ and  $D(m, \widehat m') = D(m',
  m)$, which will be our first step in the proof.  
  
  Now, let us distinguish cases according to the type of $m'$.

\smallskip
  We start with the case that $m'$ is a $\Imin$ move, $\IImin$ move or a
  $\III$ move. The condition that $m'$ avoids $\{x, y\}$ implies
  that the circle of $c(m)$ is disjoint from the circle $c(m')$ of $m'$.
  (Recall that $c(m)$ is depicted at~\cref{f:rm_circle}.) In
  addition, these two circles cannot be nested as $D$ does not pass through the
  open disks bounded by them. Thus $B(m)$ can be chosen so that it contains
  $c(m)$, the disk bounded by $c(m)$ and a small neighborhood of $c(m)$.
  Similarly, $B(m')$ can be chosen so that it contains
    $c(m')$, the disk bounded by $c(m')$ and a small neighborhood of $c(m')$.
    Because $c(m)$ and $c(m')$ have strictly positive distance, the `small
    neighborhoods' above can be chosen sufficiently small so that $B(m)$ and
    $B(m')$ are disjoint. 
    
\smallskip
  Next, let us consider the case that $m'$ is a $\Ipl$ move performed on an
  edge $\varepsilon'$. Because $m'$ avoids $\{x,y\}$ we
  get that the image of the edge $\varepsilon'$ is disjoint from $c(m)$. Thus
  we can chose $B(m)$ as in the previous case so that it avoids $\varepsilon'$
  while it is sufficient to pick $B(m')$ as a small neighborhood of arbitrary
  point on $\varepsilon'$ disjoint from $B(m)$.  
  
\smallskip
  Finally, let us consider the case that $m'$ is a $\IIpl$ move performed on
  edges $\varepsilon'$ and $\varphi'$, not necessarily distinct.
  Because $m'$  avoids $\{x,y\}$ we
    get that the image of the edges $\varepsilon'$ and $\varphi'$ is disjoint
    from $c(m)$. Following the description of feasible $\IIpl$ moves, let
    $\zeta'$ be a topological arc (inside a face of the diagram, except the
    endpoints) connecting $\varepsilon'$ and $\varphi'$ along which the
    move $m'$ is performed. Then $\zeta'$ is also disjoint from $c(m)$.
    Consequently, we can choose $B(m)$ as above so that it is disjoint from
    $\zeta'$ while $B(m')$ is chosen in a small neighborhood of $\zeta'$ so
    that it is disjoint from $B(m)$. 
    
\smallskip
This finishes the initial case analysis, thus we know that $m$ is feasible in
$D(m')$ and $D(m, \widehat m') = D(m', m)$, which will be our first step in the
proof. In order to finish the proof, we observe that $S(m) = S$ because $m$ is
greedy in $D$ with respect to $S$ and we distinguish whether $m'$ is greedy or
special in $D$ with respect to $S$.

If $m'$ is greedy, then it is a $\IImin$ move. We get that $S(m') = S$. Say that
$m'$ removes crossings $w,z$ distinct from $x, y$. Then $S$ does not contain
$w,z$ because $m'$ is greedy. Consequently both $m$ and $\widehat m'$ are greedy
with respect to $S = S(m) = S(m')$ (this proves (i) and (ii)), therefore both
$(m, \widehat m')$ and $(m', m)$ are feasible in $(D,S)$ and $S(m, \widehat m')
= S(m', m) = S$ by~\cref{l:Smulti}. Together with $D(m, \widehat m') = D(m', m)$
this proves (iii).

Now, let us assume that $m'$ is special. Because neither $x$ nor $y$ belongs to
$X(m')$ (as $m'$ avoids $\{x,y\}$), we get that $m$ is greedy in $D(m')$ with
respect to $S(m')$ which in particular gives (i). Consequently $(m', m)$ is
feasible for $(D,S)$ and $S(m', m) = S \triangle X(m')$ by~\cref{l:Smulti}. On
the other hand, $\widehat m'$ is special in $D(m)$ with respect to $S(m)$ as
$S(m) = S$ (and from the definition of $\widehat m'$) which in particular gives
(ii). We also get that $X(m') = X(\widehat m')$ as they are of
the same type and they affect the same set of crossings.  Altogether, $(m,
\widehat m')$ is feasible for $(D,S)$ and $S(m, \widehat m') = S \triangle
X(\widehat m') = S \triangle X(m')$ by~\cref{l:Smulti}. Thus we checked $S(m',
m) = S(m, \widehat m')$. Together with $D(m', m) = D(m, \widehat m')$, this
gives (iii).
\end{proof}

\begin{corollary}
  \label{c:swap_i}
  Let $D$ be a diagram, $S$ be a set of crossings in $D$, $\ell \geq 2$, and
  $(m_1, \dots, m_\ell)$ be a feasible sequence of
  Reidemeister moves for the pair $(D,S)$.
  Assume that $m_\ell$ is a $\IImin$ move which is also feasible and greedy in
  $D$ with respect to $S$.
  Then there is a sequence of Reidemester moves
  $(m_\ell, \widehat m_1, \dots, \widehat m_{\ell-1})$ feasible for $(D,S)$ such that
  \begin{enumerate}[(i)]
    \item $(D,S)(m_1, \dots, m_{\ell}) = (D,S)(m_\ell, \widehat m_1,
  \dots, \widehat m_{\ell-1})$; and
  \item $w(m_i) = w(\widehat m_i)$ for $i \in [\ell-1]$.
\end{enumerate}
\end{corollary}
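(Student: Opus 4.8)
The plan is to prove \cref{c:swap_i} by induction on $\ell$, using \cref{l:swap_i} to move the $\IImin$ move $m_\ell$ to the front one position at a time. Throughout, write $m := m_\ell = \{x,y\}_{\IImin}$; greediness of $m$ with respect to $S$ gives $x,y\notin S$. The elementary fact I would record first, because it is needed at every step, is the following: \emph{if $(n_1,\dots,n_t)$ is a feasible sequence for a pair $(D',S')$ and $n_t = \{x,y\}_{\IImin}$ is feasible and greedy in $D'$ with respect to $S'$, then $n_1$ avoids $\{x,y\}$.} Indeed $x,y\notin S'$, so if $n_1$ is special with respect to $S'$ it avoids $V(D')\setminus S'\supseteq\{x,y\}$, hence $\{x,y\}$; and if $n_1$ is greedy it is a $\IImin$ move, and it cannot remove $x$ or $y$, since otherwise, by the convention that a removed crossing is never reintroduced, that crossing would be missing from $D'(n_1,\dots,n_{t-1})$, contradicting that $n_t$ removes both $x$ and $y$ in that diagram.

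For the base case $\ell = 2$, I would apply \cref{l:swap_i} to $m$ and $m':=m_1$ (feasible for $(D,S)$, and avoiding $\{x,y\}$ by the fact above); it produces $\widehat m_1 := \widehat m'$ of the same type as $m_1$, feasible in $(D,S)(m)$, with $(D,S)(m,\widehat m_1) = (D,S)(m_1,m)$, and since moves of the same type have the same weight this is exactly the assertion. For $\ell > 2$, set $(D_1,S_1):=(D,S)(m_1)$, so that $(m_2,\dots,m_\ell)$ is feasible for $(D_1,S_1)$. By the fact above, $m_1$ avoids $\{x,y\}$, so part (i) of \cref{l:swap_i} guarantees that $m = m_\ell$ is greedy — in particular feasible — in $D_1$ with respect to $S_1$. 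Hence the inductive hypothesis applies to the length-$(\ell-1)$ sequence $(m_2,\dots,m_\ell)$ over $(D_1,S_1)$ and yields moves $\widehat m_2,\dots,\widehat m_{\ell-1}$ such that $(m_\ell,\widehat m_2,\dots,\widehat m_{\ell-1})$ is feasible for $(D_1,S_1)$, ends at the same pair as $(m_2,\dots,m_\ell)$, and satisfies $w(\widehat m_i)=w(m_i)$ for $2\le i\le\ell-1$. Prepending $m_1$, the sequence $(m_1,m_\ell,\widehat m_2,\dots,\widehat m_{\ell-1})$ is feasible for $(D,S)$ and ends at $(D,S)(m_1,\dots,m_\ell)$. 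A final application of \cref{l:swap_i} to $m$ and $m':=m_1$ gives $\widehat m_1$ of the type of $m_1$ with $(D,S)(m_\ell,\widehat m_1)=(D,S)(m_1,m_\ell)$; splicing in this equality shows that $(m_\ell,\widehat m_1,\widehat m_2,\dots,\widehat m_{\ell-1})$ is feasible for $(D,S)$, ends at $(D,S)(m_1,\dots,m_\ell)$, and has $w(\widehat m_i)=w(m_i)$ for every $i\in[\ell-1]$, which is what we want.

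The one genuinely delicate point — and where I would concentrate the care in the full write-up — is ensuring that $m$ remains a feasible greedy $\IImin$ move in every diagram to which \cref{l:swap_i} is applied; part (i) of that lemma is precisely what propagates this along the recursion, and the elementary fact recorded above is precisely what supplies the hypothesis ``$m'$ avoids $\{x,y\}$'' that part (i) needs. Everything else is routine bookkeeping with the definition of $(D,S)(m_1,\dots,m_j)$ and with the observation that the hats produced by \cref{l:swap_i} keep the type, and hence the weight, of the swapped move.
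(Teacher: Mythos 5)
Your proof is correct and follows the same inductive structure as the paper's proof of \cref{c:swap_i}: peel off $m_1$, use \cref{l:swap_i}(i) to show $m_\ell$ remains greedy in $(D,S)(m_1)$, apply the inductive hypothesis to $(m_2,\dots,m_\ell)$ over the new pair, and splice in $\widehat m_1$ from a final application of \cref{l:swap_i} (items (ii)--(iii)). The one thing you do beyond the paper is explicitly state and verify the ``elementary fact'' that $m_1$ avoids the two crossings removed by $m_\ell$ --- precisely the hypothesis ``$m'$ avoids $\{x,y\}$'' that \cref{l:swap_i} requires and that the paper invokes without comment --- which is a small but genuine tightening of the argument.
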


\begin{proof}
We will prove the corollary by induction in $\ell$. If $\ell=2$, then this
  immediately follows from~\cref{l:swap_i} used with $m = m_2$ and $m' =
  m_1$ obtaining $\widehat m_1 := \widehat m'$. Note that $w(m_1) = w(\widehat
  m_1)$ because $m_1$ and $\widehat m_1$ are of the same type. Thus we may assume
  that $\ell \geq 3$.

  Let $D' := D(m_1)$ and $S' := S(m_1)$. We get that $(m_2, \dots, m_\ell)$ is a
  feasible sequence for $(D',S')$. We first apply~\cref{l:swap_i} in $D$
  with $m = m_\ell$ and $m' =
    m_1$, at this moment, we only deduce that $m_\ell$ is feasible in $D'$ and
    greedy with respect to $S'$ from the item (i) of~\cref{l:swap_i}. We
    also get that $(m_2, \dots, m_\ell)$ is feasible for $(D',S')$ because $(m_1,
    \dots, m_\ell)$ is feasible for $(D,S)$.
  Therefore, by induction, there is a sequence
  $(m_\ell, \widehat m_2, \dots, \widehat  m_{\ell-1})$ feasible for $(D',S')$ such
  that  
  \begin{enumerate}[(i')]
    \item $(D',S')(m_2, \dots, m_\ell) = (D',S')(m_\ell, \widehat m_2,
  \dots, \widehat m_{\ell-1})$; and
  \item $w(m_i) = w(\widehat m_i)$ for $i \in \{2, \dots, \ell-1\}$.
\end{enumerate}

  Now, for checking (i) and (ii), we use~\cref{l:swap_i} in its full
  strength, again applied with $m = m_2$ and $m' = m_1$. Let $\widehat
  m_1 := \widehat m'$, using the notation of the lemma. 
 We immediately deduce (ii) from (ii') as $m_1$ and $\widehat m_1$ are of
  the same type. We also get $(D,S)(m_1,  m_\ell) = (D,S)(m_\ell, \widehat
  m_1)$ from
  item (iii) of~\cref{l:swap_i}.
  Using this and (i') we get 
  \begin{align*}
    (D,S)(m_1, \dots, m_\ell) &= (D',S')(m_2, \dots, m_\ell)\\
		       &= (D',S')(m_\ell, \widehat m_2, \dots, \widehat m_{\ell-1})\\
		       &= (D,S)(m_1, m_\ell)(\widehat m_2, \dots, \widehat
		       m_{\ell-1})\\
		       &= (D,S)(m_\ell, \widehat m_1)(\widehat m_2, \dots,
		       \widehat m_{\ell-1})\\
		       &= (D,S)(m_\ell, \widehat m_1, \widehat m_2, \dots,
		       \widehat m_{\ell-1})
  \end{align*}
  which verifies (i); note that the computations also show that the sequence
  $(m_\ell, \widehat m_1, \widehat m_2, \dots, \widehat m_{\ell-1})$ is feasible for
  $(D,S)$.
\end{proof}

We also need one more swapping lemma for two consecutive moves (and one
more auxiliary move).

\begin{lemma}
  \label{l:swap_ii}
Let $D$ be a diagram of a knot and $S$ be a set of crossings in $D$. 
  Assume that $m = \{x,y\}_{\IImin}$ and $\widetilde m = \{x,z\}_{\IImin}$, with
  $y \neq z$, are feasible $\IImin$ moves in $D$, greedy with respect to $S$.
  Assume also that $m'$ is a feasible move for $(D(m), S(m))$ which
  avoids $\{z\}$. Then there is a move $\widehat m'$ feasible for
  $(D,S)$ of same type as $m'$ such that the following conditions hold:
\begin{enumerate}[(i)]
\item 
  $m$ is feasible for $(D,S)(\widehat m')$;
\item 
  $(D,S)(m, m') = (D,S)(\widehat m', m)$.
\end{enumerate}
\end{lemma}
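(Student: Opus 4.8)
The plan is to construct $\widehat m'$ by ``lifting'' $m'$ from a move in $D(m)$ back to a move in $D$, and then to read off both conclusions from~\cref{l:swap_i}. Concretely, I would first exhibit a private $2$-ball $B(m)$ for $m$ in $D$ and a private $2$-ball $B_{m'}$ for $m'$ in $D(m)$ that are disjoint. Since $m$ is performed inside $B(m)$, the diagrams $D$ and $D(m)$ coincide outside $B(m)$, hence on $B_{m'}$; so I can define $\widehat m'$ to be the move in $D$ performing the same change inside $B_{m'}$ as $m'$ performs in $D(m)$. Such a move is uniquely determined and has the same type as $m'$, exactly as in the proof of~\cref{l:swap_i}.

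The geometric heart of the argument is to understand what lies inside $B(m)$ after performing $m$. The move $m=\{x,y\}_{\IImin}$ uses two adjacent strands at $x$; since $\widetilde m=\{x,z\}_{\IImin}$ is feasible with $y\neq z$, the two remaining strands at $x$ must be precisely the two strands of the bigon of $\widetilde m$, and hence both lead to $z$. Performing $m$ deletes $x$, $y$ and the $\{x,y\}$-bigon and reconnects those two strands at $x$ to the two strands at $y$ lying outside the $\{x,y\}$-bigon. Consequently every arc of $D(m)$ lying inside $B(m)$ (the ball surrounding the former bigon; see~\cref{f:rm_private_balls}) is a sub-arc of an edge of $D(m)$ having $z$ as one of its endpoints. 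This remains true in the degenerate configurations with a loop at $y$; the symmetric degeneracy with a loop at $x$ cannot occur, since feasibility of $\widetilde m$ forces the two non-bigon strands at $x$ to meet $z$.

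Next I would check, by a short case analysis on the type of $m'$ and using that $m'$ avoids $\{z\}$, that $m'$ does not meet $B(m)$. If $m'$ is a $\Imin$, $\IImin$, or $\III$ move, then its circle $c(m')$ is a union of edges of $D(m)$, and if it met $B(m)$ it would contain a whole edge incident to $z$, hence contain the crossing $z$, so $z$ would be affected by $m'$ --- contradicting the hypothesis; moreover $c(m')$ and $B(m)$ are not nested, by the argument used in the proof of~\cref{l:swap_i}. If $m'$ is a $\Ipl$ or $\IIpl$ move, then the edge(s) on which $m'$ is performed are not incident to $z$, hence, by the previous paragraph, disjoint from $B(m)$; and if $m'$ is a $\IIpl$ move performed along a connecting arc $\zeta'$, this arc can be isotoped within its face to avoid $B(m)$, because the portion of that face lying inside $B(m)$ is a disk in the interior of the face (the combinatorial data of $m'$, including the winding and the order, are unaffected since the isotopy stays in the face and outside $B(m)$). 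In each case $B_{m'}$ may be chosen as a small neighborhood of $c(m')$, respectively of the relevant edge, respectively of $\zeta'$, hence disjoint from $B(m)$. The move $\widehat m'$ obtained this way avoids $\{x,y\}$, because its action is confined to $B_{m'}$, which contains neither $x$ nor $y$; and it is feasible for $(D,S)$, since $S(m)=S$ as $m$ is greedy, so $m'$ is special or greedy with respect to $S(m)=S$, and $\widehat m'$ affects the same crossings (for $\Imin$, $\IImin$, $\III$), respectively is performed on the same edges with the same endpoints (for $\Ipl$, $\IIpl$), as $m'$.

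Finally, I would apply~\cref{l:swap_i} to $D$, $S$, the move $m$, and $\widehat m'$ in the role of its ``$m'$''. The move that lemma produces in $D(m)$ is, unwinding its construction, exactly the move performing the same change inside $B_{m'}$ as $\widehat m'$ does in $D$ --- that is, it is $m'$ itself. Then item (i) of~\cref{l:swap_i} yields item (i) of the present lemma, and item (iii) of~\cref{l:swap_i} reads $(D,S)(m,m')=(D,S)(\widehat m',m)$, which is item (ii). I expect the main obstacle to be the geometric observation of the second paragraph together with verifying in all cases that $m'$ misses $B(m)$ --- in particular the isotopy of $\zeta'$ for $\IIpl$ moves and the handling of the loop degeneracies; the remaining bookkeeping is then entirely delegated to~\cref{l:swap_i}.
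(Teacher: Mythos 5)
Your proposal is correct and its geometric heart coincides with the paper's: you construct $\widehat m'$ by lifting $m'$ from $D(m)$ back to $D$ via disjoint private balls, you make the same key observation that the arcs of $D(m)$ lying inside $B(m)$ sit on edges of $D(m)$ having $z$ as an endpoint (forced by feasibility of $\widetilde m=\{x,z\}_{\IImin}$), and you run the same case analysis on the type of $m'$ — including the isotopy of $\zeta'$ for $\IIpl$ moves. Where you diverge is the endgame: the paper, after establishing $D(m,m')=D(\widehat m',m)$, redoes the $S$-bookkeeping from scratch (splitting into the cases $m'$ greedy versus special and tracking $X(m')=X(\widehat m')$). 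You instead verify that $\widehat m'$ avoids $\{x,y\}$ and is feasible for $(D,S)$, and then invoke \cref{l:swap_i} with $\widehat m'$ in the role of that lemma's $m'$, reading off both items (i) and (ii) at once. This is a genuine streamlining: it avoids repeating the greedy/special case split. The mild price is the need to identify the move that \cref{l:swap_i} produces in $D(m)$ as being $m'$ itself — your "unwinding its construction" argument is fine, and one can also see it without opening \cref{l:swap_i}'s proof by noting that a move of fixed type in $D(m)$ is determined by the resulting diagram, and \cref{l:swap_i}(iii) together with your $D(m,m')=D(\widehat m',m)$ forces that resulting diagram to be $D(m,m')$. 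Your explicit treatment of the loop degeneracies at $x$ and $y$ is a welcome detail that the paper leaves implicit.
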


\begin{proof}
\medskip
  Many steps in the proof will be similar to the proof of~\cref{l:swap_i}; however, we have to be
  careful with the analogy as the assumptions are different. Our first aim is again
  to get the private balls $B(m)$ and $B(m')$ disjoint by case analysis
  according to the type of $m'$. Note that this makes
  sense in particular when focusing to diagram $D(m)$ where $B(m)$ appears
  because $D(m)$ was obtained by performing $m$ while $B(m')$ appears because
  $m'$ is feasible in $D(m)$. Once we get that $B(m)$ and $B(m')$ are disjoint,
  then we know that there is a move $\widehat m'$ in $D$ which performs the same
  change inside $B(m')$ as the move $m'$ in $D(m)$. Similarly, there is a move
  $\widehat m$ in $D(\widehat m')$ which performs the same change as $m$ in
  $D$. Then we get $D(m,m') = D(\widehat m', \widehat m)$. In addition,
  similarly as in the proof of~\cref{l:swap_i}, we get $m = \widehat m$
  for free because $m$ is a $\IImin$ move. 
  
  Before we start the
  case analysis according to the type of $m'$, we start with some
  common preliminaries. 

  Let us pick the private ball $B(m)$ arbitrarily so that it meets the diagram
  $D$ as in~\cref{f:rm_Bm}, left. Let $\alpha$ and $\beta$ be two arcs in $D$
  such that $B(m)$ intersects (the image of) $D$ in (the images of) $\alpha$
  and $\beta$. Let $\alpha(m)$ and $\beta(m)$ be the arcs in $D(m)$ coming from
  $\alpha$ and $\beta$ after applying the move $m$. We also define
  $\varepsilon(m)$
  and $\varphi(m)$ as the edges of $D(m)$ which contain $\alpha(m)$
  and $\beta(m)$. An
  important observation is that one of the endpoints of each of
  $\varepsilon(m)$ and $\varphi(m)$ is $z$. Indeed, because $\widetilde m = \{x,z\}_{\IImin}$ is
  feasible in $D$, there are two edges $\widetilde \varepsilon$, $\widetilde
  \varphi$ in $D$ connecting $x$ and $z$; see~\cref{f:rm_Bm}, right.
  Then, one of them, say $\widetilde \varepsilon$, partially coincides with $\alpha$ while
  the other one, $\widetilde \varphi$, partially coincides with $\beta$. It
  follows that $\varepsilon(m)$ contains the image of the arc obtained
  from $\widetilde \varepsilon$ after
  performing $m$. In particular, $\varepsilon(m)$ contains $z$ which implies that
  $z$ is an endpoint of $\varepsilon(m)$. Similarly, $z$ is an endpoint of
  $\varphi(m)$. Note also that if we perform the move $m$ on $D$ as on
 ~\cref{f:rm_Bm}, left, then $\widetilde \varepsilon$ and $\widetilde
  \varphi$ as well as $\varepsilon(m)$ and $\varphi(m)$ have to
  leave $B(m)$ from the `upper side' as on~\cref{f:rm_Bm}, right, because $z \neq y$.
  Now we start distinguishing the cases.

\begin{figure}
\begin{center}
  \includegraphics[page=6]{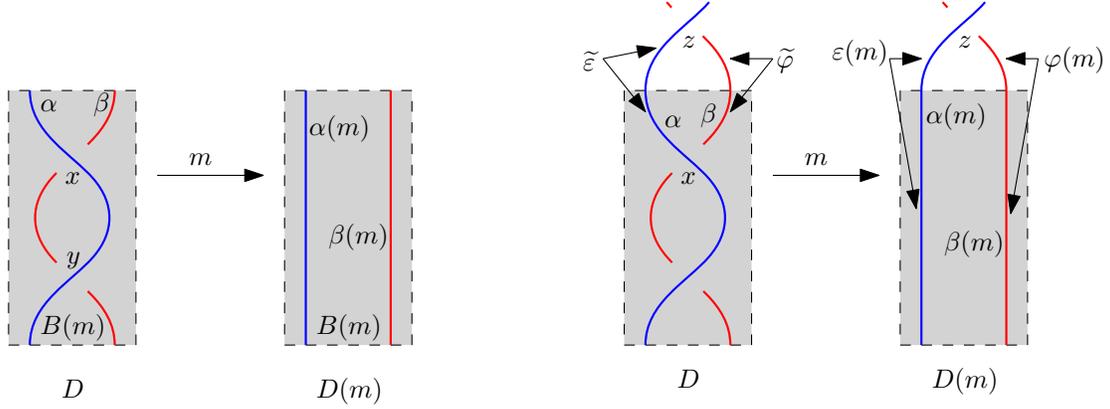}
\end{center}
  \caption{Performing the move $m$.}
  \label{f:rm_Bm}
\end{figure}
  We again start with the case that $m'$ is a $\Imin$ move, $\IImin$ move or a
  $\III$ move. In this case, we want to check that $c(m')$ is disjoint from
  (the images of) $\alpha(m)$ and $\beta(m)$. However, if $c(m')$ meets, for
  example $\alpha(m)$, then it has to contain (the image of) $\varepsilon(m)$ and
  in particular $z$. But this is impossible as $m'$ avoids $z$.
  Therefore, $c(m')$ has positive distance from $\alpha(m)$ and $\beta(m)$ which
  also implies that it has positive distance from $B(m)$. In addition, $B(m)$
  cannot be contained in the open disk bounded by $c(m')$ because $D(m)$ does
  not pass through this disk. Therefore, $B(m')$ can be chosen disjoint from
  $B(m)$ so that it  contains this disk and a small neighborhood of $c(m')$.
  
  Next, let us consider the case that $m'$ is a $\Ipl$ move; say $m' =
  (\epsvec',\sigma')_{\Ipl}$. Because $m'$ avoids $\{z\}$, we get that the image
  of the edge $\varepsilon'$ is disjoint from (the images of) $\alpha(m)$ and
  $\beta(m)$ (because they are in edges containing $z$). Thus it is possible to
  pick $B(m')$ as a small neighborhood of an arbitrary point on $\epsvec'$
  disjoint from $B(m)$.

  Finally, let us consider the case that $m'$ is a $\IIpl$ move; say $m' =
  (\epsvec', \phivec', w', o')_{\IIpl}$. Because $m'$ avoids $\{z\}$ we get that
  the image of the edges $\epsvec'$ and $\phivec'$ is disjoint from
  $\varepsilon(m)$ and $\varphi(m)$. Following the description of feasible
  $\IIpl$ moves, let $\zeta'$ be a topological arc (inside a face of the diagram
  $D(m)$, except the endpoints) connecting $\epsvec'$ and $\phivec'$ along which
  the move $m'$ is performed. We want to show that, up to isotopy, $\zeta'$ can
  be chosen so that it is disjoint with $B(m)$. Indeed we can perform an
  (ambient) isotopy which keeps the diagram $D(m)$ fixed and which maps $B(m)$
  to a ball $\Bar B(m)$ inside a small neighborhood of the union (of the images)
  of $\epsvec(m)$ and $\phivec(m)$ so that $\Bar B(m)$ and $\zeta'$ are
  disjoint; see~\cref{f:rm_isotopy}. This isotopy is indeed possible
  because $\epsvec(m)$ and $\phivec(m)$ leave $B(m)$ from the `upper side' as we
  argued above in the case analysis.

  Now, if we want to get $\zeta'$ disjoint with $B(m)$, up to isotopy, it is
  sufficient to apply the inverse of this isotopy to $\zeta'$. Once we have
  $\zeta'$ disjoint with $B(m)$, we can pick $B(m')$ as a small neighborhood of
  $\zeta'$, disjoint with $B(m)$. This finishes the proof.

\begin{figure}
\begin{center}
  \includegraphics[page=7]{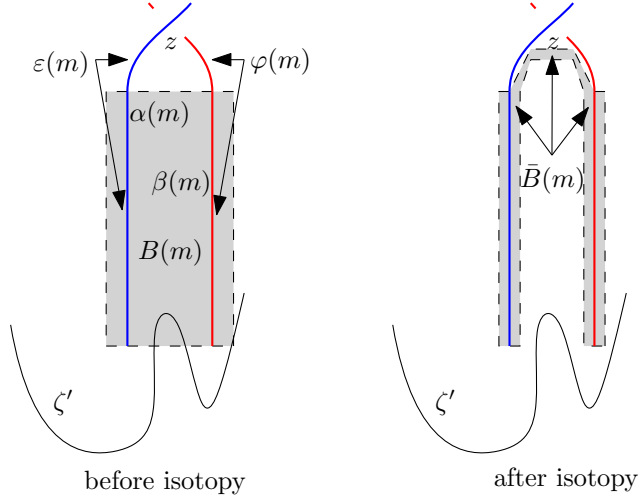}
\end{center}
  \caption{Isotoping $B(m)$ so that it avoids $\zeta'$.}
  \label{f:rm_isotopy}
\end{figure}
\smallskip

  This finishes our initial case analysis, thus we get that $\widehat m'$ is
  feasible in $D$, $m$ is feasible in $D(\widehat m')$ and $D(m,m') = D(\widehat
  m', m)$. It remains to check that these properties are kept also with respect
  to $S$.

First, we realize that $S = S(m)$ due to~\cref{o:Sm} because $m$ is
greedy in $D$ with respect to $S$. Next, we distinguish whether $m'$ is greedy
or special with respect to $S(m)$.

First assume that $m'$ is greedy, in particular a $\IImin$ move. Let $m' =
\{x',y'\}_{\IImin}$.  Then we get that $x',y' \not \in S(m) = S$. Thus,
$\widehat m'$ is also greedy (with respect to $S$) as it removes the same
crossings. In particular $\widehat m'$ is feasible for $(D,S)$. By another
application of~\cref{o:Sm}, we get $S(\widehat m') = S$. Therefore $m$ is greedy
in $D(\widehat m')$ with respect to $S(\widehat m')$, which gives (i). In
addition $S(m,m') = S = S(\widehat m', m)$ by~\cref{l:Smulti} which gives (ii),
using also $D(m,m') = D(\widehat m', m)$.

Now, let us assume that $m'$ is special. If $\widehat m'$ is a $\Imin$,
$\IImin$, or a $\III$ move, we deduce that $\widehat m'$ is special in $D$ with
respect to $S$ because it removes/affects the same crossings as $m'$ and $S =
S(m)$. If $\widehat m'$ is a $\Ipl$ or a $\IIpl$ move, we deduce that $\widehat
m'$ is special in $D$ with respect to $S$ as soon as we check that the endpoints
of the edge(s) on which is $\widehat m'$ performed belong to $S$.  However, this
edge/these edges cannot belong to the private ball $B(m)$ by our earlier case
analysis (we have obtained disjointess with $\alpha(m)$ and $\beta(m)$). In
particular, this edge/these edges are not affected by $m$ when performed in $D$,
which shows that their endpoints have to belong to $S$ (as $m'$ is special in
$D(m)$ with respect to $S(m) = S$).  In particular, in all cases according to
the type of $m'$ we get that $\widehat m'$ is feasible with respect to $(D, S)$
and we also get $X(m') = X(\widehat m')$ and $x,y \not \in
X(m')$. By~\cref{o:Sm}, we get $S(\widehat m') = S \triangle X(\widehat m') = S
\triangle X(m')$. This shows that $m$ is greedy in $D(\widehat m')$ with respect
to $S(\widehat m')$ as $x,y \not \in S$ and $x,y \not \in X(m')$. In particular,
this imples (i). Finally, by~\cref{l:Smulti}, we get $S(m,m') = S \triangle
X(m') = S \triangle X(\widehat m') = S(\widehat m', m)$. Together with $D(m,m')
= D(\widehat m', m)$, this gives (ii).
\end{proof}
\begin{corollary}
  \label{c:swap_ii}
Let $D$ be a diagram, $S$ be a set of crossings in $D$, $\ell \geq 2$ and $(m_1,
\dots, m_\ell)$ be a feasible sequence of moves for $(D,S)$. Assume that $m_1 =
\{x_1, y_1\}_{\IImin}$ and $m_\ell = \{x_\ell, y_\ell\}_{\IImin}$ are $\IImin$
moves where $x_1, y_1, x_\ell, y_\ell \not \in S$ (in particular $m_1$ is greedy
in $D$ with respect to $S$). Finally, assume also that $\{x_1,
x_\ell\}_{\IImin}$ is a feasible Reidemeister move for $(D,S)$ (again, it must
be greedy). Then, there is a feasible sequence $(\widehat m_2, \widehat m_3,
\dots, \widehat m_{\ell-1}, m_1)$ of moves for $(D,S)$ such that
  \begin{enumerate}[(i)]
    \item $(D,S)(m_1, \dots, m_{\ell-1}) =  (D,S)(\widehat m_2, \widehat m_3,
      \dots, \widehat m_{\ell-1}, m_1)$; and
\item $w(m_i) = w(\widehat m_i)$ for $i \in \{2, \dots, \ell-1\}$.
\end{enumerate}
\end{corollary}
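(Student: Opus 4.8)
The plan is to prove the corollary by induction on $\ell$, sliding $m_1$ rightward through the sequence one move at a time via repeated application of~\cref{l:swap_ii}, always in the role where $\widetilde m=\{x_1,x_\ell\}_{\IImin}$ and the crossing $z$ of that lemma is $x_\ell$. The base case $\ell=2$ is essentially vacuous: the claimed sequence is just $(m_1)$, which is feasible for $(D,S)$ because $m_1$ is greedy, and (i) holds trivially since both sides equal $(D,S)(m_1)$.

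Before the inductive step I would record two facts. First, \emph{each of $m_2,\dots,m_{\ell-1}$ avoids $x_\ell$}: none of them removes $x_\ell$, since $x_\ell$ is removed only by $m_\ell$ and so survives every earlier move; and if some $m_j$ with $2\le j\le\ell-1$ were a $\III$, $\Ipl$ or $\IIpl$ move touching $x_\ell$ (as a triangle vertex, or on an edge incident to it), then, being special, it would put $x_\ell$ into $S(m_1,\dots,m_{j-1})$, contradicting~\cref{lem:special_set} (that set contains only crossings lying in $S$ or not in $D$, while $x_\ell\in D\setminus S$); and a greedy move can touch $x_\ell$ only by removing it. In particular $x_\ell\ne y_1$ when $\ell\ge3$ (both are removed, $y_1$ by $m_1$ and $x_\ell$ later), so the distinctness hypothesis of~\cref{l:swap_ii} holds. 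Second, since $y_1\ne x_\ell$ the $\{x_1,y_1\}$- and $\{x_1,x_\ell\}$-bigons occupy opposite corners of $x_1$, so the two edges of the $\{x_1,x_\ell\}$-bigon are exactly the two strands at $x_1$ that $m_1$ reroutes; hence \emph{both edges of $D(m_1)$ created by $m_1$ are incident to $x_\ell$}, and therefore any move of $D(m_1)$ that avoids $x_\ell$ uses only edges and crossings common to $D$ and $D(m_1)$ -- in particular it leaves the $\{x_1,x_\ell\}$-bigon of $D$ intact.

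For the inductive step $\ell\ge3$, apply~\cref{l:swap_ii} with $m=m_1$, $m'=m_2$, $\widetilde m=\{x_1,x_\ell\}_{\IImin}$, $z=x_\ell$; its hypotheses are met ($m_1$ and $\widetilde m$ are feasible and greedy in $D$ by assumption, and $m_2$ avoids $x_\ell$). This gives $\widehat m_2$ of the same type as $m_2$, feasible for $(D,S)$, with $m_1$ feasible for $(D,S)(\widehat m_2)$ and $(D,S)(m_1,m_2)=(D,S)(\widehat m_2,m_1)$. By the second fact above, $\widehat m_2$ leaves the $\{x_1,x_\ell\}$-bigon intact, so $\{x_1,x_\ell\}_{\IImin}$ is still feasible -- hence greedy -- for $(D,S)(\widehat m_2)$, and $x_1,y_1,x_\ell,y_\ell\notin S(\widehat m_2)$ by~\cref{lem:special_set}. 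Thus the shorter sequence $(m_1,m_3,\dots,m_\ell)$, which is feasible for $(D,S)(\widehat m_2)$ because $(D,S)(\widehat m_2,m_1)=(D,S)(m_1,m_2)$, satisfies all the hypotheses of the corollary, and the induction hypothesis yields $\widehat m_3,\dots,\widehat m_{\ell-1}$. Concatenating $\widehat m_2$ with these gives $(\widehat m_2,\dots,\widehat m_{\ell-1},m_1)$; property (ii) holds since each $\widehat m_j$ has the same type, hence weight, as $m_j$, and property (i) follows by chaining $(D,S)(m_1,m_2)=(D,S)(\widehat m_2,m_1)$ with the equality supplied by the induction hypothesis -- exactly the bookkeeping in the proof of~\cref{c:swap_i}.

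The step I expect to be the main obstacle is the second fact, that the auxiliary bigon $\{x_1,x_\ell\}$ survives each swap: it forces one to re-examine the disjointness-of-private-balls analysis inside the proof of~\cref{l:swap_ii}, checking that $B(m_1)$ can be taken to contain exactly the region of $D$ altered by $m_1$ and that a move avoiding $x_\ell$ stays clear of the two resulting edges at $x_\ell$. Everything else -- the first fact and the composition of swaps -- is routine.
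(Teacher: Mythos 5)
Your overall structure — induction on $\ell$, applying Lemma~\ref{l:swap_ii} to peel off $m_2$ as $\widehat m_2$, then invoking the induction hypothesis on the shorter sequence for $(D,S)(\widehat m_2)$, and chaining the equalities — is exactly the paper's proof. Your ``first fact'' (that $m_2,\dots,m_{\ell-1}$ avoid $x_\ell$) is also essentially what the paper needs and the argument via~\cref{lem:special_set} is sound, though your phrasing (``it would put $x_\ell$ into $S(\dots)$'') is backwards: a special $\III$/$\Ipl$/$\IIpl$ move \emph{requires} $x_\ell$ to already be special, which~\cref{lem:special_set} rules out since $x_\ell\in D\setminus S$.

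The place where you depart from the paper is your ``second fact'', and you correctly identify it as the weak point. You attempt a fresh private-ball/topological argument to show the $\{x_1,x_\ell\}$-bigon of $D$ survives $\widehat m_2$, but the paper already packaged exactly this geometric content in Lemma~\ref{l:swap_i}(i): apply it in $D$ with $m=\{x_1,x_\ell\}_{\IImin}$ and $m'=\widehat m_2$, which immediately yields that $\{x_1,x_\ell\}_{\IImin}$ is feasible and greedy in $(D,S)(\widehat m_2)$. All that has to be checked to invoke it is that $\widehat m_2$ avoids $\{x_1,x_\ell\}$ — and this is verified by a short case split on whether $\widehat m_2$ is special (then it avoids all non-special crossings, in particular $x_1,x_\ell\notin S$) or greedy (then $\widehat m_2=m_2$, which avoids $x_1$ because $x_1$ has been removed before $m_2$, and avoids $x_\ell$ by your first fact). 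As stated, your second fact is also not quite correct: a $\Imin$ move in $D(m_1)$ at some crossing $w$ can affect an edge whose other endpoint is $x_\ell$, so the move need not ``use only edges common to $D$ and $D(m_1)$''; the right conclusion (that the bigon is undisturbed) is better reached through the disjoint-ball argument already proven in Lemma~\ref{l:swap_i} rather than by re-deriving it here.

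One more small gap: your one-line claim ``$x_1,y_1,x_\ell,y_\ell\notin S(\widehat m_2)$ by~\cref{lem:special_set}'' tacitly assumes $y_\ell\in D$. This does hold, but it is not automatic: one must argue that $y_\ell$ is not a crossing introduced during $(m_1,\dots,m_{\ell-1})$, because otherwise $m_\ell$ would be a special $\IImin$ move, forcing $x_\ell\in S(m_1,\dots,m_{\ell-1})$, contradicting $x_\ell\in D\setminus S$. The paper makes this explicit; you should too. With the Lemma~\ref{l:swap_i}(i) substitution for your second fact and this small addition, the proof closes.
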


\begin{proof}
  We will prove the corollary by induction in $\ell$. For $\ell=2$, there is
  nothing to prove (considering $m_2, \dots, m_{\ell-1}$ as an empty sequence).

  Let us assume $\ell \geq 3$. Let us apply~\cref{l:swap_ii} with $m = m_1$
  and $m' = m_2$, obtaining $\widehat m_2 := \widehat m'$. In particular, we
  get $w(m_2) = w(\widehat m_2)$, verifying (ii) for $i=2$.

  We also
  know that $\widehat m_2$ is feasible for $(D,S)$. Let $D' := D(\widehat m_2)$ and $S'
  = S(\widehat m_2)$. Then $m_1$ is feasible for $(D',S')$ by item (i) of
 ~\cref{l:swap_ii} and we get 
  \begin{equation}
    \label{e:DS_swap_ii}
    (D,S)(m_1, m_2) = (D,S)(\widehat m_2, m_1) = (D',S')(m_1)
  \end{equation}
by item (ii) of~\cref{l:swap_ii}.
This together gives that  the sequence
  $(\widehat m_2, m_1, m_3, m_4, \dots, m_{\ell})$ is feasible for $(D,S)$,
  therefore $(m_1, m_3, m_4, \dots, m_\ell)$ is feasible for $(D', S')$.
  We want to check the
  induction  assumptions for $D'$ and this sequence. We need to check:
\begin{itemize}
  \item $m_1$ is feasible for $(D',S')$. This follows from item (i) of
   ~\cref{l:swap_ii}.
  \item $x_1, y_1, x_\ell, y_\ell$ do not belong to $S'$. This follows from the
    assumption $x_1, y_1, x_\ell, y_\ell \not \in S$ and~\cref{o:Sm}. Indeed if
    $\widehat m_2$ is greedy, then $S'=S$ and there is nothing to do. Thus we
    may assume that $\widehat m_2$ is special and $S' = S \triangle X(\widehat
    m_2)$.  Then it is sufficient to check that none of $x_1, y_1, x_\ell,
    y_\ell$ belongs to $X(\widehat m_2)$. If $\widehat m_2$ is a $\Imin$ move,
    $\IImin$ move or a $\III$ move, then $X(\widehat m_2) \cap \{x_1, y_1,
    x_\ell, y_\ell\}$ is empty because $\widehat m_2$ is special with respect to
    $S$, thus it cannot remove/affect a crossing not in $S$. If $\widehat m_2$
    is a $\Ipl$ move or a $\IIpl$ move, then $X(\widehat m_2)$ is the set of
    crossing created by $\widehat m_2$. The crossings $x_1, y_1, x_\ell$ cannot
    be created by $\widehat m_2$ because they are already present in $D$ (due to
    feasible moves $m_1$ and $m_\ell$). However, $y_\ell$ is a crossing of $D$
    as well as it is removed by $m_\ell$ while it cannot be created during the
    sequence $(m_1, \dots, m_{\ell-1})$ because this would make $m_\ell$ special
    (with respect to $S(m_1, \dots, m_{\ell-1})$) contradicting that $x_\ell$
    does not belong to $S$.
  \item $\{x_1, x_\ell\}_{\IImin}$ is a feasible Reidemeister move in $D'$. This
    follows from~\cref{l:swap_i}(i) applied to $(D,S)$ with $m =
    \{x_1,x_{\ell}\}_{\IImin}$ and $m' = \widehat m_2$. For this, we need to check that
    $\widehat m_2$ avoids $\{x_1,x_\ell\}$. 
    
    This is almost for free if $\widehat m_2$
    is a special move in $D'$ with respect to $S'$, because in this case
    $\widehat m_2$ avoids crossings not in $S'$ by the definition of special move.
    In particular, it avoids $x_1$ and $x_\ell$ which are not in $S$ because
    they can be removed by the greedy $\{x_1, x_\ell\}$ move in $D$ with
    respect to $S$ --- therefore, they do not belong to $S'$ either.

    However, it is also easy to check that $\widehat m_2$ avoids
    $\{x_1,x_\ell\}$ if $\widehat m_2$ is greedy, in particular a $\IImin$
    move. We know that $m_2$ and $\widehat m_2$ are of the same type, thus $m_2$
    is greedy as well. In addition, from $D(m_1, m_2) = D(\widehat m_2, m_1)$
    (which is a special case of~\eqref{e:DS_swap_ii}), we get that $\widehat
    m_2$ has to remove the same crossings as $m_2$, thus $\widehat m_2 =
    m_2$. Then $m_2$ avoids $x_1$ because $(m_1,m_2)$ is a feasible sequence in
    $D$ and $m_1$ removes $x_1$ and $m_2$ avoids $x_\ell$ because $(m_1, \dots,
    m_{\ell})$ is a feasible sequence in $D$ and thus $x_\ell$, which is removed
    by $m_\ell$ has to persist until $D(m_1, \dots, m_{\ell-1})$.
\end{itemize}
Therefore, by induction we get a feasible sequence $(\widehat m_3, \widehat
m_4, \dots, \widehat m_{\ell-1}, m_1)$ for $(D',S')$ and 
\begin{equation}
  \label{e:induction_swap_ii}
  (D',S')(m_1, m_3,
  m_4, \dots, m_{\ell-1}) =  (D',S')(\widehat m_3, \widehat m_4, \dots, \widehat m_{\ell-1},
m_1).
\end{equation} In particular, $w(m_i) = w(\widehat m_i)$ for $i \in \{3,\dots,
\ell-1\}$ by induction which together with $w(m_2) = w(\widehat m_2)$ verifies
item (ii) of this lemma.

For~(i), by \eqref{e:DS_swap_ii} and~\eqref{e:induction_swap_ii} we obtain
  \begin{align*}
    (D,S)(m_1, \dots, m_{\ell-1}) &= (D,S)(m_1,m_2)(m_3, \dots, m_{\ell-1}) \\
		       &=(D',S')(m_1, m_3, m_4, \dots, m_{\ell-1})\\
		       &=(D',S')(\widehat m_3, \widehat m_4, \dots, \widehat m_{\ell-1}, m_1)\\
		       &=(D,S)(\widehat m_2, \widehat m_3, \dots, \widehat m_{\ell-1}, m_1)\\
  \end{align*}
as required, which also shows feasibility of the sequence $(\widehat m_2,
\widehat m_3, \dots, \widehat m_{\ell-1}, m_1)$ for $(D,S)$.
\end{proof}

\begin{lemma}
\label{l:swap_iii}
Let $D$ be a diagram of a knot and let $m = \{x, y\}_{\IImin}$, $\widetilde m
  = \{x,z\}_{\IImin}$ be two feasible Reidemeister moves in $D$ where $y
  \neq z$. Assume also
  that $m' = \{w, z\}_{\IImin}$ is feasible in $D(m)$ with $x, y, z, w$
  mutually distinct. Then 
  \begin{enumerate}[(i)]
\item
  $\widetilde m' = \{w, y\}_{\IImin}$ is feasible in $D(\widetilde m)$; and 
\item
 $D(m, m') = D(\widetilde m, \widetilde m')$.
\end{enumerate}
\end{lemma}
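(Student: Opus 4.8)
The statement concerns only the diagram $D$ and three $\IImin$ moves, with no special set involved, so the natural tools are the combinatorial rotation system together with the private balls of the moves, and everything is a local matter around the four crossings $w,x,y,z$. First I would pin down the picture at $x$: since $m=\{x,y\}_{\IImin}$ and $\widetilde m=\{x,z\}_{\IImin}$ are feasible with $y\neq z$, $D$ has two distinct bigon faces both having $x$ as a corner; among the four strands at $x$ (alternately over and under) each bigon occupies one quadrant, and the two quadrants must be opposite, for if they were adjacent they would share a strand, and the edge running along that strand would join $x$ both to $y$ and to $z$. Hence there is an over-arc $O$ through $x$ that joins $y$ to $z$ and overpasses $x,y,z$, and an under-arc $U$ through $x$ that joins $y$ to $z$ and underpasses $x,y,z$; in particular $x$ lies between $y$ and $z$ along each of $O$ and $U$, and the $\{x,y\}$- and $\{x,z\}$-bigons lie on opposite sides. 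Write $N^y_O,N^y_U$ (resp.\ $N^z_O,N^z_U$) for the neighbours of $y$ (resp.\ $z$) on $O$ and on $U$ other than $x$, and record the rotation and over/under data at $x$, $y$, $z$ from this picture.

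The key observation is a symmetry between the two moves. Performing $m$ deletes $x$ and $y$ and merges the two edges of the $\{x,z\}$-bigon with the edges leaving $y$ on the far side from $x$; one reads off that in $D(m)$ the neighbours of $z$ on $O$ are $N^y_O$ and $N^z_O$ and on $U$ are $N^y_U$ and $N^z_U$. Performing $\widetilde m$ deletes $x$ and $z$, and in $D(\widetilde m)$ the neighbours of $y$ on $O$ are $N^z_O$ and $N^y_O$ and on $U$ are $N^z_U$ and $N^y_U$ --- the same two sets. So ``$w$ is adjacent to $z$ along both arcs in $D(m)$'' and ``$w$ is adjacent to $y$ along both arcs in $D(\widetilde m)$'' are literally the same condition, namely $w\in\{N^y_O,N^z_O\}\cap\{N^y_U,N^z_U\}$, which the hypothesis supplies; and, tracking the merged edges, the edge of $D(\widetilde m)$ joining $y$ to $w$ on the over-arc agrees, in a neighbourhood of $w$, with the over-arc edge of $D(m)$ joining $z$ to $w$, and likewise for the two under-arc edges.

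For (i): feasibility of $m'=\{w,z\}_{\IImin}$ in $D(m)$ says exactly that its two bounding edges are adjacent in the rotation at $w$ and carry opposite passes there (adjacency at $z$ being automatic, since the four edges at $z$ alternate over/under so any over-edge is adjacent to any under-edge). By the local coincidence at $w$ just noted, the corresponding two edges of $D(\widetilde m)$ joining $y$ and $w$ are also adjacent at $w$ with the correct passes there; adjacency at $y$ and the correct passes at $y$ hold automatically, for the same alternation reason together with the fact that the two edges come from $O$ and from $U$. Hence $\{w,y\}$ is a feasible $\IImin$ move in $D(\widetilde m)$, which is (i).

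For (ii): both $D(m,m')$ and $D(\widetilde m,\widetilde m')$ are obtained from $D$ by deleting exactly the crossings $w,x,y,z$, and a $\IImin$ move merely uncrosses two arcs inside a ball without ever reconnecting strands. Thus in each of the two diagrams the arcs $O$ and $U$ wind up joined to exactly the same endpoints as in $D$, carry none of the four deleted crossings between them, and the rest of $D$ is untouched; comparing the two rotation systems gives $D(m,m')=D(\widetilde m,\widetilde m')$. The only place needing care is the local analysis of the first two paragraphs --- the quadrant argument at $x$, and above all being scrupulous about which constituent of a merged edge is the one visible at $w$ --- after which (i) and (ii) are immediate.
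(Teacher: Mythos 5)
Your approach is the paper's: both rest on the symmetry that makes $D(m)$ and $D(\widetilde m)$ the same diagram up to swapping the labels $y$ and $z$. The paper establishes this as an explicit combinatorial isomorphism (via the unique double-bigon configuration at $x$ and the resulting isotopy) and then reads (i) and (ii) off it; you try to derive the consequences directly from the neighbour-tracking and the local picture at $w$.

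There is a gap in your argument for (i). You correctly show that the two edges of $D(\widetilde m)$ joining $w$ to $y$ use the same strands at $w$ as their counterparts joining $w$ to $z$ in $D(m)$, and you then assert that adjacency at $y$ ``hold[s] automatically'' by over/under alternation. But a $\IImin$ move requires a bigon \emph{face}, and adjacency of the two candidate edges at both endpoints is necessary but not sufficient: one must also check that the two cyclic orders are consistently oriented, i.e.\ that the directed-edge cycle closes up, equivalently that the face leaving $w$ along the over-edge returns to $w$ along the under-edge. Alternation only gives that some under-strand is consecutive to the over-strand at $y$, not that it is the right one. Closing the gap amounts to verifying that the rotation at $y$ in $D(\widetilde m)$ matches, under the natural identification of incident edges, the rotation at $z$ in $D(m)$ --- which is precisely what the paper's relabelling isotopy supplies, and which would also follow from the rotation data you announce you will ``record'' in your opening paragraph but never actually use. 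Your argument for (ii) (``comparing the two rotation systems gives $D(m,m')=D(\widetilde m,\widetilde m')$'') has the same flavour, asserting the equality of rotation systems rather than deriving it. In both cases the remedy is the paper's: establish $D(m)\cong D(\widetilde m)$ as labelled combinatorial diagrams with $z\leftrightarrow y$, after which (i) and (ii) are immediate.
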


\begin{proof}

We will provide a direct argument. However, as an alternative proof, the reader
may want to check that there are combinatorially only four options (up to a
mirror image) for how $x, y, z,$ and $w$ can be arranged in the diagram, each of them
yielding the required conclusion; see~\cref{f:four_options}.

\begin{figure}
\begin{center}
  \includegraphics{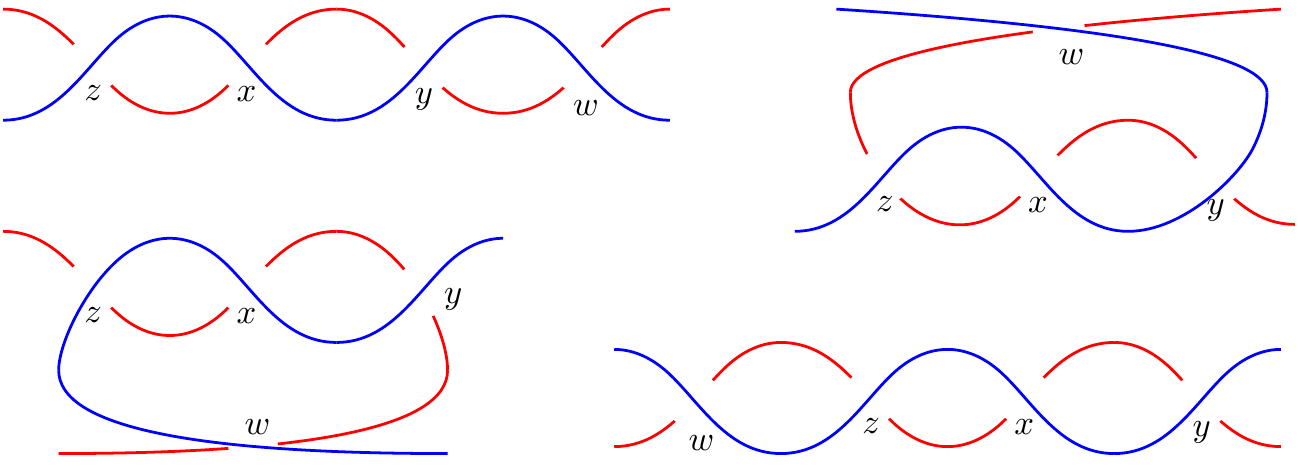}
  \caption{Four options how may $x, y, z$ and $w$ be arranged.}
  \label{f:four_options}
\end{center}
 \end{figure}

Let $\alpha$ and $\beta$ be the edges connecting $x$ and $y$ in $D$ along which
the move $m$ is performed. Similarly, let $\widetilde \alpha$ and $\widetilde
\beta$ be the edges connecting $x$ an $z$ along which the move $\widetilde m$ is
performed. Let us also assume that $\alpha$ and $\widetilde \alpha$ enter $x$ as
overpasses and $\beta$ and $\widetilde \beta$ as underpasses. We also know that
$\alpha, \beta, \widetilde \alpha$ and $\widetilde \beta$ are four distinct
edges because $y \neq z$. Finally, because the `bigon' $B$ between $\alpha$ and
$\beta$ is empty as well as the `bigon' $\widetilde B$ between $\widetilde
\alpha$ and $\widetilde \beta$ is empty, and those two bigons touch at $x$,
there is a unique way, up to an isotopy and a mirror image, how this portion of
diagram may look like; see~\cref{f:double_bigon}.

\begin{figure}
  \begin{center}
    \includegraphics{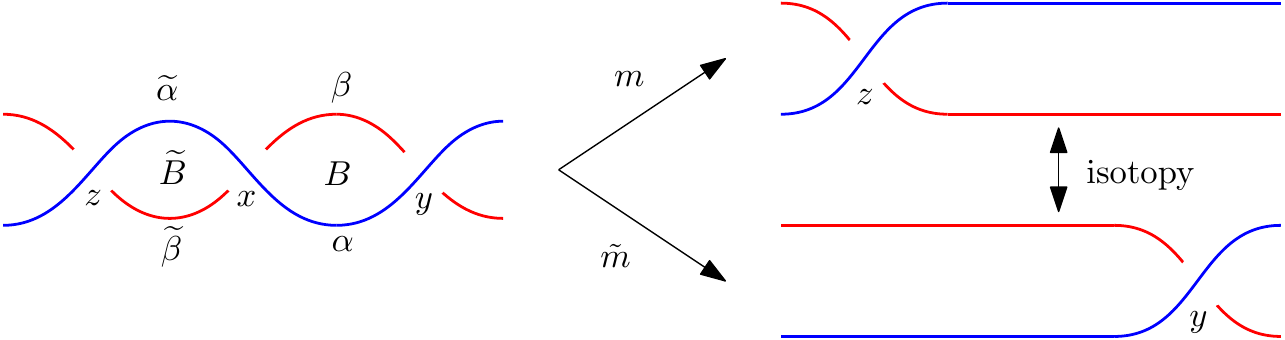}
  \end{center}
  \caption{Left: Portion of the
    diagram $D$ around $x$, $y$ and $z$. Right: results after applying $m$ and
    $\widetilde m$.}
  \label{f:double_bigon}
\end{figure}

Now, when we compare $D(m)$ and $D(\widetilde m)$ we get the same diagram up to
isotopy and relabelling $z \leftrightarrow y$. This in particular gives (i):
if $m' = \{w, z\}_{\IImin}$ is feasible in $D(m)$, then after relabelling, we get
that $\widetilde m' = \{w, y\}_{\IImin}$ is feasible in $D(\widetilde m)$. 

In addition, because of relabelling, performing $m'$ in $D(m)$ and $\widetilde
m'$ in $D(\widetilde m)$ yields the same diagrams $D(m,m')$ and $D(\widetilde m,
\widetilde m')$ with respect to our combinatorial description of diagrams.
Indeed, if $D$ contains only four crossings $x$, $y$, $z$ and $w$, then $D(m,m')
= D(\widetilde m, \widetilde m')$ is the diagram without crossings. If $D$
contains more than four crossings, then we know that the sets of crossings is
the same in $D(m,m')$ and $D(\widetilde m, \widetilde m')$. Thus it remains to
check that we also get exactly same edges (that is we get the same combinatorial
description of edges). Let $\varepsilon' = (v_1, v_2, n_1, n_2)$ be an edge in
$D(m,m')$. Its pre-image before performing $m'$ is an arc $\varepsilon$ in
$D(m)$ connecting $v_1$ and $v_2$ entering $v_1$ in strand $n_1$ and entering
$v_2$ in strand $n_2$, possibly passing through $z$ and $w$ but not through
other crossings of $D(m)$. Because of isotopy between $D(m)$ and $D(\widetilde
m)$, this arc is isotoped to an arc $\widetilde \varepsilon$ in $D(\widetilde
m)$ again connecting $v_1$ and $v_2$ entering $v_1$ in strand $n_1$ and entering
$v_2$ in strand $n_2$, possibly passing through $y$ and $w$ but not through
other crossings of $D(\widetilde m)$. After performing the move $\widetilde m'$,
we get the edge $(v_1, v_2, n_1, n_2)$ in $D(\widetilde m, \widetilde m')$ out
of this arc. Analogously, by reverting this argument, an edge in $D(\widetilde
m, \widetilde m')$ induces the same edge in $D(m,m')$.
\end{proof}
Now, we have all tools to prove~\cref{t:swappable}. For reader's
convenience, we recall the statement.

\begin{tswappable}
  Let $D$ be a diagram of a knot and $S$ be a set of crossings in $D$. Let $(m_1,
  \dots, m_\ell)$ be a feasible sequence of Reidemeister moves for $(D,S)$, inducing
  an untangling of $D$ with defect $k$.
  Let $\widetilde m$ be a greedy move in $D$
  with respect to $S$. Then there is a feasible sequence of
  Reidemeister moves for $(D,S)$ starting with $\widetilde m$ and inducing an
    untangling of $D$ with defect $k$.
\end{tswappable}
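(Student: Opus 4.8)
The plan is to find, inside the given untangling sequence $(m_1,\dots,m_\ell)$, a $\IImin$ move that is "compatible" with $\widetilde m$ in the sense that it can be swapped to the front using the corollaries of this section, and then to also bring $\widetilde m$ itself to the front. Since $\widetilde m$ is greedy it has weight $0$, so if we succeed the resulting sequence has the same total weight and hence, by~\cref{l:defect}, induces an untangling of the same defect $k$. First I would dispose of the trivial case: if $\widetilde m$ itself already appears as $m_i$ for some $i$ with all earlier moves irrelevant, or more usefully, if no move among $m_1,\dots,m_\ell$ interacts with the two crossings removed by $\widetilde m$, then $\widetilde m$ commutes past everything; but this is too optimistic in general, so the real work is a case analysis on how the crossings of $\widetilde m$ get consumed.

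Write $\widetilde m = \{a,b\}_{\IImin}$. Since the sequence untangles $D$, the crossings $a$ and $b$ are eventually removed. Let $m_j$ be the first move in the sequence that removes $a$ or $b$ (such a move exists). The key sub-case distinction is on the type of $m_j$ and on whether it removes just one or both of $a,b$. If $m_j = \{a,b\}_{\IImin}$ (it removes both), then no earlier move $m_1,\dots,m_{j-1}$ touches $a$ or $b$, so each of them avoids $\{a,b\}$; repeatedly applying~\cref{l:swap_i} (packaged as~\cref{c:swap_i}) we can move $m_j$ to the front, obtaining a feasible sequence $(m_j,\widehat m_1,\dots)$ for $(D,S)$ of the same defect. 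Now $m_j$ is a greedy $\IImin$ move removing exactly the crossings $\widetilde m$ removes, hence $m_j = \widetilde m$ as combinatorial objects, and we are done. If instead $m_j$ removes exactly one of them, say $a$, simultaneously with some other crossing $c\ne b$, then $m_j=\{a,c\}_{\IImin}$; if $m_j$ is greedy, then moving it to the front as before gives a sequence starting with $\{a,c\}_{\IImin}$, and now~\cref{c:swap_ii} (with $m_1$ the current front move $\{a,c\}$, $\widetilde m = \{a,b\}$, and $m_\ell$ chosen to be a later $\IImin$ move removing $b$) lets us pull $\widetilde m$ to the front; finally~\cref{l:swap_iii} handles the interchange of the two $\IImin$ moves $\{a,c\}$ and $\{a,b\}$ combinatorially, producing the desired sequence beginning with $\widetilde m$ of the same defect.

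The remaining, harder cases are when $m_j$ is not a greedy $\IImin$ move: it could be a $\Imin$ move removing $a$, a $\III$ move affecting $a$, or a special $\IImin$ move removing $a$ together with a special crossing. In each of these, note that $a$ (being removed or affected by such a move) would have to be a special crossing at that point; but the same holds for $b$ by the symmetric argument applied to the first move removing $b$. However, $\widetilde m$ is greedy, so $a,b\notin S$, and by~\cref{lem:special_set} a crossing of $D$ is in $S(m_1,\dots,m_{i-1})$ iff it was in $S$; thus $a$ and $b$ remain non-special throughout, so they can only be removed by a greedy $\IImin$ move or (for $a$) a special move that does not touch non-special crossings — contradiction unless the move removing them is greedy $\IImin$. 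This observation should rule out all the bad cases and reduce everything to the two cases handled above. The step I expect to be the main obstacle is precisely making this "non-special crossings stay non-special, hence can only be removed by greedy $\IImin$ moves" argument airtight, and then correctly threading the three swap corollaries (\cref{c:swap_i}, \cref{c:swap_ii}, \cref{l:swap_iii}) together with the right choices of $m_1$ and $m_\ell$ so that their hypotheses (greediness, the auxiliary feasible move $\{x_1,x_\ell\}_{\IImin}$, the distinctness conditions) are all met; the bookkeeping of which crossings are identified across the combinatorially-renamed moves $\widehat m_i$ is where errors would creep in.
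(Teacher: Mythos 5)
Your high-level plan coincides with the paper's: you locate the moves that eventually remove the two crossings $a,b$ of $\widetilde m$, use the ``non-special crossings stay non-special'' observation (essentially \cref{lem:special_set} / \cref{l:Smulti}) to show those moves must be greedy $\IImin$ moves, and then thread \cref{c:swap_i}, \cref{c:swap_ii} and \cref{l:swap_iii} to produce a sequence starting with $\widetilde m$. This matches the structure of the paper's proof. However, the plan as written has concrete flaws.

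First, in the subcase $m_j=\{a,c\}_{\IImin}$ with $c\ne b$, you propose to ``move $m_j$ to the front as before'' via \cref{c:swap_i}. But \cref{c:swap_i} requires the move being moved to the front to be feasible and greedy \emph{in $D$} with respect to $S$. For $m_j=\{a,b\}_{\IImin}=\widetilde m$ this is given, but for $m_j=\{a,c\}_{\IImin}$ it is not: the bigon between $a$ and $c$ may only come into existence after the earlier moves $m_1,\dots,m_{j-1}$ have been performed, so $\{a,c\}_{\IImin}$ need not be feasible in $D$ at all. The paper sidesteps this by not bringing $m_j$ to the front; it instead passes to $D':=D(m_1,\dots,m_{j-1})$ and applies \cref{c:swap_ii} directly to the tail $(m_j,\dots)$, where $m_j$ is already first (hence feasible) by construction.

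Second, you correctly identify that one must verify the hypothesis ``$\{x_1,x_\ell\}_{\IImin}$ is feasible'' in \cref{c:swap_ii} and likewise the feasibility of $\widetilde m$ at the diagram where \cref{l:swap_iii} is applied, but the ``non-special stays non-special'' argument does not deliver this. Knowing $a,b\notin S(m_1,\dots,m_{i-1})$ tells you these crossings are not special; it does not tell you the bigon between them survives. The missing idea in the paper is a geometric one: since $a,b\notin S$, no $\Ipl$/$\IIpl$ move is performed on the edges of the bigon (their endpoints are not special), and no earlier move removes or affects $a$ or $b$; hence all the intervening moves occur outside the private ball $B(\widetilde m)$, and the bigon persists. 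Without this argument your plan has a genuine gap. Finally, a smaller inaccuracy: after \cref{l:swap_iii} produces $\widetilde m$, it sits in the middle of the sequence, not at the front, so one still needs to fold back to the first case (apply \cref{c:swap_i} once more), as the paper does.
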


\begin{proof}[Proof of~\cref{t:swappable}]
  First, let us assume that $\widetilde m = m_j$ for some $j \in [\ell]$. If $j
  =1$, there is nothing to prove, thus we may assume $j \geq 2$. Then, by
  using~\cref{c:swap_i} on the sequence $(m_1, \dots, m_j)$, we get a sequence
  of moves $(m_j, \widehat m_1, \dots, \widehat m_{j-1})$ feasible for
  $(D,S)$. By item~(i) of~\cref{c:swap_i}, the sequence $(m_j, \widehat m_1,
  \dots, \widehat m_{j-1}, m_{j+1}, \dots, m_{\ell})$ is also feasible for
  $(D,S)$ and induces an untangling of $D$. By item~(ii) of~\cref{c:swap_i}
  and~\cref{l:defect}, the defect of this sequence equals $k$, and that is what
  we need.

  Thus it remains to consider the case where $\widetilde m \neq m_j$ for every
  $j \in [\ell]$. Because, $\widetilde m$ is greedy, it is in particular a
  $\IImin$ move, say $\widetilde m = \{x,z\}_{\IImin}$. Because the final
  diagram $D(m_1, \dots, m_\ell)$ contains no crossings, the crossings $x$ and
  $z$ have to be removed by some moves in the sequence $(m_1, \dots,
  m_\ell)$. Say that a move $m_i$ removes $x$ and $m_j$ removes $z$. Since we
  assumed that $\widetilde m$ is greedy with respect to $S$, we get $x,z \notin
  S$. In particular $x \notin S(m_1, \dots, m_{i-1})$ by~\cref{l:Smulti} ($x$
  never belongs to any $X(m_{i'})$ for $i' < i$ as it is not removed earlier),
  thus $m_i$ has to be greedy move in $D(m_1, \dots, m_{i-1})$ with respect to
  $S(m_1, \dots, m_{i-1})$. Similarly, $m_j$ is greedy in $D(m_1, \dots,
  m_{j-1})$ with respect to $S(m_1, \dots, m_{j-1})$. Because we assume that
  $\widetilde m$ is not in the sequence $(m_1, \dots, m_{\ell})$, we get $i \neq
  j$. Without loss of generality, let us assume $i < j$. Let $y$ and $w$ be such
  that $m_i = \{x,y\}_{\IImin}$ and $m_j = \{w,z\}_{\IImin}$. As these moves are
  greedy, we get $y, w \not \in S$ (again using~\cref{l:Smulti}).

  Let $D' := D(m_1, \dots, m_{i-1})$ and $S' := S(m_1, \dots,
  m_{i-1})$. By~\cref{c:swap_ii}, applied to $D'$, $S'$ and the sequence $(m_i,
  \dots, m_j)$ feasible for $(D', S')$, we get another sequence $(\widehat
  m_{i+1}, \dots, \widehat m_{j-1}, m_i)$ feasible for $(D',S')$. By item~(i)
  of~\cref{c:swap_ii}, we get that $(\widehat m_{i+1}, \dots, \widehat m_{j-1},
  m_i, m_j, \dots, m_\ell)$ is also feasible for $(D', S')$.

  Next, let $D'' := D'(\widehat m_{i+1}, \dots, \widehat m_{j-1})$ and $S'' :=
  S'(\widehat m_{i+1}, \dots, \widehat m_{j-1})$. Because $x, z \notin S$, the
  edges between $x$ and $z$ are not used to perform \Ipl or \IIpl moves and the
  moves $m_1, \dots, m_{i-1}, \widehat m_{i+1}, \dots, \widehat m_{j-1}$ are
  performed outside of the private ball $B(\widetilde{m})$. By~\cref{l:Smulti},
  $x,z \notin S''$ and $\widetilde{m}$ is feasible in
  $(D'',S'')$. By~\cref{l:swap_iii} used in $D''$, we get that $\widetilde m' :=
  \{w,y\}_{\IImin}$ is feasible in $(D'',S'')(\widetilde m)$ and $(D'',S'')(m_i,
  m_j) = (D'',S'')(\widetilde m, \widetilde m')$. Altogether, by expanding $D''$
  and $D'$
  \begin{equation}
\label{e:modified_seq}
  (m_1, \dots, m_{i-1}, \widehat m_{i+1}, \dots, \widehat m_{j-1}, \widetilde
  m, \widetilde m', m_{j+1}, \dots, m_{\ell})
  \end{equation}
 is an untangling of $D$, feasible for $(D,S)$. We also get that the defect of
 this untangling is equal to $k$ by~\cref{l:defect} and item~(ii)
 of~\cref{c:swap_ii}, when we used it. Note that all the moves $m_i$, $m_j$,
 $\widetilde m$ and $\widetilde m'$ are $\IImin$ moves, thus they do not
 contribute to the weight.

  The sequence~\eqref{e:modified_seq} is not the desired sequence yet, because it does not start with
  $\widetilde m$. However, it contains $\widetilde m$, thus we can further
  modify this sequence to the desired sequence starting with $\widetilde m$ as
  in the first paragraph of this proof.
\end{proof}

%\fi

\section{W[P]-hardness}
\label{s:hardness}

\subsection{Minimum axiom set}
We prove the W[P]-hardness in Theorem~\ref{t:main} by an FPT-reduction
from \sctt{Minimum axiom set} defined below. 
It is well known
that~\sctt{Minimum axiom set} is W[P]-hard;
see~\cite[Exercise~3.20]{flum-grohe06} or
\cite[Lemma~25.1.3]{fund_param_complexity} (however, let us recall that our
definition of FPT-reduction is consistent with~\cite{flum-grohe06}).

\begin{problem*}[\sctt{Minimum axiom set}] \
	
	\begin{tabular}{ll}
		\sctt{Input} & A finite set $S$, and \\ &a finite set
		$\RR$ which consists of pairs of form $(T, t)$ where $T \subseteq
		S$ and $t \in S$.\\
		\sctt{Parameter} & $k$. \\
		\sctt{Question} & Does there exists a subset $S_0 \subseteq S$ of
		size $k$ such that if we define inductively $S_i$\\ &to be 
		the union of $S_{i-1}$ and all $t \in S$ such that there is $T \subseteq
		S_{i-1}$ with $(T, t) \in \RR$, \\
		&then $\bigcup_{i=1}^{\infty} S_i = S$?
	\end{tabular}
\end{problem*}

The problem above deserves a brief explanation. The elements of $S$ are called
\emph{sentences} and the elements of $\RR$ are relations. A relation $(T,t)$
with $T = \{t_1, \dots, t_m\}$ should be understood as an implication
\[
t_1 \wedge t_2 \wedge \cdots \wedge t_m \Rightarrow t.
\]
Given a set $S_0 \subseteq S$, let us define the \emph{consequences} of $S_0$ as
$c(S_0) := \bigcup_{i=1}^{\infty} S_i$ where $S_i$ is defined as in the
statement of the problem. Intuitively, $c(S_0)$ consists of all sentences that
can be deduced from $S_0$ via the \emph{relations} (implications) in $\RR$. As we work
with finite sets, $c(S_0) = S_i$ for some high enough $i$ (it is sufficient to
take $i = |S|$). A set $A$ is a set of \emph{axioms} if $c(A) = S$. Therefore,
the goal of the minimum axiom set problem is to determine whether there is a set
of axioms of size $k$. Note that the axiom sets are upward-closed: If $A$ is an
axiom set and $A \subseteq A' \subseteq S$, then $A'$ is an axiom set as well.

The following boosting lemma is very useful in our reduction. 

\begin{lemma}[Boosting lemma]
	\label{l:boosting}
	Let $(S, \RR)$ be an input of the minimum axiom set problem (ignoring
        the parameter for the moment). Let $\mu \colon S \to \Z$ be a
        non-negative function. Given $U \subseteq S$, let $\mu(U) = \sum_{s \in
          U} \mu(s)$.\footnote{This means that $\mu$ is in fact a measure on
        $S$.}Assume that $\mu(U) \ge 1$ for all $U$ such that $S \setminus U$ is
        not an axiom set. (Equivalently, $U$ meets every axiom set.) Then
        $\mu(S) \geq k^*$ where $k^*$ is the size of a minimum axiom set.
\end{lemma}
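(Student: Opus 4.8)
The plan is to prove the contrapositive-flavored statement directly: I will build a minimum axiom set of size at most $\mu(S)$, which immediately gives $k^* \le \mu(S)$, i.e.\ $\mu(S) \ge k^*$. The key observation is that the hypothesis ``$\mu(U) \ge 1$ whenever $S \setminus U$ is not an axiom set'' is equivalent to saying that the \emph{support} $\supp(\mu) := \{s \in S : \mu(s) \ge 1\}$ is itself an axiom set. Indeed, if $S \setminus \supp(\mu)$ were not an axiom set, then taking $U = \supp(\mu)$ we would need $\mu(\supp(\mu)) \ge 1$, which is true, so that gives no contradiction; the right way to see it is: suppose $\supp(\mu)$ is \emph{not} an axiom set, i.e.\ $c(\supp(\mu)) \ne S$. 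Set $U := S \setminus c(\supp(\mu))$, which is nonempty. Then $S \setminus U = c(\supp(\mu)) \supseteq \supp(\mu)$, and since axiom sets are upward-closed and $\supp(\mu)$ is not an axiom set we need a bit more care — actually $c(\supp(\mu))$ could still be a proper subset that is not an axiom set, but by idempotence $c(c(\supp(\mu))) = c(\supp(\mu)) \ne S$, so $S \setminus U = c(\supp(\mu))$ is not an axiom set, hence by hypothesis $\mu(U) \ge 1$. But $U \subseteq S \setminus \supp(\mu)$, so $\mu(U) = 0$, a contradiction. Therefore $\supp(\mu)$ is an axiom set.

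Given that $\supp(\mu)$ is an axiom set, I then bound its size: $\mu(S) = \sum_{s \in S} \mu(s) = \sum_{s \in \supp(\mu)} \mu(s) \ge \sum_{s \in \supp(\mu)} 1 = |\supp(\mu)|$, using that $\mu$ is non-negative integer-valued and each term in the support is at least $1$. Since $\supp(\mu)$ is an axiom set of size $|\supp(\mu)| \le \mu(S)$, and $k^*$ is by definition the size of a \emph{minimum} axiom set, we conclude $k^* \le |\supp(\mu)| \le \mu(S)$, which is exactly the claimed inequality $\mu(S) \ge k^*$.

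The main obstacle, such as it is, is getting the equivalence in the first paragraph exactly right — the hypothesis is phrased in terms of ``$U$ meets every axiom set'' / ``$S \setminus U$ is not an axiom set'', and one has to be careful that the natural candidate $U = S \setminus c(\supp(\mu))$ really does satisfy ``$S \setminus U$ not an axiom set'' (this uses idempotence of the consequence operator $c$, i.e.\ $c(c(A)) = c(A)$, which holds because $c(A) = S_i$ stabilizes) and that $U$ is disjoint from $\supp(\mu)$ (immediate from $\supp(\mu) \subseteq c(\supp(\mu))$). Everything after that is a one-line counting argument. I would also remark in passing that the lemma is tight in the sense that it recovers the trivial bound when $\mu$ is the indicator of an arbitrary axiom set, and the ``boosting'' terminology refers to the fact that in the hardness reduction one chooses $\mu$ to be a large multiple on a cleverly chosen set, forcing the defect to be large rather than merely positive.
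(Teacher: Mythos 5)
Your proof is correct and takes essentially the same approach as the paper: identify the zero set $Z := S \setminus \supp(\mu)$, argue that $\supp(\mu) = S \setminus Z$ is an axiom set, and then count $\mu(S) \ge |\supp(\mu)| \ge k^*$. However, the detour through $U = S \setminus c(\supp(\mu))$ and idempotence of $c$ is unnecessary: taking $U = Z$ directly gives $\mu(U) = 0 < 1$, so the contrapositive of the hypothesis immediately forces $S \setminus Z = \supp(\mu)$ to be an axiom set, which is exactly the paper's one-line argument.
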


In our reduction, we will need to check that $\mu(S) \geq k^*$ for a suitable
function $\mu$. However, it will be much easier to verify only $\mu(U) \ge 1$
for suitable sets $U$.

\begin{proof}
	Let $Z := \{s \in S\colon \mu(s) = 0\}$ be the zero set of $\mu$. Then
        $\mu(Z) = 0$, thus $S \setminus Z$ is a set of axioms by the
        assumptions.  This gives $|S \setminus Z| \geq k^*$ and, in addition,
        $\mu(S) = \mu (S \setminus Z) \geq |S \setminus Z| \geq k^*$ because
        $\mu(s) \geq 1$ for every $s \in S \setminus Z$.
\end{proof}

\subsection{Construction of the reduction}
Our aim is to show that there is an FPT-reduction from the \sctt{Minimum axiom
	set} to 
\sctt{Unknotting via defect}.

\paragraph{Preprocessing.} Let $(S, \mathcal{R}, k)$ be an instance of the
\sctt{Minimum axiom set}. We first perform a
preprocessing on $(S, \mathcal{R}, k)$. We repeatedly apply the following reduction rules:
\begin{enumerate}[(i)]
	\item If $(\emptyset, t) \in \RR$, then we remove $t$ from $S$ and we replace 
	all relations $(T', t')$ with $(T' \setminus \{t\}, t')$  if $t \neq t'$ while we
	remove all relations of the form $(T, t)$ from $\RR$.
	\item If there is $s \in S$ such that $\RR$ contains no relation of a form
	$(T, s)$, then we replace all relations $(T', t')$ with $(T' \setminus
	\{s\}, t')$, we remove $s$ from $S$ and we decrease $k$ by $1$.
	\item If there is $(T, t) \in \RR$ with $t \in T$, then we remove this relation
	$(T,t)$ from $\RR$.
\end{enumerate}

It is easy to check that the preprocessing can be done in polynomial time. In
addition, it yields an equivalent instance: For (i) we easily see that such $t$
belongs to $S_1$ starting with arbitrary $S_0$.  Thus $t \in c(S_0)$ for
an arbitrary $S_0$. Consequently, if we want to deduce that $t' \in c(S_0)$ via
$(T', t')$, it is sufficient to use $(T' \setminus\{t\}, t')$ as we already
know that $t \in c(S_0)$. We also may remove all relations $(T,t)$ from $\RR$
as we already know that $t \in c(S_0)$. For (ii), we see that such $s$ can
never be deduced from the relations, thus such $s$ must be in every axiom set.
Equivalently, if we remove $s$ from the instance, then we need to find an axiom
set of size $k-1$. For (iii), we see that such a relation $(T, t)$ never yields
a new sentence in $c(S_0)$ for arbitrary $S_0$.

Altogether, this means that this preprocessing is an FPT-reduction. Note that in
(ii) we decrease the parameter. This is consistent with the second item of our
definition of FPT-reduction by setting $g'(k) = k$.

\paragraph{Doubling the instance.} Now let $(S, \RR, k)$ be a preprocessed instance of 
the \sctt{Minimum axiom set}. We will need the following doubled instance: Let
$\widehat S := \{\widehat s\colon s \in S\}$ be a copy of $S$ (formally
speaking, $\widehat s$ could be identified, for example with the pair $(s, 0)$
assuming that no such pair is in $S$). Given $T = \{t_1,\dots, t_m\}\subseteq
S$, let $\widehat T:= \{\widehat t_1, \dots, \widehat t_m\}$. Then we define
$\widehat \RR := \{(\widehat T, \widehat{t}) \ \colon (T, t) \in \RR\}$. Then
$(S \cup \widehat S, \RR \cup \widehat \RR, 2k)$ is a \emph{double} of the
instance $(S, \RR, k)$.

\begin{observation}
	\label{o:double}
	The pair $(S, \RR)$ admits an axiom set of size $k$ if and only if its
        double $(S \cup \widehat S, \RR \cup \widehat \RR)$ admits an axiom set
        of size $2k$.
\end{observation}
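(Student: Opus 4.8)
The plan is to exploit the fact that, in the double, the two halves do not interact: every relation of $\RR$ has all of its sentences in $S$ and every relation of $\widehat\RR$ has all of its sentences in $\widehat S$. Writing $c_\RR$, $c_{\widehat\RR}$ and $c_{\RR\cup\widehat\RR}$ for the consequence operators attached to these relation sets (on the ground sets $S$, $\widehat S$ and $S\cup\widehat S$ respectively), the first thing I would establish, by a straightforward induction on the stages $S_i$ in the definition of the consequence operator, is that for every $B\subseteq S\cup\widehat S$, writing $B_1:=B\cap S$ and $B_2:=B\cap\widehat S$,
\[
  c_{\RR\cup\widehat\RR}(B)\;=\;c_\RR(B_1)\;\cup\;c_{\widehat\RR}(B_2).
\]
The reason is that a relation of $\RR$ can only enlarge the part of a stage lying in $S$, and it fires exactly once its premise has entered that part, so the $S$-computation proceeds exactly as it would inside $(S,\RR)$ alone, unaffected by $\widehat\RR$; symmetrically for $\widehat\RR$. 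Next I would record the trivial fact that the bijection $s\mapsto\widehat s$ carries $\RR$ to $\widehat\RR$, hence carries consequences to consequences: $c_{\widehat\RR}(\widehat A)=\widehat{c_\RR(A)}$ for every $A\subseteq S$, so $\widehat A$ is an axiom set of $(\widehat S,\widehat\RR)$ if and only if $A$ is an axiom set of $(S,\RR)$.

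With these two facts in hand the equivalence is short. Since axiom sets are upward-closed, having an axiom set of size exactly $k$ is the same as having one of size at most $k$ whenever $|S|\ge k$; and since $|S\cup\widehat S|=2|S|$, the conditions $|S|\ge k$ and $|S\cup\widehat S|\ge 2k$ coincide, while if they both fail then neither $(S,\RR)$ nor its double admits any set of the prescribed size, so the claimed equivalence holds trivially. Hence it suffices to prove that $(S,\RR)$ has an axiom set of size at most $k$ if and only if the double has one of size at most $2k$. For the forward direction, given an axiom set $A$ of $(S,\RR)$ with $|A|\le k$, the disjoint union $A\cup\widehat A$ has size $2|A|\le 2k$, and the two facts above give $c_{\RR\cup\widehat\RR}(A\cup\widehat A)=c_\RR(A)\cup c_{\widehat\RR}(\widehat A)=S\cup\widehat S$, so $A\cup\widehat A$ is an axiom set of the double. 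For the converse, given an axiom set $B$ of the double with $|B|\le 2k$, I would split it as $B_1\cup B_2$ as above; the first fact gives $c_\RR(B_1)\cup c_{\widehat\RR}(B_2)=S\cup\widehat S$, and since $c_\RR(B_1)\subseteq S$ and $c_{\widehat\RR}(B_2)\subseteq\widehat S$ this forces $c_\RR(B_1)=S$ and $c_{\widehat\RR}(B_2)=\widehat S$; hence $B_1$ and the pullback $\{s\in S:\widehat s\in B_2\}$ are both axiom sets of $(S,\RR)$, and since their sizes sum to $|B|\le 2k$, at least one of them has size at most $k$.

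The only step that needs genuine care --- the main, and rather mild, obstacle --- is the inductive claim about $c_{\RR\cup\widehat\RR}$: I would phrase the induction so that applying a relation never mixes the two halves, which ultimately comes down to the observation that the premise of a relation of $\RR$ is a subset of $S$ (dually for $\widehat\RR$) and is therefore already available in the part of the current stage lying in $S$. Everything else is pure bookkeeping; in particular the argument uses nothing about the preprocessing, so the same observation holds for an arbitrary instance of \sctt{Minimum axiom set}.
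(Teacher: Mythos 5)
Your proof is correct and follows essentially the same route as the paper's: for the forward direction take $A\cup\widehat A$, and for the converse split an axiom set of the double into its $S$- and $\widehat S$-parts, observe each is an axiom set of the respective copy, and note one has size at most $k$ (extending if needed, using upward-closure). The paper treats your decomposition identity $c_{\RR\cup\widehat\RR}(B)=c_\RR(B\cap S)\cup c_{\widehat\RR}(B\cap\widehat S)$ as immediate rather than isolating it as a lemma, and handles the ``exactly $k$ versus at most $k$'' bookkeeping more tersely, but the underlying argument is the same.
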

\begin{proof}
	If $A$ is an axiom set of size $k$ for $(S, \RR)$, then $A \cup \widehat
        A$ is an axiom set of size $2k$ for $(S \cup \widehat S, \RR \cup
        \widehat \RR)$. On the other hand, if $\bar A$ is an axiom set of size
        $2k$ for $(S \cup \widehat S, \RR \cup \widehat \RR)$, let $B := \bar A
        \cap S$ and $\widehat C:= \bar A \cap \widehat S$. Then both $B$ and $C$
        (obtained by removing the hats) are axiom sets for $(S, \RR)$.  One of
        them has size at most $k$ (and it can be extended to size $k$ if
        needed).
\end{proof}
\paragraph{Brunnians.}
A \emph{Brunnian link} is a nontrivial link that
becomes trivial whenever one of the link components is removed. We will use the
following well known construction of a Brunnian link with $\ell \geq 2$
components. We take an untangled unknot and we interlace it with two
`neighboring' unknots as in Figure~\ref{f:brunnian}, left. We repeat this
$\ell$-times and we get a Brunnian link with $\ell$ components as in
Figure~\ref{f:brunnian}, right.

\begin{figure}
	\begin{center}
	  \includegraphics[page=1]{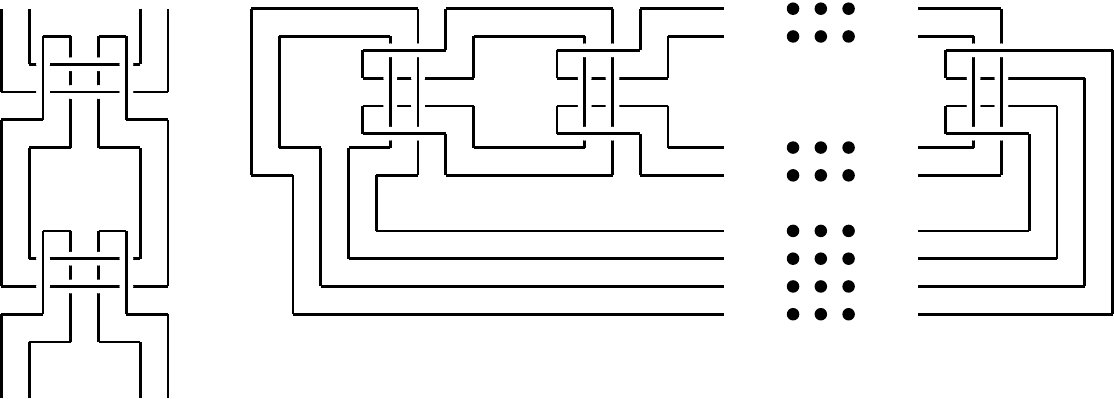}
	\end{center}
	\caption{A Brunnian link}
	\label{f:brunnian}
\end{figure}

\paragraph{Gadgets.} From now on let us assume that $(S, \RR, k)$ is
a preprocessed instance of the \sctt{Minimum axiom set} and $(S \cup \widehat S,
\RR \cup \widehat \RR, 2k)$ is its double. Our aim is to build a diagram $D(S,
\RR)$ such that $D(S, \RR)$ untangles with defect $2k$ if and only if $(S, \RR)$
has an axiom set of size $k$. We will build $D(S, \RR)$ using several
gadgets. Formally speaking, gadgets will be maps of a form $G\colon I_1 \sqcup
\cdots \sqcup I_h \to \R^2$ where $I_1 \sqcup \cdots \sqcup I_h$ stands for a
disjoint union of compact intervals.  We will work with them in a same way as
with diagrams. In particular, we assume the same transversality assumptions on
crossings as for diagrams and we have the same convention with respect to
underpasses and overpasses. We also extend the notion of arc to this setting: It
is a set $G(A)$, where $A$ is a closed subinterval of one of the intervals in
$I_1 \sqcup \cdots \sqcup I_h$.

\paragraph{Sentence gadget.} 
For each sentence $s \in S$ we define a sentence gadget $G(s)$ as follows. (We
will also create an analogous gadget $G(\widehat s)$ for $\widehat s \in
\widehat S$ which we specify after describing $G(s)$.) We consider all relations
$R \in \RR$ of the form $(T,s)$. Let $\ell = \ell(s)$ be the number of such
relations and we order these relations as $R_1(s), \dots, R_{\ell}(s)$, or
simply as $R_1, \dots, R_{\ell}$ if $s$ is clear from the context (which is the
case now).  Note that $\ell \geq 1$ due to preprocessing.

Now we take our Brunnian link with $\ell + 1$ components, which we denote by
$C_0(s), \dots, C_{\ell}(s)$, in the order along the Brunnians. We disconnect
each component of this Brunnian link in an arc touching the outer face and we
double each such disconnected component. See Figure~\ref{f:sentence_gadget},
left. For the further description of the construction, we assume that our
construction is rotated exactly as in the figure. (If $\ell \neq 2$, then we
just modify the length of $C_0(s)$ and insert more or less components $C_i(s)$
in the same way as $C_1(s)$ and $C_2(s)$ are inserted for $\ell = 2$.) Now we
essentially have a collection of interlacing arcs where each former component of
Brunnians yields a pair of parallel arcs with four loose ends.

We merge each pair of arcs, coming from $C_i(s)$ into a single arc $\gamma_i(s)$
as follows; see Figure~\ref{f:sentence_gadget}, right. For the pair coming from
$C_i(s)$ for $i \neq 0$, we connect the bottom loose ends by a straight segment
up to isotopy. In Figure~\ref{f:sentence_gadget}, right, we have isotoped the
figure a bit which will be useful in further steps of the construction, and we
call this subarc the \emph{head} of $\gamma_i(s)$.  For the pair coming from
$C_0(s)$, we first cross the arcs next to the top loose ends as well as next to
the bottom loose ends as in the figure.  Then we connect the bottom loose
ends. Note that if we remove $\gamma_1(s), \dots, \gamma_l(s)$ from the figure,
then the crossings on $\gamma_0(s)$ can be removed by a \IImin move. This
finishes the construction of $G(s)$. The gadget $G(\widehat s)$ is a mirror
image of $G(s)$ along the vertical line ($y$-axis).

\begin{figure}
	\begin{center}
	  \includegraphics[page=2,scale=.95]{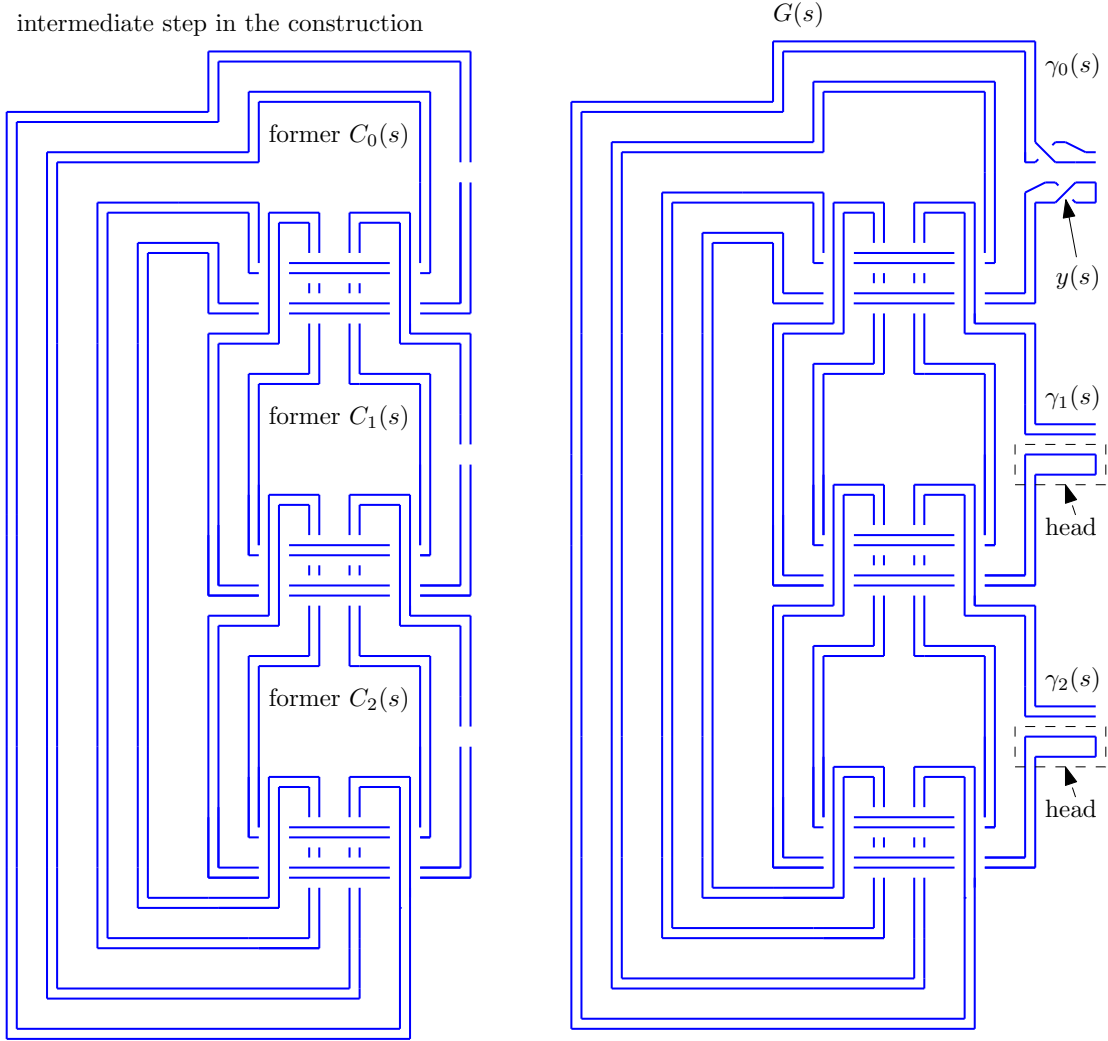}
	\end{center}
	\caption{The sentence gadget with $\ell = 2$.}
	\label{f:sentence_gadget}
\end{figure}

For later use, we also need some notation for the crossings in the sentence
gadget. When working with the lower index of the arcs $\gamma_i(s)$, we will use
arithmetic modulo $\ell + 1$.  In particular, $\gamma_{\ell+1}(s)$ coincides
with $\gamma_0(s)$. Now, observe that $\gamma_i(s)$ and $\gamma_{i+1}(s)$ have
$32$ crossings for $i \in \{0, \dots, \ell\}$. We will denote these crossings
$x^{1}_{i}(s), \dots, x^{32}_{i}(s)$ one by one from left to right and then from
top to down; see Figure~\ref{f:crossings_sentence_gadget}.\footnote{Perhaps more
natural notation would be to use $x^{1}_{i+1/2}(s)$ but we want to avoid this as
the notation is already complicated enough.} In addition, by $y(s)$ we denote
the bottom self crossings of $\gamma_0(s)$ as on Figure~\ref{f:sentence_gadget}.
We use an analogous notation also for $\widehat s$.

\begin{figure}
	\begin{center}
		\includegraphics[page=3]{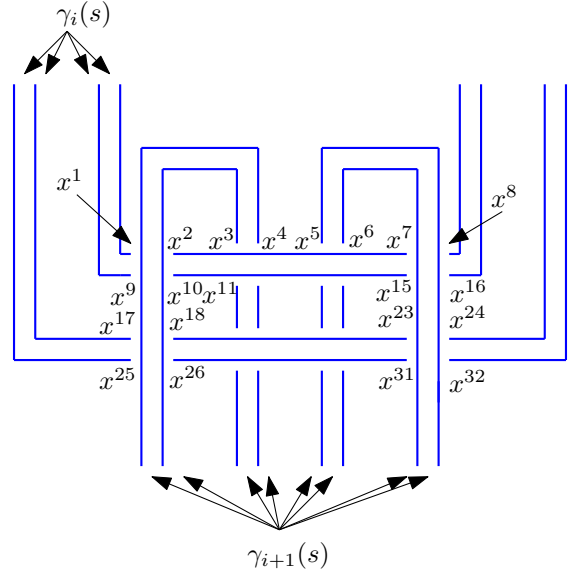}
	\end{center}
	\caption{Crossings between $\gamma_i(s)$ and $\gamma_{i+1}(s)$. For
          simplicity of the figure, we write $x^j$ instead of $x^j_i(s)$. For
          the same reason, we also leave out a few obvious labels.}
	\label{f:crossings_sentence_gadget}
\end{figure}

\paragraph{Merging gadget.} 
We will also need a merging gadget depicted on Figure~\ref{f:merging_gadget}.
We order all sentences as $s_1, \dots, s_m$. The merging gadget consists of
subgadgets $M(s_1), M(\widehat s_1), M(s_2), \dots, M(s_m), M(\widehat s_m)$
separated by the dotted lines in the figure. Each subgadget has several loose
ends. Two or four of them serve for connecting it to other subgadgets. The
remaining ones come in pairs and the number of pairs equals to $\ell(s) +
1$. (Recall that this is the number of components of the sentence gadget
$G(s)$.) These pairs of loose ends will be eventually connected to the loose
ends of $\gamma_0(s), \dots, \gamma_{\ell(s)}(s)$ in the top-down order.

\begin{figure}
	\begin{center}
	  \includegraphics{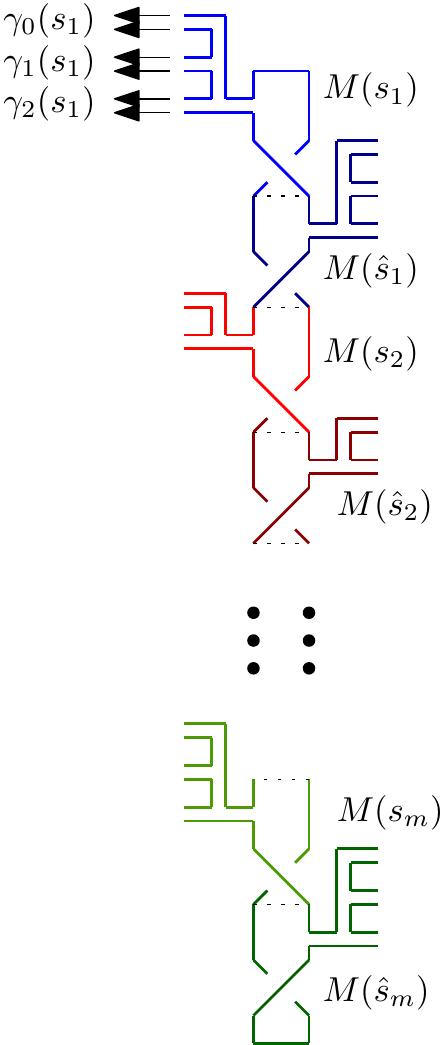}
	\end{center}
	\caption{The merging gadget.}
	\label{f:merging_gadget}
\end{figure}

\paragraph{Interconnecting the gadgets.}
Now, we describe how to interconnect the gadgets.

We place the sentence gadgets $G(s_1), \dots, G(s_m)$ to the left of the merging
gadget $M$ in a top-down ordering. Similarly, we place $G(\widehat s_1), \dots,
G(\widehat s_m)$ to the right, again in a top-down ordering. First, for any
$\bar s \in S \cup \widehat S$ and $i \in \{0, 2, 3, \dots, \ell(\bar s)\}$
(that is, $i \neq 1$), we connect $\gamma_i(\bar s)$ with the $(i+1)$th pair of
loose ends of $M(\bar s)$ by a pair of parallel arcs as directly as possible
without introducing new crossings; see
Figure~\ref{f:connecting_gamma_i}.

\begin{figure}
	\begin{center}
	  \includegraphics[page=3]{merging}
		\caption{The first step of interconnecting the gadgets:
                  Connecting $M(s)$ to the arcs $\gamma_i(s)$ for $i \neq
                  1$. The mirror symmetric part for $\widehat s$ is not
                  displayed.}
		\label{f:connecting_gamma_i}
	\end{center}
\end{figure}

Now we want to connect $\gamma_1(\bar s)$ to the second pair of loose ends of
$M(\bar s)$. For simplicity, we describe this in the case $\bar s = s \in
S$. The case $\bar s \in \widehat S$ is mirror symmetric. We pull a pair of
parallel arcs from $\gamma_1(s)$ with the aim to reach $M(s)$ while obeying the
following rules:
\begin{enumerate}[(R1)]
	\item We are not allowed to cross the merging gadget or the sentence gadgets
	(except the case described in the third rule below). We keep the newly
	introduced arcs on the left side from the merging gadget.
	\item We are allowed to cross other pairs of parallel arcs introduced
	previously (when connecting $\gamma_i(s')$ to $M(s')$ for some $i$ and
	$s'$). However, if we cross such a pair, we require that all four newly
	introduced crossings are resolved simultaneously (for example the new pair
	of parallel arcs  is always above the older one). We even allow a self
	crossing of the newly introduced parallel arcs (but we again require that
	four newly introduced crossings are resolved simultaneously).
	\item For every relation $R = (T, s')$ where $s \in T$, let $R = R_i(s')$. 
	We interlace the newly introduced parallel arcs with $\gamma_i(s')$ as in
	Figure~\ref{f:interlacing_parallel}. For further reference, two important
	crossings are denoted $z^1_i(s',s)$ and $z^2_i(s',s)$ as in the figure.
	\item The total number of crossing is of polynomial size in the size of our
	instance $(S, \RR)$.
\end{enumerate}

\begin{figure}
	\begin{center}
	  \includegraphics[page=4]{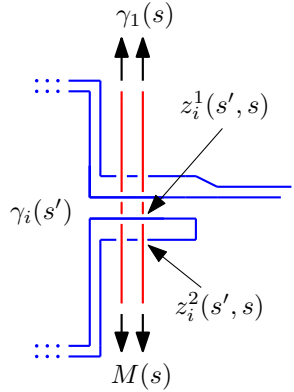}
		\caption{Interlacing the parallel arcs pulled out of
                  $\gamma_1(s)$ with $\gamma_j(s')$.}
		\label{f:interlacing_parallel}
	\end{center}
\end{figure}

As we have quite some freedom how to perform the construction obeying the rules
above, the resulting construction is not unique. An example how to get this
construction systematically is sketched on
Figure~\ref{f:connecting_gamma_1}: 
The newly introduced pair of arcs is pulled little bit to the right, then we
continue down towards the level of $G(s_m)$, then up towards the level of $G(s_1)$,
and then back down to the original position. On this way we make a detour
towards $\gamma_j(s')$ whenever we need to apply the rule (R3). Finally, we
connect the parallel arcs to $M(s)$ without any further detour. 

\begin{figure}
	\begin{center}
	  \includegraphics[page=2]{merging}
		\caption{The second step of interconnecting the gadgets:
                  Connecting $M(s)$ to the arc $\gamma_1(s)$. The mirror
                  symmetric part of $\widehat s$ is again not displayed.  The
                  arcs $\delta(s)$ are distinguished by colors.}
		\label{f:connecting_gamma_1}
	\end{center}
\end{figure}
For a later analysis, note that for any $s \in S$, the merging gadget $M(s)$,
the sentence gadget $G(s)$ and the newly introduced pairs of parallel arcs
connecting $\gamma_i(s)$ and $M(s)$ altogether form a single arc. We will
denote this arc $\delta(s)$. In particular each arc $\gamma_i(s)$ is a subarc
of $\delta(s)$. As usual, we also use the analogous notation for $\widehat s$.

Note that after all these steps, we obtain a diagram of a knot, which is our
desired diagram---it is the union of arcs $\delta(\bar s)$ for $\bar s \in S
\cup \widehat S$. We denote this diagram $D(S, \RR)$.
A little bit more of investigation reveals that $D(S, \RR)$ is
actually a diagram of an unknot but we do not really need this.

\subsection{A small set of axioms implies small defect}
\label{ss:hardness_easier}

Let us assume that $D(S, \RR)$ is a diagram coming from a preprocessed instance
$(S, \RR, k)$. The aim of this subsection is to show that if there exists a set
$A$ of axioms of size $k$ for $(S, \RR)$, then there exists an untangling of
$D(S, \RR)$ with defect $2k$.  In other words, we will show that if $(S, \RR,
k)$ is a (preprocessed) \sctt{Yes} instance of \sctt{Minimum Axiom Set}, then
$(D(S, \RR), 2k)$ is a \sctt{Yes} instance of \sctt{Unknotting via defect}.

We will provide an untangling $\D$ that uses only $\Imin$ and $\IImin$ moves. By
Lemma~\ref{l:defect}, the defect of $\D$ equals to the number of $\Imin$ moves.

We first describe the \Imin moves. For every $a \in A$, we unscrew the loop next
to $y(a)$ by a \Imin move (see Figure~\ref{f:sentence_gadget} for $y(a)$) as
well as we unscrew the loop next to $y(\widehat a)$. For this, we need $2k$
\Imin moves. It remains to show that the resulting diagram can be untangled
using $\IImin$ moves only.

Next, we use \IImin moves to untangle $\gamma_0(a)$ from the sentence gadget;
see Figure~\ref{f:untangle_gamma0}. After this step, the only remainder of
$\gamma_0(a)$ is a loop next to the other self-crossing of $\gamma_0(a)$.

\begin{figure}
	\begin{center}
		\includegraphics[page=4,scale=.95]{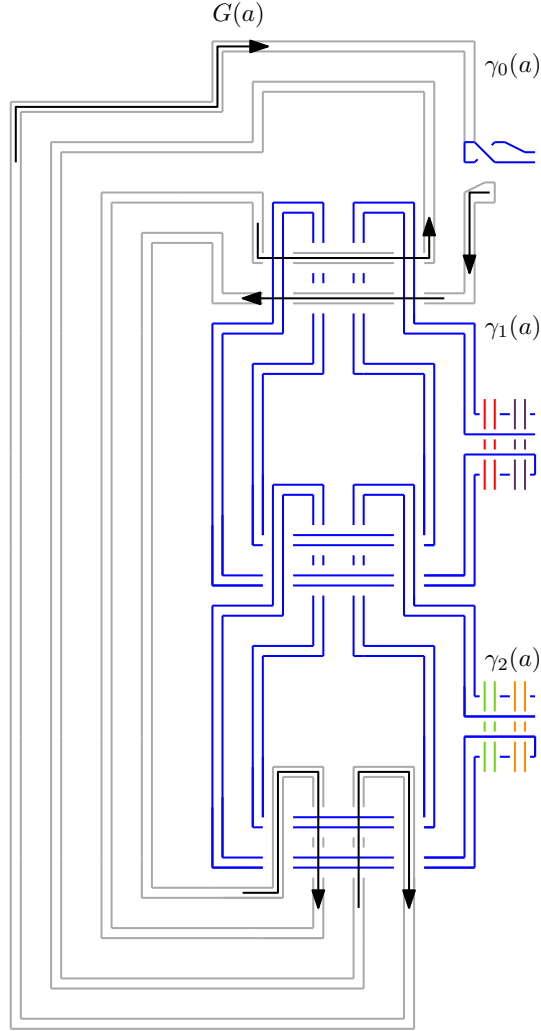}
		\caption{\IImin moves after unscrewing the loop at
                  $y(a)$. Compare with Figures~\ref{f:sentence_gadget}
                  and~\ref{f:interlacing_parallel}.}
		\label{f:untangle_gamma0}
	\end{center}
\end{figure}

After untangling $\gamma_0(a)$, we may untangle $\gamma_1(a)$ from $G(a)$ using
the \IImin moves only; see Figure~\ref{f:untangle_gamma1}. First, we untangle
$\gamma_1(a)$ form $\gamma_2(a)$ (if $\gamma_2(a)$ exists) as suggested in the
left figure. (This can be done so that we first untangle the inner `finger' and
then the outer one.) Then we untangle the head of $\gamma_1(a)$ form the other
$\delta(s)$ as suggested in the right figure.  (Again we first untangle the
inner finger, then the outer one.) Finally, we untangle the remainder all the
way to $M(a)$. Due to our conventions (R1) and (R2) when interconnecting
gadgets, it is possible to do this step via \IImin moves.

\begin{figure}
	\begin{center}
		\includegraphics[page=5]{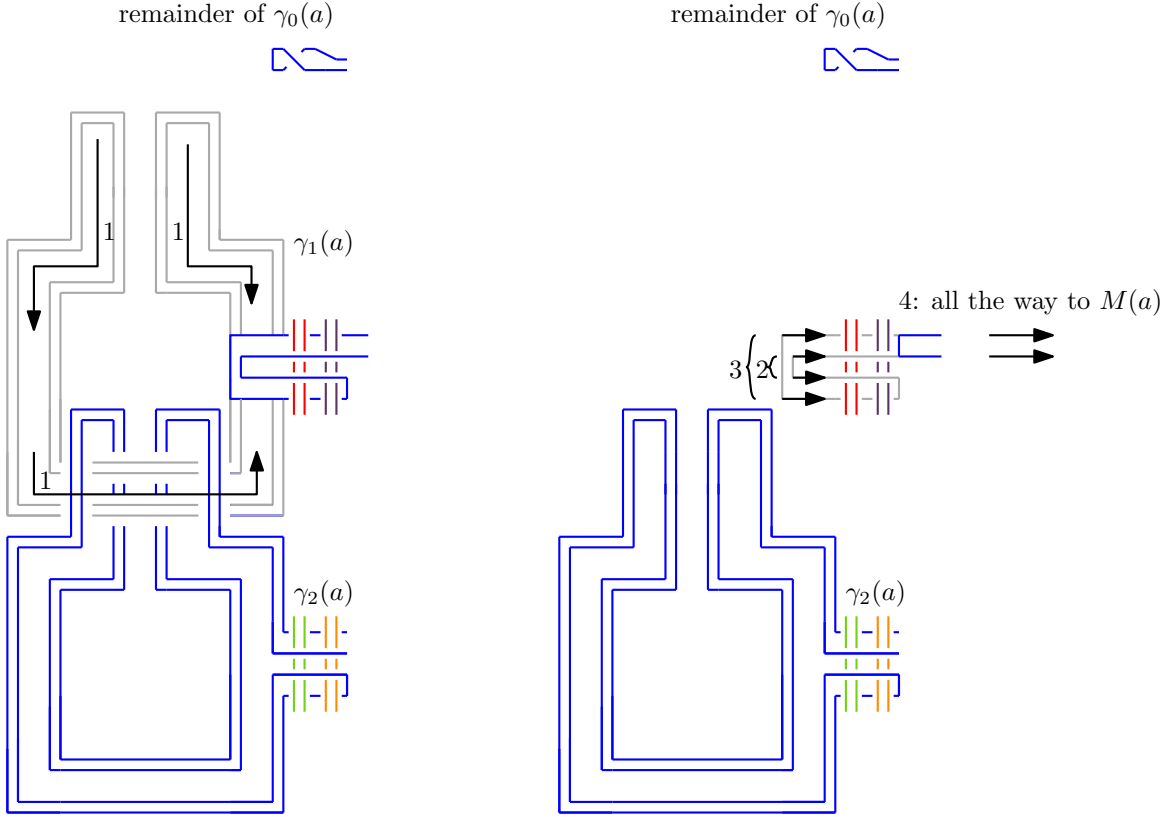}
		\caption{\IImin moves untangling $\gamma_1(a)$ from the gadget $G(a)$.}
		\label{f:untangle_gamma1}
	\end{center}
\end{figure}

After untangling $\gamma_1(a)$ from $G(a)$, we may continue and untangle
$\gamma_2(a)$ the same way, etc. until we untangle $\gamma_{\ell}(a)$. After
these steps the only remainder from $\delta(a)$ is the merging gadget $M(a)$, a
pair of parallel arcs towards the remainder of $\gamma_0(a)$, and this remainder
of $\gamma_0(a)$ itself (a loop next to an original self-crossing of
$\gamma_0(a)$).

Now, we repeatedly perform the following steps. Whenever $R = (T,s)$ is a
relation in $\RR$ such that each $t \in T$ is already almost untangled, then we
want to untangle $s$ from the diagram. We recall that $R = R_i(s)$ for some $i
\in \ell(s)$. Because each $t \in T$ is already almost untangled, the head of
$\gamma_i(s)$ is now empty (it is obvious for axioms in $T$ but we will also
keep the property that any newly almost untangled sentences leave all
heads). This allows us to untangle $\gamma_i(s)$ from $G(s)$ in completely
analogous way as we have untangled $\gamma_0(a)$ on
Figure~\ref{f:untangle_gamma0}. Next, we can untangle $\gamma_{i+1}(s), \dots,
\gamma_{\ell(s)}(s)$ using the same procedure as for $\gamma_1(a)$ on
Figure~\ref{f:untangle_gamma1}. Using the same procedure again, we can also
untangle $\gamma_0(s)$ with the exception that we stop at the two self crossings
of $\gamma_0(s)$; then we remove these self-crossings by a \IImin move, and then
we continue all the way towards $M(s)$. After untangling $\gamma_0(s)$, we
continue to $\gamma_1(s)$ and so on until we reach $\gamma_{i-1}(s)$. At this
moment we say that $s$ is \emph{almost untangled} from the diagram. (One could
argue that $s$ is fully untangled at this moment because even the self-crossings
of $\gamma_0(s)$ have been eliminated.  But we want to use just one notion for
well working recursion.)

We perform completely analogous steps for $\widehat R$ and $\widehat s$, getting
$\widehat s$ untangled as well.

Because $A$ was a set of axioms, every sentence becomes almost untangled after
finitely many steps. Now the diagram looks like as on
Figure~\ref{f:untangle_last_step}. The loop in position of $\gamma_0(s)$ and
$\gamma_0(\widehat s)$ appears if and only if $s \in A$. It is now easy to
finish the untangling using \IImin moves only. We untangle each $M(s)$ and
$M(\widehat s)$ one by one in top-down order. If $s$ is an axiom then the
remaining self-crossing of $\gamma_0(s)$ cancels out with the self-crossing of
$M(s)$, and the same is true for $\widehat s$. If $s$ is not an axiom, then the
self crossings of $M(s)$ and $M(\widehat s)$ cancel out together.

\begin{figure}
	\begin{center}
		\includegraphics[page=5]{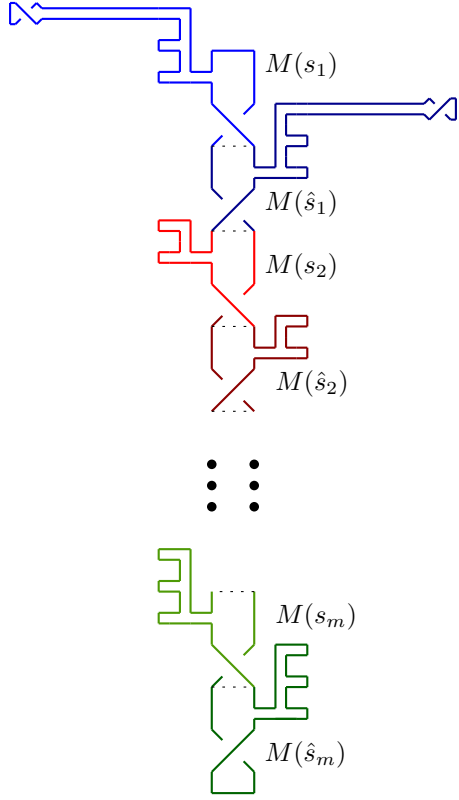}
		\caption{The diagram after untangling every sentence. In this
                  figure $s_1 \in A$ while $s_2, s_m \not\in A$.}
		\label{f:untangle_last_step}
	\end{center}
\end{figure}

\subsection{Big minimum axiom set implies big defect}
\label{ss:hardness_harder}

Let us assume that $D(S, \RR)$ is a diagram coming from a preprocessed instance
$(S, \RR, k)$. Our aim is to show that if the minimum axiom set for $(S, \RR)$
has size at least $k$, then any untangling of $D(S, \RR)$ has defect at least
$2k$.  This implies that if $(S, \RR, k-1)$ is a (preprocessed) \sctt{No}
instance of \sctt{Minimum Axiom Set}, then $(D(S, \RR), 2(k-1))$ is a \sctt{No}
instance of \sctt{Unknotting via defect}. (The shift of the parameter by $1$
appears only because of the convenience of the first formulation with which we
will work.)

By Observation~\ref{o:double} our aim can be equivalently rephrased as follows:
If the minimum axiom set for the double $(S \cup \widehat S, \RR \cup \widehat
\RR)$ has size at least $2k$, then any untangling of $D(S, \RR)$ has defect at
least $2k$.

One of the key tools will be the boosting lemma (Lemma~\ref{l:boosting}) used
for $(S \cup \widehat S, \RR \cup \widehat \RR)$. Given an untangling $\D$ of
$D(S, \RR)$ we will design a function $\mu \colon S \cup \widehat S \to \Z$
satisfying the assumptions of the lemma such that $\defe(\D) \geq \mu(S \cup
\widehat S)$; here we again extend the definition of $\mu$ to all subsets of $S
\cup \widehat S$ as in the statement of the lemma. Then, by the lemma we deduce
that $\defe(\D) \geq 2k$ as we need.

\paragraph{Weights on crossings.} In order to define $\mu$ we will need to
bound the defect via weights on crossings as
in~\cite{demesmay-rieck-sedgwick-tancer21}. However, we need a little bit less,
thus we will use a simpler weight function.

Let $\D$ be an untangling of a diagram $D$. Every crossing in $D$ has to be
eventually removed by a $\Imin$ move or a $\IImin$ move during the untangling
(of course, new crossings may appear but we are not interested in those for the
moment). A $\IImin$ move in the untangling $\D$ is \emph{economical} if it
removes two crossings of $D$ whereas it is \emph{wasteful} if at least on of the
crossings removed by this $\IImin$ move was not originally in $D$ but it has
been introduced during the untangling. We define the weight $w'(x)$ of a
crossing $x$ in $D$ by the following formula\footnote{We use $w'$ in order to
distinguish this weight function from our earlier weight function on the
Reidemeister moves.}
\[
w'(x) = \begin{cases}
0 & \hbox{ if $x$ is removed by an economical $\IImin$ move,}\\
1 & \hbox{ if $x$ is removed by a $\Imin$ move,}\\
2 & \hbox{ if $x$ is removed by a wasteful $\IImin$ move.}\\
\end{cases}
\]

The following lemma is a weaker form of Lemma~7
from~\cite{demesmay-rieck-sedgwick-tancer21}.

\begin{lemma}
	\label{l:def_weights}
	Let $\D$ be an untangling of a diagram $D$. Then
	\[
	\defe(\D) \geq \sum_x w'(x)
	\]
	where the sum is over all crossings $x$ in $D$.
\end{lemma}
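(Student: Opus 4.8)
The plan is to reuse the accounting already established in Lemma~\ref{l:defect}, which says that $\defe(\D)$ equals the sum of the weights $w(m)$ of the Reidemeister moves $m$ appearing in $\D$. Since every weight is non-negative, it suffices to exhibit, for each crossing $x$ of $D$, a distinct move $m_x$ of $\D$ with $w(m_x) \geq w'(x)$, because then
\[
\defe(\D) = \sum_{m \in \D} w(m) \geq \sum_{x} w(m_x) \geq \sum_x w'(x).
\]
The natural choice is to let $m_x$ be the move of $\D$ that removes the crossing $x$: since $D_\ell = U$ has no crossings, every crossing of $D$ is eventually removed, and it is removed by exactly one move of $\D$ (once removed it cannot reappear — or, if a crossing with the same label were to reappear via a \Ipl/\IIpl move, we adopt the convention that newly introduced crossings get fresh labels, so this does not happen). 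The only issue is that a single $\IImin$ move removes two crossings, so the map $x \mapsto m_x$ need not be injective; I will handle this in the case analysis below.

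First I would verify the pointwise inequality $w(m_x) \geq w'(x)$ by cases on how $x$ is removed. If $x$ is removed by a \Imin move $m$, then $w'(x) = 1 = w(m)$. If $x$ is removed by a \IImin move $m$, then $w(m) = 0$ and we need $w'(x) \leq 0$; but by definition $w'(x) \in \{0, 2\}$ in this case, with $w'(x) = 0$ when $m$ is economical and $w'(x) = 2$ when $m$ is wasteful, so the naive pointwise bound fails precisely for wasteful \IImin moves. The fix is to observe that a wasteful \IImin move $m$ removes one crossing $x \in D$ and one crossing $x' \notin D$; the crossing $x'$ was introduced by some earlier \Ipl or \IIpl move $m'$ in $\D$, which has weight $w(m') \geq 3 \geq 2 = w'(x)$. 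So I assign to such an $x$ the move $m_{x} := m'$, the move that \emph{created} its \IImin partner, rather than the move that removed $x$. (If several wasteful \IImin moves were to draw on the same creating move this reassignment could clash; but each \Ipl move creates exactly one crossing and each \IIpl move exactly two, and each created crossing is removed at most once, so the bookkeeping needs a small argument.)

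The main obstacle, therefore, is the injectivity bookkeeping: making sure the chosen representative moves $m_x$ are pairwise distinct so that $\sum_x w(m_x) \leq \sum_{m \in \D} w(m)$. A clean way to organize this is to argue directly on the move level rather than crossing by crossing. Partition the crossings of $D$ by their fate: let $a$ be the number removed by \Imin moves, $b$ the number removed by economical \IImin moves, and $c$ the number removed by wasteful \IImin moves. Then $\sum_x w'(x) = a + 2c$. On the move side, $\D$ contains at least $a$ distinct \Imin moves (each removing a distinct such crossing, since a \Imin move removes only one crossing), contributing at least $a$ to $\sum w(m)$; it contains at least $c$ distinct wasteful \IImin moves, each of which requires a distinct partner crossing $x' \notin D$, and those $x'$ are created by \Ipl/\IIpl moves — a counting argument (each \Ipl move supplies one new crossing, each \IIpl move two, and each supplied crossing is used at most once) shows $\D$ must contain enough \Ipl/\IIpl moves that their total weight is at least $\sum 3 \geq 2c$ when one is careful, but in fact it is cleaner to note each wasteful \IImin move's partner $x'$ is created by a move of weight $\geq 3$ and these creating moves, taken with multiplicity equal to the number of partners they supply ($1$ for \Ipl, $\leq 2$ for \IIpl), give total weight $\geq 3 \cdot(\text{number of supplied crossings used}) \geq 3c \geq 2c$, using weight-per-created-crossing at least $3/1$ for \Ipl and $3/1 > 4/2$ for \IIpl. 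Since the \Imin moves and the \Ipl/\IIpl creating moves are disjoint families of moves in $\D$, their weights add, giving $\defe(\D) \geq a + 2c = \sum_x w'(x)$, which completes the proof. I would write this last step carefully, as the per-created-crossing weight comparison for \IIpl moves is the only place where a slightly loose inequality ($4/2 < 3/1$, both $\geq 2$) is being invoked.
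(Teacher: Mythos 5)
Your plan is correct and essentially the same as the paper's: both arguments reduce to $\sum_m w(m) \ge \sum_x w'(x)$ via Lemma~\ref{l:defect}, charge each $\Imin$-removed crossing to its $\Imin$ move, and charge each wasteful-$\IImin$-removed crossing to the $\Ipl$/$\IIpl$ move that created its partner, using the observation that the weight per created crossing is at least $\min(3/1, 4/2) = 2$ (the paper phrases this as $w(m') \ge w'(x)$ for $\Ipl$ and $\tfrac12 w(m'') \ge w'(x)$ for $\IIpl$, your aggregate count is an equivalent bookkeeping). One small correction to the write-up: the intermediate chain ``total weight $\ge 3\cdot(\text{supplied crossings used}) \ge 3c$'' is false for $\IIpl$ moves (weight $4$ divided over $2$ created crossings gives $2$ per crossing, not $3$); the correct bound is $\ge 2\cdot(\text{supplied crossings used}) \ge 2c$, which you already acknowledge in your closing sentence and which suffices.
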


For completeness, we provide (almost complete) sketch of the lemma. For a
slightly different full proof we refer
to~\cite{demesmay-rieck-sedgwick-tancer21}.

\begin{proof}[Sketch.]
	Using Lemma~\ref{l:defect} it is sufficient to show that $\sum_m w(m)
        \geq \sum_x w'(x)$ where the first sum is over all Reidemeister moves
        $m$ in $\D$.  If $w'(x) = 1$, then $x$ is removed by a $\Imin$ move $m$,
        and thus we have $w(m) \geq w'(x)$ for this $m$ and $x$. If $w'(x) = 2$,
        then $x$ is removed by a $\IImin$ move $m$ simultaneously with some $y$
        which has been created during the untangling. This $y$ has been created
        either by a $\I^+$ move $m'$ or a $\II^+$ move $m''$. Then we use $w(m')
        \geq w'(x)$ or $\frac 12 w(m'') \geq w'(x)$. Summing all such
        inequalities, possibly with trivial inequalities $c w(m) \geq 0$ where
        $c \in \{1/2, 1\}$ yields the desired result.
\end{proof}

\paragraph{Important crossings} Now we recall that our diagram $D(S, \RR)$ contains several important
crossings, namely the crossings $x^j_i(\bar s)$ for $\bar s \in S \cup \widehat
S$, $i \in \{0, \dots, \ell(\bar s)\}$ and $j \in [32]$; the crossings $y(\bar
s)$ for $\bar s \in S \cup \widehat S$ (see the definition of the sentence
gadget and Figures~\ref{f:crossings_sentence_gadget} and~\ref{f:sentence_gadget}
for the first two types); and the crossings $z^1_i(\bar s, \bar s'), z^2_i(\bar
s, \bar s')$ for $\bar s, \bar s' \in S \cup \widehat S$ and $i \in \{0,\dots,
\ell(\bar s)\}$ such that $R_i(\bar s)$ is of form $(T, \bar s)$, where $\bar s'
\in T$ (see the rule (R3) when interconnecting the gadgets\footnote{Note that
$\bar s$ here corresponds to $s'$ in (R3).} and
Figure~\ref{f:interlacing_parallel}).

Now, for $\bar s \in S \cup \widehat S$ we set 
\begin{align*} 
&X_i(\bar s) := \{x^j_i(\bar s)\colon j \in [32]\} &\hbox{ for } i \in \{0,
\dots, \ell(s)\}; \\
&X(\bar s) := \bigcup_{i=0}^{\ell(\bar s)} X_i(\bar s); & \\
&Z_i(\bar s) :=
\{z_i^j(\bar s, \bar s')\colon j \in [2], R_i(\bar s) = (T, \bar s) \hbox{ with } \bar s' \in T 
\} &\hbox{ for }  i \in [\ell(s)]; \\
&Z(\bar s) := \bigcup_{i=1}^{\ell(\bar s)} Z_i(\bar s); \hbox{ and} & \\
& Q(\bar s) := X(\bar s) \cup \{y(\bar s)\} \cup Z(\bar s).& \\
\end{align*}
Here $Q(\bar s)$ is a set of important crossings assigned to $\bar s$. 
Note that if $\bar s \neq \bar s'$, then $Q(\bar s)$ and $Q(\bar s')$ are disjoint.

\paragraph{Definition of $\mu$.}
From now on let $\D$ be any untangling of $D(S, \RR)$. We define $\mu\colon S
\cup \bar S \to \Z$ as
\[
\mu(\bar s) := \sum_{x \in Q(\bar s)} w'(\bar s)
\]
where the weight $w'$ corresponds to the untangling $\D$, of course. The
function $\mu$ is nonnegative. In order to verify the assumptions of
Lemma~\ref{l:boosting}, we need to show the following.

\begin{proposition}
	\label{p:verify_boosting}
	Let $\bar U \subseteq S \cup \widehat S$ be such that $(S \cup \widehat
        S) \setminus \bar U$ is not a set of axioms. Then $\mu(\bar u) \geq 1$
        for some $\bar u \in \bar U$.
\end{proposition}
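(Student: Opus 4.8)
The plan is to prove the contrapositive. Assume $\mu(\bar u)=0$ for every $\bar u\in\bar U$ (which, since $\mu\ge 0$, is the negation of the conclusion), and show that $(S\cup\widehat S)\setminus\bar U$ is a set of axioms of the double, equivalently that every $\bar s\in\bar U$ lies in $c\bigl((S\cup\widehat S)\setminus\bar U\bigr)$ (the elements of $(S\cup\widehat S)\setminus\bar U$ being trivially consequences). The first observation is that, since $w'\ge 0$, the hypothesis forces every crossing of $Q(\bar u)$, for $\bar u\in\bar U$, to be removed in $\D$ by an economical $\IImin$ move; hence every crossing of $Q(\bar U):=\bigcup_{\bar u\in\bar U}Q(\bar u)$ is an original crossing of $D(S,\RR)$ which, once removed, never reappears (fresh-label convention), and none is removed by a $\Imin$ move. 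I would then fix, for each $\bar u\in\bar U$, the index $\theta(\bar u)$ of the move of $\D$ removing the last crossing of $Q(\bar u)$, and run a minimal-counterexample argument: assuming some element of $\bar U$ is not in $c\bigl((S\cup\widehat S)\setminus\bar U\bigr)$, pick such a $\bar u^*$ with $\theta(\bar u^*)$ minimum.

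The core step is to analyse how $\D$ untangles the sentence gadget $G(\bar u^*)$. Recall that $G(\bar u^*)$ is built from a Brunnian link on components $\gamma_0(\bar u^*),\dots,\gamma_\ell(\bar u^*)$ with $\ell=\ell(\bar u^*)\ge 1$ (preprocessing guarantees $\ell\ge 1$), with pairwise-interlacing crossings exactly $X(\bar u^*)$, self-crossing $y(\bar u^*)$ on $\gamma_0(\bar u^*)$, and, for each $i\in[\ell]$, the head of $\gamma_i(\bar u^*)$ interlaced with $\gamma_1(t)$ — at the crossings $z^1_i(\bar u^*,t),z^2_i(\bar u^*,t)\in Z(\bar u^*)$ — for every premise $t$ of the relation $R_i(\bar u^*)=(T_i,\bar u^*)$. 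Since $\mu(\bar u^*)=0$, all of $X(\bar u^*)$, $y(\bar u^*)$, $Z(\bar u^*)$ are removed economically and no $\Imin$ move of $\D$ touches $Q(\bar u^*)$. I would use the Brunnian property to argue that some component $\gamma_{i^*}(\bar u^*)$ must be split off from the rest first, and that it cannot be $\gamma_0(\bar u^*)$ — because $y(\bar u^*)$ cannot be removed economically while the remaining components of $G(\bar u^*)$ are still mutually linked, while a $\Imin$ removal of $y(\bar u^*)$ is excluded. Hence $i^*\ge 1$ and $R_{i^*}(\bar u^*)=(T_{i^*},\bar u^*)$ is a genuine relation. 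Splitting off $\gamma_{i^*}(\bar u^*)$ requires first clearing its head, which in particular removes all $z^j_{i^*}(\bar u^*,t)$ for $t\in T_{i^*}$; and undoing the interlacing of $\gamma_1(t)$ with that head requires $\gamma_1(t)$ to have been freed, which — $\gamma_1(t)$ being a component of the Brunnian gadget $G(t)$, so again by the Brunnian property — forces $G(t)$ to have been entirely disentangled beforehand, so that every crossing of $Q(t)$ is removed strictly before the last crossing of $Q(\bar u^*)$, i.e. $\theta(t)<\theta(\bar u^*)$ for each $t\in T_{i^*}$. Turning these "requires/forces" statements into a rigorous argument, using the explicit gadgets of Section~\ref{s:hardness} and the defining property of Brunnian links (a nontrivial link each component of which may be removed), is where the real work lies, and I expect it to be the main obstacle.

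Once this is in place, the conclusion is immediate. For each $t\in T_{i^*}$: either $t\notin\bar U$, whence $t\in(S\cup\widehat S)\setminus\bar U\subseteq c\bigl((S\cup\widehat S)\setminus\bar U\bigr)$; or $t\in\bar U$, whence $\mu(t)=0$ and $\theta(t)<\theta(\bar u^*)$, so minimality of $\bar u^*$ gives $t\in c\bigl((S\cup\widehat S)\setminus\bar U\bigr)$. Thus $T_{i^*}\subseteq c\bigl((S\cup\widehat S)\setminus\bar U\bigr)$, and applying the relation $R_{i^*}(\bar u^*)$ puts $\bar u^*$ there as well, contradicting the choice of $\bar u^*$. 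Hence $(S\cup\widehat S)\setminus\bar U$ is a set of axioms, which proves the contrapositive and therefore the proposition; the argument for hatted sentences is identical, $G(\widehat s)$ being the mirror image of $G(s)$.
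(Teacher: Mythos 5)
Your proposal takes a genuinely different route from the paper. You run a minimal-counterexample argument over the \emph{time of last removal} $\theta(\bar u)$ and try to deduce a precedence ordering from Brunnian-link intuition, so that the implication $R_{i^*}(\bar u^*)$ can be "fired" by induction. The paper instead looks at the \emph{first} crossing of $Q := \bigcup_{u\in U}Q(u)$ removed by an economical $\IImin$ move, applies the close-neighbor lemma (Lemma~\ref{l:first_removed}, a consequence of Lemma~9 of~\cite{demesmay-rieck-sedgwick-tancer21}) to produce a crossing $q'$ connected to $q$ by an over- and an under-arc missing $Q$, and then kills every candidate by an exhaustive local case analysis of the crossings $x^j_i$, $y$, $z^j_i$, after first performing a simplification reducing to the case where every relation $(T,u)$ with $u\in U$ satisfies $T\cap U\neq\emptyset$. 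Your approach, if it worked, would arguably be cleaner and would expose the inductive structure of the reduction more transparently; the paper's approach is "flat" and local.

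However, the gap you flag in your core step is not a detail that can be filled in with routine work --- it is the whole content of the proof. The phrases "some component $\gamma_{i^*}$ must be split off first", "$y(\bar u^*)$ cannot be removed economically while the remaining components are still mutually linked", and "undoing the interlacing \dots forces $G(t)$ to have been entirely disentangled beforehand" are statements about a \emph{forced temporal order} in an arbitrary untangling sequence allowing all five move types, and the Brunnian property of the underlying link template does not translate into such statements: after the arcs are cut, doubled, merged, and rejoined through $M$, the pieces $\gamma_i(\bar s)$ are sub-arcs of one knot, not link components, so "removing a component trivializes the rest" is literally inapplicable, and there is no off-the-shelf invariant that says a given crossing cannot be cancelled before some other crossing in the presence of $\III$, $\Ipl$, and $\IIpl$ moves. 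Proving precedence constraints of exactly this type is what the close-neighbor machinery of~\cite{demesmay-rieck-sedgwick-tancer21} was built for: Lemma~\ref{l:first_removed} converts "first economical removal inside $Q$" into a \emph{static} combinatorial condition on the original diagram (existence of an over/under pair of arcs avoiding $Q$), which can then be refuted by inspecting the gadget pictures. Without that lemma or an equivalent tool, your precedence claims --- including the specific assertion $\theta(t) < \theta(\bar u^*)$ for all $t\in T_{i^*}\cap\bar U$, which in fact asserts that \emph{all} of $Q(t)$ clears before the \emph{last} crossing of $Q(\bar u^*)$, a much stronger statement than anything the Brunnian picture suggests --- remain unjustified, and the proof is not complete.
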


Before we prove Proposition~\ref{p:verify_boosting}, let us explain how we
finish our hardness proof (which we already sketched earlier). Assuming
Proposition~\ref{p:verify_boosting}, we obtain $\mu(S \cup \widehat S) \geq 2k$
by Lemma~\ref{l:boosting}. On the other hand, as the sets $Q(\bar s)$ are
pairwise disjoint, we get that $\mu(S \cup \widehat S)$ is at most the sum of
the weights of all crossings in $D(S, \RR)$. Therefore $\defe(\D) \geq \mu(S
\cup \widehat S) \geq 2k$ by Lemma~\ref{l:def_weights}. This is exactly what we
need. Thus it remains to prove Propostion~\ref{p:verify_boosting} which still
requires a bit of effort.

\paragraph{Close neighbors.}
For a proof of Proposition~\ref{p:verify_boosting}, we will need a notion of
close neighbors from~\cite{demesmay-rieck-sedgwick-tancer21}. Let $Q$ be a set
of crossings in a diagram $D$. Let $x$ and $y$ be two crossings of $D$, not
necessarily in $Q$. We say that $x$ and $y$ are \emph{close neighbors} with
respect to $Q$ if $x$ and $y$ can be connected by two arcs $\alpha$ and $\beta$
satisfying the following conditions.
\begin{itemize}
	\item $\alpha$ enters both $x$ and $y$ as an overpass;
	\item $\beta$ enters both $x$ and $y$ as an underpass;
	\item $\alpha$ and $\beta$ may have self-crossings; however neither $x$ nor
	$y$ is in the interior of $\alpha$ or $\beta$;
	\item no point of $Q$ is in the interior of $\alpha$ or $\beta$.
\end{itemize}
Let us remark that what we call close neighbors is a special case of $c$-close
neighbors in~\cite{demesmay-rieck-sedgwick-tancer21} for $c = 0$.

The following lemma is a special case of Lemma~9
in~\cite{demesmay-rieck-sedgwick-tancer21}.

\begin{lemma}
	\label{l:first_removed}
	Let $Q$ be a set of crossings in a diagram $D$. Let us fix an untangling
        of $D$ and let $q$ be the first crossing in $Q$ removed by an economical
        \IImin move (in this untangling). (In case of a tie, the conclusion is
        valid for both first two crossings.)  Then there is a crossing $q'$ in
        $D$ such that $q$ and $q'$ are close neighbors with respect to $D$.
\end{lemma}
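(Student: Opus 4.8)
The plan is to go to the instant at which $q$ disappears, read off there the empty bigon that makes the economical \IImin move possible, and transport its two boundary edges back to the original diagram $D$.

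First I would fix the untangling $D=D_0\to D_1\to\cdots\to D_N=U$ in question and normalise it so that all the diagrams $D_i$ share a single domain $S^1$ and consecutive diagrams differ only inside the private ball of the move performed between them. With this normalisation, a crossing present in both $D_i$ and $D_j$ (for $i\le j$) is the image of the same pair of points of $S^1$ in both, and the overpass/underpass label of each of its two branches is the same in both. Let $D_t\to D_{t+1}$ be the economical \IImin move that removes $q$; by hypothesis it removes a second crossing $q'$, and since the move is economical both $q$ and $q'$ already occur in $D$. Feasibility of the move means that in $D_t$ there are two edges $e_1,e_2$ joining $q$ and $q'$ and bounding an empty bigon, with $e_1$ entering $q$ and $q'$ as an overpass and $e_2$ entering both as an underpass. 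Writing $e_1=D_t(A_1)$ and $e_2=D_t(A_2)$ for closed subarcs $A_1,A_2\subseteq S^1$, I would set $\alpha:=D(A_1)$ and $\beta:=D(A_2)$; these are arcs of $D$ joining $q$ and $q'$, and $q'$ is the close neighbour I claim.

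Next I would verify the ``easy'' parts of the definition. Because the label of each branch at $q$ (and at $q'$) does not change along $D_0,\dots,D_t$, and $A_1$ reaches $q$ (resp.\ $q'$) along the branch that is the overpass of $D_t$, the arc $\alpha$ enters both $q$ and $q'$ as an overpass, and symmetrically $\beta$ enters both as an underpass. Moreover the pre-images of $q$ and of $q'$ in $S^1$ are the same for $D$ and for $D_t$; since $\interior e_1$ contains no crossing of $D_t$, the interior of $A_1$ contains no pre-image of $q$ or $q'$, and hence neither $q$ nor $q'$ lies in the interior of $\alpha$, and likewise for $\beta$.

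It remains to show --- and this is the heart of the matter --- that the interiors of $\alpha$ and $\beta$ contain no crossing of $Q$. Suppose some $p\in Q$ had a pre-image in $\interior A_1$. Then $p$ is a crossing of $D$, and it cannot survive until $D_t$ (otherwise its unchanged pre-images would again force a crossing into $\interior e_1$), so $p$ is removed at some time $t'<t$. Since crossings are removed only by \Imin and \IImin moves, and $q$ is by assumption the first crossing of $Q$ removed by an economical \IImin move, $p$ must be removed at time $t'$ either by a \Imin move or by a wasteful \IImin move. In each case the loop (respectively the bigon) whose emptiness witnesses feasibility of the move at time $t'$ pulls back to a subarc of $A_1$, and confronting this with the emptiness of the bigon $e_1\cup e_2$ in $D_t$ is what yields a contradiction. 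I expect this last step to be the main obstacle: tracking how loops and bigons interact with the fixed arc $A_1$ across many intervening moves is exactly the technical content of \cite[Lemma~9]{demesmay-rieck-sedgwick-tancer21}, of which the present statement is the $c=0$ special case, and I would either reproduce that argument or cite it directly. Once the three conditions are established, $\alpha$ and $\beta$ witness that $q$ and $q'$ are close neighbours, as wanted.
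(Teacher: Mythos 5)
Your proposal takes essentially the same route as the paper's own sketch: identify the bigon of the economical $\IImin$ move, transport its two boundary arcs back to $D$ to obtain $\alpha$ and $\beta$, check the overpass/underpass and ``neither $q$ nor $q'$ in the interior'' conditions, and defer the hard verification (no crossing of $Q$ in $\interior\alpha$ or $\interior\beta$) to Lemma~9 of de~Mesmay--Rieck--Sedgwick--Tancer. Your fixed-preimage-arc normalisation (setting $\alpha:=D(A_1)$, $\beta:=D(A_2)$ for the same subintervals $A_1,A_2\subseteq S^1$ that parameterise the bigon in $D_t$) is a clean way of packaging what the paper calls ``pulling back'' the arcs through the history, and your treatment of the easy conditions is a bit more explicit than the paper's. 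One caution: the intermediate assertion that the feasibility loop/bigon of the move removing $p$ at time $t'$ ``pulls back to a subarc of $A_1$'' is not literally true---the second preimage of $p$ may well lie outside $A_1$---so the contradiction cannot be read off that directly; but you rightly flag this as the place where the full trace-back analysis from the cited paper is indispensable, which is exactly where the paper's own sketch also stops.
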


We again provide only a sketch while we refer
to~\cite{demesmay-rieck-sedgwick-tancer21} for a full proof.

\begin{proof}[Sketch]
	Because $q$ is removed by an economical \IImin move, the second crossing
        removed by this \IImin move also belongs to $D$. This is the desired
        $q'$. At the moment of removal of $q$ and $q'$, they from a `bigon'
        along which the \IImin move is performed. In particular they are
        connected by two arcs, one entering both as overpass and another one as
        underpass. Now we trace back the `history' of the untangling up to this
        point and each step we `pull back' this pair of arcs. By an analysis of
        the Reidemeister moves, it turns out that we will get $\alpha$ and
        $\beta$ required by the definition of close neighbors after returning to
        $D$ (while tracing back the history).
\end{proof}

\begin{proof}[Proof of~Proposition~\ref{p:verify_boosting}.]
	We know that $(S \cup \widehat S) \setminus \bar U$ is not a set of
        axioms for $(S \cup \widehat S, \RR \cup \widehat \RR)$. By the
        definition of the double of $(S, \RR)$, either $S \setminus \bar U$ is
        not a set of axioms for $(S, \RR)$, or $\widehat S \setminus \bar U$ is
        not a set of axioms for $(\widehat S, \widehat \RR)$.  Without loss of
        generality, we assume the former case, the argument for the latter case
        is symmetric. So we set $U := \bar U \cap S$, and we assume that $S
        \setminus U = S \setminus \bar U$ is not a set of axioms. We will find
        $u \in U$ with $\mu(u) \geq 1$ as required.
	
	First we perform the following simplification step. Assume that there is a
	relation $(T, s) \in \RR$ such that $s \in U$ and $T \subseteq S \setminus
	U$. Let $U' := U \setminus \{s\}$. We observe that $S \setminus U'$ is
	not a set of axioms. Indeed, $T \subseteq S \setminus U'$ because we assume
	that $s \not\in T$ by the third step of preprocessing. This means that $s$
	can be deduced from $S \setminus U'$ which shows that the closures $c(S
	\setminus U)$ and $c(S\setminus U')$ are equal. Thus $S\setminus U'$ is not a
	set of axioms by an assumption that $S \setminus U$ is not a set of axioms.
	Thus, we can use $U'$ instead of $U$ in our further considerations. Once we
	show that $\mu(u) \geq 1$ for some $u \in U'$, then $u \in U$ as well. We
	repeat this simplification step as long as possible and we will perform our
	considerations for the resulting set. With a slight abuse of notation, we
	will still denote this set by $U$. In other words, we assume, without loss of
	generality, that if $(T, s) \in \RR$ and $s \in U$, then $T \cap U \neq
	\emptyset$.

	For contradiction, let us assume that $\mu(u) = 0$ for every $u \in
        U$. This means that $w'(x) = 0$ for every $x \in Q$ where $Q :=
        \bigcup_{u \in U} Q(u)$. By the definition of $w'$, this means that
        every $x \in Q$ is removed by an economical \IImin move. Let $q$ be the
        first crossing in $I$ removed by an economical \IImin move (in case of a
        tie, we take $q$ to be any of the first two removed crossings). By
        Lemma~\ref{l:first_removed}, there is a crossing $q'$ of $D$ such that
        $q$ and $q'$ are close neighbors with respect to $Q$.

	Recall that $Q(u) = X(u) \cup \{y(u)\} \cup Z(u)$. We will distinguish
        between the cases whether $q \in X(u)$ for some $u$, or $q = y(u)$ for
        some $u$, or $q \in Z(u)$ for some $u$. In each case we will derive a
        contradiction with the fact that $q$ and $q'$ are close neighbors.
	
	\bigskip
	
	Case 1: $q \in X(u)$ for some $u \in U$. As $q \in X(u)$ we get that $q
        \in X_i(u)$ for some $i \in \{0, \dots, \ell(u)\}$, that is, $q =
        x_i^j(u)$ for some $j \in [32]$. For each of the $32$ options, we seek
        for a possible $q'$.  A necessary condition, via the definition of close
        neighbors, is that $q$ and $q'$ are connected by two arcs which do not
        contain any point of $Q$ in their interiors. In order to treat all $32$
        options efficiently, we form a graph $G_1$ whose set of vertices is $Q$
        and whose edges $xy$ where $x \in X_i(u)$ and $y \in Q$ correspond to
        arcs in $D$ which do not have any point of $Q$ in the interior; see
        Figure~\ref{f:graph_x}. We allow loops and multiple edges, therefore
        each $x \in X_i(u)$ has degree exactly $4$. An edge in the graph is
        drawn by a solid line if we guarantee that it corresponds to an arc
        which has no crossing in the interior. A dashed line is used if there
        may be some crossing (necessarily outside $Q$) in the interior of the
        corresponding arc. (In other words, we do not guarantee that this line
        is solid.)  It is slightly non-trivial to check what are the other
        vertices of the edges emanating right from the points $x^{24}_i(u)$ and
        $x^{32}_i(u)$. This we will discuss in detail a bit later.  All other
        edges depicted on Figure~\ref{f:graph_x} follow immediately from the
        construction. (For the edges emanating down from $x^{31}_i(u)$ and
        $x^{32}_i(u)$ we do not make any claim on the figure.)
	
	\begin{figure}
		\begin{center}
			\includegraphics{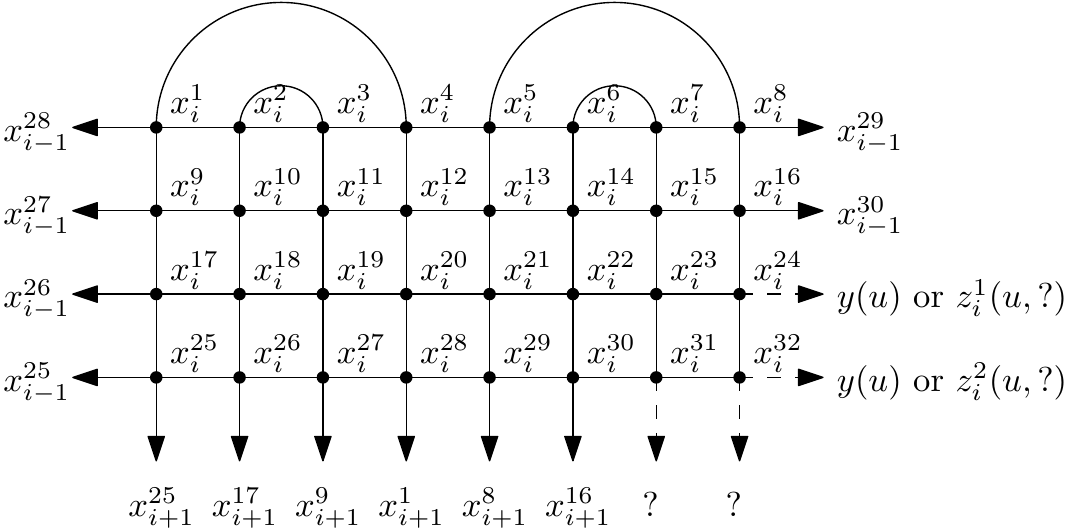}
			\caption{The graph $G_1$. In the figure, we write
                          $x_i^j$ instead of $x_i^j(u)$. Compare with
                          Figures~\ref{f:crossings_sentence_gadget}
                          and~\ref{f:sentence_gadget} for $x$-labels and
                          $y$-labels, and with
                          Figure~\ref{f:interlacing_parallel} for $z$-labels.}
			\label{f:graph_x}
		\end{center}
	\end{figure}

	A necessary (but not sufficient) condition for $q$ and $q'$ to be close
        neighbors is that $q \in X_i(u)$ and (i) either $q' \in Q$ is connected
        to $q$ by at least two edges in $G_1$, or (ii) $q'$ is a crossing of $D$
        not in $Q$ and it belongs to at least two arcs corresponding to the
        edges in the graph. We will rule out both subcases but first let us
        analyze what are the neighbors of $x^{24}_i(u)$ and $x^{32}_i(u)$ on the
        right, as promised.
	
	If $i = 0$, then both $x^{24}_i(u)$ and $x^{32}_i(u)$ head to $y(u)$
        without passing through any crossing. If $i \geq 1$, then we recall that
        $\gamma_i(u)$ corresponds to a relation $R_i(u) = (T, u) \in \RR$. Let
        us also recall that the initial simplification of $U$ implies that $T
        \cap U \neq \emptyset$.\footnote{This is the place where we really use
        our initial assumption that $X \setminus U$ is not an axiom set. After
        the simplification of $U$ it translates into this condition.} Using the
        rule (R3) when interconnecting gadgets, this gives that the right
        neighbor of $x^{24}_i(u)$ is the rightmost $z_i^1(u, s)$ among all $s
        \in T \cap U$ and then the right neighbor of $x^{32}_i(u)$ is $z_i^2(u,
        s)$. In particular there is no double edge between $x^{24}_i(u)$ and
        $x^{32}_i(u)$.

	Now, we rule out subcase (i), that is $q' \in Q$ and $q$ and $q'$ are
        connected by a double edge. We have double edges $x_i^2(u) x_i^3(u)$,
        $x_i^6(u) x_i^7(u)$, and possibly $x_i^{25}(u) x_{i+1}^{25}(u)$, if
        $\ell(u) = 1$, that is, $i -1 = i + 1 \pmod{\ell(u) + 1}$. However, none
        of these cases yields close neighbors as the overpasses/underpasses
        disagree with the definition of close neighbors. (Compare with
        Figure~\ref{f:crossings_sentence_gadget} and with a modification of
        Figure~\ref{f:sentence_gadget} for $\ell = 1$.) Seemingly, there could
        be also a double edge using $x^{32}_i(u)$ if the lower neighbor of
        $x^{32}_i(u)$ were $y(u)$ or $z^2_i(u,s)$. It is easy to rule out $y(u)$
        as a lower neighbor of $x^{32}_i(u)$ by checking the arcs emanating from
        $y(u)$. For ruling out $z^2_i(u,s)$, we observe that an arc connecting
        $x^{32}_i(u)$ and $z^2_i(u,s)$ emanating down from $x^{32}_i(u)$ has to
        contain, for example, $x^{31}_{i}(u)$ before it may reach $z^2_i(u,s)$
        (we simply pass through the diagram in this direction starting in
        $x^{32}_i(u)$ and check where we get).  Thus $z^2_i(u,s)$ cannot be a
        lower neighbor of $x^{32}_i(u)$.
	
	Next, we rule out subcase (ii), that is $q'$ is not in $Q$ and it
        belongs to at least two arcs corresponding to the edges in the graph. By
        checking the dashed edges, the only option to get two such arcs is if $q
        = x^{32}_i(u)$.  We rule out this case by a similar argument as
        before. Candidate $q'$ to the right of $x^{32}_i(u)$ not in $Q$ may only
        be crossings intersecting the head of $\gamma_i(u)$ (recall that head is
        drawn on Figure~\ref{f:sentence_gadget}).  However, when trying to reach
        any such crossing from $x^{32}_i(u)$ by an arc emanating down, we have
        to meet $x^{31}_{i}(u)$ first again. This finishes the second subcase
        and therefore Case 1 of our analysis.
	
	\bigskip
	Case 2: $q = y(u)$ for some $u \in U$. This is almost trivial case: It
        is straightforward to check that $y(u)$ has no close neighbor $q'$ (as
        the figure is only very local; see Figure~\ref{f:sentence_gadget}).
	
	\bigskip
	Case 3: $q \in Z(u)$ for some $u \in U$. This means that $q = z_i^j(s,
        s')$ for some $s, s' \in S$ such that $R_i(s) = (T, s)$ with $s' \in T$
        and for some $j \in [2]$.  We want to consider the subcases according to
        whether $q = z_i^1(s, s')$ or $q = z_i^2(s, s')$. For both case, we
        follow Figure~\ref{f:interlacing_parallel} when describing the
        directions.

	So, let us first assume that $q = z_i^1(s, s')$. Because we need two
        arcs for close neighobors, we realize that the only candidates for $q'$
        are those that either intersect the head of $\gamma_1(s)$, or possibly
        $x^{32}_i(s)$ (the arc heading up is the only arc leaving the head among
        the arcs emanating from $z_i^1(s, s')$). In addition, the arc emanating
        up heads towards $\gamma_1(s')$ as requested by the construction; see
        Figure~\ref{f:interlacing_parallel}. In particular, it meets
        $x^{31}_0(s')$ before it may meet any of the candidate $q'$. Thus the
        arc emanating up from $z_i^1(s, s')$ cannot be one of the witnesses that
        $q$ and $q'$ are close neighbors. Therefore, the only crossing $q'$
        connected by two arcs to $z_i^1(s, s')$ without points of $Q$ in their
        interiors may be $z_i^2(s, s')$.  But in this case, the
        overpasses/underpasses do not fit the definition.
	
	Now, let us consider the case $q = z_i^2(s,s')$. The reasoning is
        similar. The directions left, up and right reveal that any candidate
        close neighbor $q'$ is in the head or it is the crossing
        $x^{31}_i(s)$. The bottom direction can be ruled out: It heads towards
        $M(s')$ and then to $\gamma_0(s')$. Before reaching the head of
        $\gamma_1(s)$ it thus has to reach $x^{32}_{\ell(s')}(s')$. Using the
        remaining directions, the only possible candidate is $q' = z_i^1(s,s')$.
        However, we have ruled out the pair $\{z_i^1(s,s'),z_i^2(s,s')\}$ as
        close neighbors already in the previous case.
	
	This finishes the case analysis thereby the proof of
        Proposition~\ref{p:verify_boosting} and the hardness part of
        Theorem~\ref{t:main}.
\end{proof}

\bibliographystyle{alpha}
%% \bibliography{knots_parametrized}

\begin{thebibliography}{dBCvKO08}

\bibitem[AGK{\etalchar{+}}11]{AlonGKSY11}
Noga Alon, Gregory~Z. Gutin, Eun~Jung Kim, Stefan Szeider, and Anders Yeo.
\newblock Solving max-\emph{r}-sat above a tight lower bound.
\newblock {\em Algorithmica}, 61(3):638--655, 2011.

\bibitem[BLPS16]{burton-lewiner-paixao-spreer16}
B.~A. Burton, T.~Lewiner, J.~{Paix\~{a}o}, and J.~Spreer.
\newblock Parameterized complexity of discrete {M}orse theory.
\newblock {\em ACM Trans. Math. Software}, 42(1):Art. 6, 24, 2016.

\bibitem[BRS19]{bauer-rathod-spreer19}
U.~Bauer, A.~Rathod, and J.~Spreer.
\newblock Parametrized complexity of expansion height.
\newblock arXiv:1910.09228, 2019.

\bibitem[CFG{\etalchar{+}}14]{CrowstonFGJKRRTY14}
Robert Crowston, Michael~R. Fellows, Gregory~Z. Gutin, Mark Jones, Eun~Jung
  Kim, Fran Rosamond, Imre~Z. Ruzsa, St{\'{e}}phan Thomass{\'{e}}, and Anders
  Yeo.
\newblock Satisfying more than half of a system of linear equations over
  {GF(2):} {A} multivariate approach.
\newblock {\em J. Comput. Syst. Sci.}, 80(4):687--696, 2014.

\bibitem[CFK{\etalchar{+}}15]{cygan+15}
M.~Cygan, F.~V. Fomin, \L. Kowalik, D.~Lokshtanov, D.~Marx, Ma. Pilipczuk, Mi.
  Pilipczuk, and S.~Saurabh.
\newblock {\em Parameterized algorithms}.
\newblock Springer, Cham, 2015.

\bibitem[CJM{\etalchar{+}}13]{CrowstonJMPRS13}
Robert Crowston, Mark Jones, Gabriele Muciaccia, Geevarghese Philip, Ashutosh
  Rai, and Saket Saurabh.
\newblock Polynomial kernels for lambda-extendible properties parameterized
  above the poljak-turzik bound.
\newblock In Anil Seth and Nisheeth~K. Vishnoi, editors, {\em {IARCS} Annual
  Conference on Foundations of Software Technology and Theoretical Computer
  Science, {FSTTCS} 2013, December 12-14, 2013, Guwahati, India}, volume~24 of
  {\em LIPIcs}, pages 43--54. Schloss Dagstuhl - Leibniz-Zentrum f{\"{u}}r
  Informatik, 2013.

\bibitem[dBCvKO08]{dcel}
M.~de~Berg, O.~Cheong, M.~van Kreveld, and M.~Overmars.
\newblock {\em Computational Geometry: Algorithms and Applications}.
\newblock Springer, 2008.

\bibitem[DF13]{fund_param_complexity}
R.~G. Downey and M.~R. Fellows.
\newblock {\em Fundamentals of Parameterized Complexity}.
\newblock Springer, 2013.

\bibitem[dMRST21]{demesmay-rieck-sedgwick-tancer21}
A.~de~Mesmay, Y.~Rieck, E.~Sedgwick, and M.~Tancer.
\newblock The unbearable hardness of unknotting.
\newblock {\em Adv. Math.}, 381:107648, 36, 2021.

\bibitem[FG06]{flum-grohe06}
J.~Flum and M.~Grohe.
\newblock {\em Parameterized complexity theory}.
\newblock Texts in Theoretical Computer Science. An EATCS Series.
  Springer-Verlag, Berlin, 2006.

\bibitem[FGL{\etalchar{+}}20]{FominGLP0Z20}
Fedor~V. Fomin, Petr~A. Golovach, Daniel Lokshtanov, Fahad Panolan, Saket
  Saurabh, and Meirav Zehavi.
\newblock Parameterization above a multiplicative guarantee.
\newblock In Thomas Vidick, editor, {\em 11th Innovations in Theoretical
  Computer Science Conference, {ITCS} 2020, January 12-14, 2020, Seattle,
  Washington, {USA}}, volume 151 of {\em LIPIcs}, pages 39:1--39:13. Schloss
  Dagstuhl - Leibniz-Zentrum f{\"{u}}r Informatik, 2020.

\bibitem[GKLM11]{GutinKLM11}
Gregory~Z. Gutin, Eun~Jung Kim, Michael Lampis, and Valia Mitsou.
\newblock Vertex cover problem parameterized above and below tight bounds.
\newblock {\em Theory Comput. Syst.}, 48(2):402--410, 2011.

\bibitem[GP16]{GutinP16}
Gregory~Z. Gutin and Viresh Patel.
\newblock Parameterized traveling salesman problem: Beating the average.
\newblock {\em {SIAM} J. Discret. Math.}, 30(1):220--238, 2016.

\bibitem[GvIMY12]{GutinIMY12}
Gregory~Z. Gutin, Leo van Iersel, Matthias Mnich, and Anders Yeo.
\newblock Every ternary permutation constraint satisfaction problem
  parameterized above average has a kernel with a quadratic number of
  variables.
\newblock {\em J. Comput. Syst. Sci.}, 78(1):151--163, 2012.

\bibitem[Hak61]{haken61}
W.~Haken.
\newblock Theorie der {N}ormalfl\"{a}chen.
\newblock {\em Acta Math.}, 105:245--375, 1961.

\bibitem[HLP99]{hass-lagarias-pippenger99}
J.~Hass, J.~C. Lagarias, and N.~Pippenger.
\newblock The computational complexity of knot and link problems.
\newblock {\em J. ACM}, 46(2):185--211, 1999.

\bibitem[KL12]{kauffman12}
L.~H. Kauffman and S.~Lambropoulou.
\newblock Hard unknots and collapsing tangles.
\newblock In {\em Introductory lectures on knot theory}, volume~46 of {\em Ser.
  Knots Everything}, pages 187--247. World Sci. Publ., Hackensack, NJ, 2012.

\bibitem[KT21]{koenig-tsvietkova21}
D.~Koenig and A.~Tsvietkova.
\newblock N{P}--hard problems naturally arising in knot theory.
\newblock {\em Trans. Amer. Math. Soc. Ser. B}, 8:420--441, 2021.

\bibitem[Kup14]{kuperberg14}
G.~Kuperberg.
\newblock Knottedness is in {$\bf{NP}$}, modulo {GRH}.
\newblock {\em Adv. Math.}, 256:493--506, 2014.

\bibitem[Lac15]{lackenby15}
M.~Lackenby.
\newblock A polynomial upper bound on {R}eidemeister moves.
\newblock {\em Ann. of Math. (2)}, 182(2):491--564, 2015.

\bibitem[Lac16]{lackenby16arxiv}
M.~Lackenby.
\newblock The efficient certification of knottedness and {T}hurston norm.
\newblock arXiv:1604.00290, 2016.

\bibitem[Lac17]{lackenby17}
M.~Lackenby.
\newblock Elementary knot theory.
\newblock In {\em Lectures on geometry}, Clay Lect. Notes, pages 29--64. Oxford
  Univ. Press, Oxford, 2017.

\bibitem[Lac21]{lackenby21_announced}
M.~Lackenby.
\newblock Marc {L}ackenby announces a new unknot recognition algorithm that
  runs in quasi-polynomial time.
\newblock \url{https://www.maths.ox.ac.uk/node/38304}, 2021.

\bibitem[MR99]{MahajanR99}
Meena Mahajan and Venkatesh Raman.
\newblock Parameterizing above guaranteed values: Maxsat and maxcut.
\newblock {\em J. Algorithms}, 31(2):335--354, 1999.

\bibitem[MRS09]{MahajanRS09}
Meena Mahajan, Venkatesh Raman, and Somnath Sikdar.
\newblock Parameterizing above or below guaranteed values.
\newblock {\em J. Comput. Syst. Sci.}, 75(2):137--153, 2009.

\bibitem[MT01]{rotation}
B.~Mohar and C.~Thomassen.
\newblock {\em Graphs on surfaces}.
\newblock The Johns Hopkins University Press, 2001.

\end{thebibliography}
\newcommand{\etalchar}[1]{$^{#1}$}

\end{document}